%% file: cooperating.tex
\pdfoutput=1
\documentclass[12pt,reqno]{article}

\usepackage{mathpazo}
\usepackage{setspace}
\usepackage{amsmath,amsfonts,amssymb,amsthm,mathtools,bm}
\usepackage{graphicx,float}
\usepackage{placeins,needspace}
\usepackage{tikz}
\usetikzlibrary{arrows.meta}
\usepackage{geometry}
\usepackage{titlesec}
\usepackage{enumitem}
\usepackage{xcolor}
\usepackage{mdframed}
\usepackage{microtype}
\usepackage[bottom]{footmisc}
\usepackage[authoryear,round]{natbib}
\usepackage{hyperref}
\hypersetup{
  colorlinks=true,
  linkcolor=red!60!black,
  citecolor=blue!75!black,
  urlcolor=blue!60!black
}
\usepackage[nameinlink,noabbrev,sort,capitalise]{cleveref}
\usepackage{comment}

\crefformat{equation}{#2(#1)#3}
\crefrangeformat{equation}{#3(#1)#4--#5(#2)#6}
\crefmultiformat{equation}{#2(#1)#3}{ and #2(#1)#3}{, #2(#1)#3}{, and #2(#1)#3}

\newcommand{\de}{\mathop{}\!\mathrm{d}}
\newcommand{\E}{\mathbb E}

\newcommand{\ind}{\mathbf 1}
\newcommand{\MPE}{\ensuremath{\mathsf{MPE}}}
\newcommand{\SPE}{\ensuremath{\mathsf{SPE}}}

\theoremstyle{definition}

\newtheorem{definition}{Definition}
\newtheorem{lemma}{Lemma}
\newtheorem{proposition}{Proposition}
\newtheorem{theorem}{Theorem}

\newtheorem{example}{Example}

\titlespacing*{\paragraph}{0pt}{1.25ex plus 1ex minus .2ex}{0.5em}

\titleformat{\section}
  {\bfseries\centering\MakeUppercase}
  {\thesection}{0.5em}{}[]
\titleformat{\subsection}[runin]
  {\normalfont\bfseries}
  {\thesubsection}{0.5em}{\addperiod}[]
\newcommand{\addperiod}[1]{#1.}
\allowdisplaybreaks

\begin{document}

\title{\textbf{\textls[-50]{\textsc{Cooperating against Catastrophe}}}
\thanks{Fudenberg: MIT Department of Economics \url{drew.fudenberg@gmail.com}; Koh: Columbia University Department of Economics \url{andrew.koh@columbia.edu}. We thank Allan Dafoe, Daniel Kokotajlo, and Phil Trammel for helpful conversations and National Science Foundation  grant SES-2417162 for financial support.}
}
\author{\makebox[.25\linewidth]{Drew Fudenberg}\\ \normalsize{MIT} \\  
\and \makebox[.25\linewidth]{Andrew Koh}\\ \normalsize{Columbia}}
\date{\today}
\maketitle
\begin{abstract}
We study a continuous-time game in which two firms choose how quickly to advance their capabilities while an exogenous safety threshold advances at a fixed rate. When the capability frontier (max of firms' capabilities) exceeds the safety threshold, all firms are exposed to common disaster that arrives at a hazard rate increasing in the capability-safety difference. We characterize Markov perfect equilibria in terms of the disaster risk, how quickly safety advances, and flow payoffs: when they are low, only racing is an equilibrium; when they are intermediate, racing and pacing coexist; when they are high, only pacing survives. 
Across all subgame perfect equilibria, low risk induces perpetual racing while high risk rules it out and we bound the probability of disaster across all SPE. If internal capabilities are hidden with a fixed lag until deployment, % with a lag and are hidden until deployed, 
it is harder to sustain pacing which highlights the importance of transparency about internal capabilities. 

\end{abstract}

\thispagestyle{empty} \vspace{-1em}

\newpage \setcounter{page}{1}
\onehalfspacing

\section{Introduction}

In September 2026, Dario Amodei, CEO of Anthropic, advocated for \emph{pacing the frontier}: slowing the development of AI capabilities to give AI safety research and institutions time to catch up \citep{amodei2026pacing}. Other leaders of American AI firms, including Sam Altman, Demis Hassabis, and Elon Musk, soon voiced their support, and more than 1,300 computer scientists signed an open letter urging the US government to support an international effort to develop the tools needed to pace frontier AI development \citep{pacingthefrontier_2026}. At the same time, critics have warned that government intervention could primarily protect incumbent firms and their rents \citep{doctorow2026reverse,horowitz2026opensource}. One obstacle to pacing the frontier is that each firm may reason that if it slows down, a rival will press ahead.\footnote{Versions of this argument have been advanced by, among others, \cite{andreessen2023ai,aschenbrenner2024situational,cowen2026doomsday}.} Yet strategic rivalry can also produce mutual restraint when the parties share an interest in avoiding a bad outcome \citep{schelling1960strategy}. When can self-interested, profit-maximizing firms slow AI research to a safe pace without government intervention?

To answer this question, we analyze a continuous-time game where two firms each choose a development speed at every date, from zero (pausing) to one (racing at maximum speed).\footnote{We interpret players as firms that maximize profits, but we think our results might also apply to US-China AI competition.} Each firm's technological capability (e.g., that of its AI model) is the integral of past speeds;  the capability frontier is the higher of the two firms' capabilities. An exogenous safety threshold (e.g., the state of AI alignment research, or institutional preparedness) advances at a constant, exogenously fixed rate strictly between zero and one. The capability frontier's distance above the safety threshold  measures how far the leader has outpaced the safety benchmark.

When the frontier exceeds safety, firms are exposed to a positive risk of disaster that permanently terminates both firms' profits.\footnote{We assume that this is the only effect of disaster in the firms. If they internalize any social losses, it would be easier to support pacing the frontier and reduce the set of parameters where both firms always race.} The hazard governing disaster risk is a nondecreasing and concave function of the distance above safety, and might include a discrete jump at the boundary. Before disaster, each firm's flow profit increases with its own capability and decreases with its rival's. Each firm's expected payoff is the present discounted value of flow profits, weighted by the probability of survival up to each date. There is no direct cost of development.

Most of our paper focuses on the Markov Perfect Equilibria (MPE) of this game,\footnote{Here the state is the safety threshold and each firm's capability.} and their relation to  two  simple strategy profiles that capture the possibilities of unrestrained development and endogenous restraint. The \emph{racing profile} prescribes undertaking R\&D at the maximum speed at every state. The \emph{pacing profile} prescribes that a firm that is strictly behind (`follower') always races, and the firm ahead (`leader') follows an optimal stopping rule: race to a chosen distance above safety, and then pause until either safety or its rival catches up. When firms are tied, they both pause when their capabilities are not too high above the safety threshold, and race beyond it.

Our main result  partitions the parameter space into three regions according to the effective hazard relative to two cutoff values determined by the safety speed and profit parameters: 
    When the hazard rate is low,  racing profile is an MPE and no MPE outcome sustains permanent joint pacing from a tied position on safety. Under an additional discounting condition, perpetual joint racing is the unique subgame-perfect equilibrium (SPE) outcome from a safety tie.     At intermediate hazard rates, both the racing and pacing profiles are MPE.  Finally, with high hazard the pacing profile is an MPE and the racing profile is not.

We show faster safety progress, a higher hazard profile, and a lower discount rate each favor pacing over racing in equilibrium. Specifically, each of these changes weakly expands the parameter region in which the pacing profile is an MPE and weakly shrinks the region in which the racing profile is an MPE. Under a linear hazard profile, the equilibrium boundaries depend strictly and monotonically on both the safety speed and the discount rate, and they converge to strictly positive limits as the safety speed approaches the maximum. Consequently, safety progress arbitrarily close to the maximum development speed does not eliminate the racing equilibrium.% Numerical examples illustrate that the same level of racing-path disaster exposure can be consistent with different equilibrium outcomes, because faster safety improves the pacing alternative even when it leaves racing exposure unchanged. They also show that a steeper hazard profile can reduce equilibrium disaster probability if it induces the leader to accommodate rather than race, so the behavioral response can outweigh the direct increase in exposure.

Next, we analyze the outcomes of all subgame-perfect equilibria (SPE). At low risk and under an additional discounting condition, perpetual joint racing is the unique SPE outcome. We also bound disaster probability at every finite date across all SPE. To do so, we construct a recursive payoff guarantee: a firm can unilaterally pause and benefit from its rival's incentive to avoid disaster, then use the resulting guarantee to strengthen the same argument at earlier histories. Doing so gives a lower bound on how much cooperation can be sustained across all SPEs. We also show, for more general functional forms, how the threat of mutual racing in the future can supply incentives to pace the frontier in the present. 

Finally, to reflect the gap between AI firms' deployed (public) and internal (often private) capabilities \citep{metr2026frontier}, we suppose that internal capabilities become observable and generate profits and risk only after a deployment lag. We show how this lag shapes the sustainability of pacing: a longer lag allows a firm to build a larger lead before its rival can respond, making a deviation from pacing more attractive.

\textbf{Related work.} Our work connects to several strands of existing work.

\emph{R\&D races and preemption.}
A large literature studies dynamic competition to develop a new technology. The early models of \cite{loury1979market}, \citet{lee1980market}, and \citet{dasgupta1980uncertainty} relate market structure to the intensity and speed of R\&D through a single investment choice;   \citet{reinganum1982dynamic} introduces commitment to  time-varying open-loop  strategies. \citet{fudenberg1983preemption}, \citet{harris1985perfect} and \citet{ harris1987racing} study preemption, leapfrogging, and the discouragement of laggards in models where firms observe and respond to their rivals' progress.  In our model, unlike these, research 
generates a common disaster hazard.

\emph{AI races.}  \citet{Dafoe2018}, \citet{askell2019role},\citet{BostromDafoeFlynn2020}, and \citet{ZwetslootDafoe2019}  analyze AI competition without developing formal models.  \citet{ArmstrongBostromShulman2016}  and \citet{de2026agi} analyze how incentives to finish first
can induce developers to sacrifice safety in static games where firms make once-and-for-all decisions, and \citet{EmeryXuParkTrager2024} studies how incomplete information about heterogeneous capabilities affects investment in safety. \citet{hendrycks2025superintelligence} and \citet{aifutures2026ai2040} outline how US-China AI cooperation might be sustained via the threat of sabotaging or destroying a rival's data centers.  \citet{StaffordTragerDafoe2022} studies a dynamic race to implement a risky technology, where waiting permits further learning. \citet{fudenberg2026racing}
studies a  two-firm  R\&D race where firms observe and respond to information about their opponent's progress and stopping is irreversible; it focuses on the effects of the speed of monitoring and beliefs about opponent's rationality.

\emph{Continuous time games.} The earliest paper of continuous-time games where  each player controls a rate such as  \citet{Isaacs1965},  assumed that equilibrium strategies are continuous.  \citet{fudenberg1983capital} studied discontinuous investment strategies in a model of strategic deterrence using the concept of \emph{state space equilibrium}, which was subsequently generalized by   \citet{MaskinTirole2001} to games where the state is not a primitive of the model. 
Continuous-time R\&D races with a continuous effort control go back to
\citet{reinganum1982dynamic} and \citet{harris1987racing}.  The closest formal
relatives are the strategic experimentation games of \citet{BoltonHarris1999}
and \citet{KellerRadyCripps2005}: there too each player chooses an intensity
in $[0,1]$ at every date, the objective is linear in the intensity, and the
authors must say which feedback rules generate a well-defined path.  The
general difficulty of defining strategies on continuous-time histories is
discussed by \citet{SimonStinchcombe1989} and \citet{BerginMacLeod1993}.

\section{Model}

\subsection{Development, safety, and disaster risk}

Time is continuous, $t\geq0$.  There are two firms $i\in\{1,2\}$.  Firm $i$
chooses a development speed $a_{it}\in[0,1]$, which it can change at any
time.  Its total capability  is
$
A_{it}=A_{i0}+\int_0^t a_{is}\,\de s. \footnotemark$ We call speed one \emph{racing} and speed zero \emph{pausing}.  Pausing
leaves the firm's capability unchanged, and the firm can resume development at any time.\footnotetext{We will later impose measurability conditions that ensure this integral is well defined.}

\noindent
The \emph{capability frontier} $A_t$ is the higher of the two firms' capabilities: $A_t:=\max\{A_{1t},A_{2t}\}$. The \emph{safety threshold} $X_t$ advances at a constant speed $g\in(0,1)$:
$X_t=X_0+gt.$ 
The capability frontier's distance above safety is
$
D_t:=A_t-X_t.$ Thus when $D>0$ frontier capabilities are ahead of the safety threshold and vice versa for $D<0$. If one firm has a strict lead and develops at speed $a_L$, then
$\dot D=a_L-g$.

The disaster risk $\lambda$ depends on distance above safety, with
\begin{equation*}\tag{RISK}\label{eq:risk}
\begin{aligned}
\lambda(D)&=\begin{cases}
0,&D\leq0,\\
\lambda_0+\phi(D),&D>0,
\end{cases}
\end{aligned}
\end{equation*}
where $\phi$
is continuous on $[0,\infty)$, nondecreasing, concave, and twice
continuously differentiable on $(0,\infty)$, with $\phi(0)=0$. 
Since $\lambda(D) = 0$, pacing exactly along the safety threshold incurs no risk. 
Note that risk is generated by the capability frontier's position relative to safety (what \cite{TrammellAschenbrenner2026} call ``state risk'') so existing capabilities that remain strictly above safety continue to generate risk even during a pause.

Disaster permanently ends both firms' future profits.  The \emph{disaster time} is the random variable $\tau_D$ with survival function
$
\Pr(\tau_D > t) = \exp(- \int_0^t \lambda(D_s)\,\de s)$ 
which, for each capability path $(A_{1t},A_{2t})$, specifies a law for the disaster as a function of the distance of the capability frontier from safety.

\subsection{Payoffs}
To simplify the analysis, we use a functional form for the payoffs that implies the only source of non-stationarity in the model is the evolution of the safety threshold. Specifically, before disaster, firm $i$ receives flow payoff
$
\pi_i(A_i,A_j)=\exp\{\alpha A_i-\beta A_j\},$where $\alpha>\beta>0$ and there is no direct cost of development.
Own development raises profit at rate $\alpha$; rival development lowers
it at rate $\beta$.  Since $\alpha>\beta$, advancing both capabilities
equally raises both firms' profits.  Firm $i$'s expected payoff is
\begin{equation*}
V_i
=\E\!\left[
\int_0^{\tau_D}e^{-rt}
e^{\alpha A_{it}-\beta A_{jt}}\,\de t
\right]
=\int_0^\infty e^{-rt-\Lambda_t}\,
e^{\alpha A_{it}-\beta A_{jt}}\,\de t, where
\end{equation*}
the equality uses  the fact, explained
next, that the speed paths are deterministic while the game is ongoing.   We assume $r>\alpha-\beta$ throughout,
so a tied pair has finite discounted payoffs even if both firms race
forever without disaster.

\subsection{Histories, information, and strategies}
\label{sec:strategies}
The state $S_t$ records both firms' capabilities and the safety threshold, $S_t=(A_{1t},A_{2t},X_t)\in\mathcal S:=\mathbb R^2\times\mathbb R$.  Since $a_{it}\in[0,1]$, each firm's capability path is absolutely
continuous, nondecreasing, and $1$-Lipschitz. Conversely, every
absolutely continuous path with derivative in $[0,1]$ almost everywhere
can be written as
$
A_{it}=A_{i0}+\int_0^t a_{is}\,\de s
$ for a measurable speed path $a_i:[0,\infty)\to[0,1]$, unique almost
everywhere. 

For each initial state
$S=(A_1,A_2,X)\in\mathcal S$, let
\[
\Omega_S(S):=
\left\{
(S_t)_{t\geq0}\in\mathcal C([0,\infty),\mathcal S):
\begin{array}{l}
S_0=S,\\
X_t=X^0+gt\text{ for every }t\geq0,\\
A_1,A_2\text{ are absolutely continuous} \\ 
\dot A_{1t}, \dot A_{2t}\in[0,1]\text{ a.e.}
\end{array}
\right\}.
\]
Let
$
\Omega_S:=\bigcup_{S\in\mathcal S}\Omega_S(S)$ 
be the collection of feasible state paths from arbitrary initial
states, and let $\Omega:=\Omega_S\times[0,\infty]$ be the full outcome
space, whose second coordinate is the disaster time $\tau_D$. The information in the public history up to time $t$ is given by the filtration 
$
\mathcal F_t:=\sigma(S_{t'},\ \ind\{\tau_D\leq t'\}: t'\leq t)
$; we write
$\mathcal F:=(\mathcal F_t)_{t\geq0}$ for how this information evolves over time.

\begin{definition}[Strategies]\label{def:strategy}
A \emph{pure strategy} for firm $i$ is a map
$\sigma_i:[0,\infty)\times\Omega\to[0,1]$ 
that is progressively
measurable with respect to $\mathcal F$,\footnotemark{}
where $\sigma_i(t,\omega)$ is the prescribed speed at date $t$
given the public history in $\omega$ up to that date.
\end{definition}
\footnotetext{That is, for every $T<\infty$, its
restriction to $[0,T]\times\Omega$ is
$\mathcal B([0,T])\otimes\mathcal F_T$-measurable.}

Progressive measurability prevents a firm from using future information
and ensures measurability in time.   

A pure \emph{Markov strategy} depends on the history only through the
current state: it is a Borel measurable map $\sigma_i:\mathcal S\to[0,1]$
assigning a prescribed speed to each state $S=(A_1,A_2,X)$.\footnote{We are abusing notation here, because Markov strategies depend only on the most recent state.} 

\subsection{Paths and payoffs at discontinuities}
\label{sec:generalized-paths}

A Markov strategy prescribes a speed at each state, and the state evolves
according to these prescribed speeds.  The following example shows that 
strategies need not determine play.

\begin{example}[Non-unique paths]\label{ex:multiple-paths}
Suppose each firm races except when its capability equals safety,
where it paces:
\[
\sigma_i(S)=\begin{cases}
g,&A_i=X,\\
1,&A_i\neq X.
\end{cases}
\]
Start both capabilities at zero, with $X_0>0$. Both firms race until
the time $T$ where $A_{it}=T=X_t$. Because investment at precisely time t has no direct impact on the intergral defining the  evolution of states,  both the path $A_{it}=X_t=T+g(t-T)$ and the path $A_{it}=t$ are feasible paths.  \end{example}

We address this ambiguity by (i) specifying
which paths are admissible under a strategy profile and following deviations; and (ii) assigning payoffs over those paths.  We retain prescribed boundary
actions and evaluate disaster risk from the
realized frontier.  The verification below bounds every admissible
deviation path and exhibits one common path that attains both firms'
equilibrium payoffs.

For any pair of Markov policies $\sigma=(\sigma_1,\sigma_2)$, define
\begin{equation*}\tag{K}\label{eq:krasovskii}
K[\sigma](S)
=\bigcap_{\varepsilon>0}
\overline{\operatorname{co}}
\Big\{(\sigma_1(S'),\sigma_2(S')):S'\in\mathcal S,\;
                  \|S'-S\|<\varepsilon\Big\}.
\end{equation*}
The set $K[\sigma](S)\subseteq[0,1]^2$ contains the joint speeds allowed
at state $S$.  Here $\overline{\operatorname{co}}$ is the closed convex hull.  In words,
we allow averages of policy values arbitrarily close to the state,
including values prescribed exactly at that state.  This is the \emph{regularized map} generated by $\sigma$.\footnote{More precisely, it is the Krasovskii regularization. \citet{KrasovskiiSubbotin1988}; \citet[Sections 1--2]{BivasEtAl2020}
compares it with the  Filippov regularization. More recent work in differential games e.g., \citet{JaakkolaWagener2026}
 use a regularization based on Filippov solutions, and evaluate
profiles by their best admissible payoffs.}
For a single Markov policy $\sigma_j$, we write $K[\sigma_j](S)$ for
its allowed speeds under regularization. This set is defined by
\cref{eq:krasovskii} with the scalar policy $\sigma_j$ in place of the
pair, still evaluated at the full state $S$.

An \emph{admissible path}from $S$ is an absolutely continuous state path
starting at $S$ and satisfying, almost everywhere,
\begin{equation*}\tag{A}\label{eq:generalized-path}
(\dot A_{1t},\dot A_{2t})\in K[\sigma](S_t)
\quad \text{and} \quad \dot X_t=g.
\end{equation*}
Let $\mathcal{P}(\sigma; S)$ denote the set of paths starting at $S$ satisfying \cref{eq:generalized-path}; since this condition is a differential inclusion, $\mathcal{P}(\sigma; S)$ is  its solution set from $S$, with speeds given by the path derivatives.
The regularized map has nonempty compact convex values, is upper
semicontinuous, and is bounded, 
so Filippov's existence theorem for differential inclusions (Theorem 2.1 in \citet{Filippov1988})
%the Kakutani–Filippov Existence Theorem for differential inclusions (see e.g. Theorem 3 in  \citet{cellina2005view}) 
supplies a global path from every
state. 

\paragraph{Risk and payoffs.}
 Write $V_i[(S_t)_{t\geq0}]$ for the original discounted payoff evaluated
on a state path.  Define
$
V_i(\sigma,S)
=\sup_{(S_t)_{t\geq0}\in\mathcal P(\sigma;S)}V_i[(S_t)_{t\geq0}].$ 
Each firm values a profile by its best admissible continuation. Note that the suprema need not be attained.  Payoffs may be
$+\infty$ for arbitrary profiles, but our assumption that $r>\alpha-\beta$ ensures finiteness for tied paths. The equilibria constructed below have finite payoffs.

\paragraph{Deviations and equilibrium.}
A \emph{deviation} replaces firm $i$'s strategy by an alternative strategy
$\sigma_i'$, holding fixed the
opponent's strategy $\sigma_j$ fixed.
Write $(\sigma_i',\sigma_j)$ for the profile with only firm $i$'s
strategy replaced. When discussing continuation play, a \emph{history} is a path of states at which disaster has not occurred and we write
$h_t=(S_s)_{0\leq s\leq t}$ for such a history. It is \emph{feasible}
if its state path is absolutely continuous and,   $\dot A_{is}\in[0,1]$ for $i = 1,2$ and $\dot X_s=g$ almost everywhere on
$[0,t]$. 

Write $\mathcal P(\sigma;h_t)$ for the
admissible continuations after $h_t$. Define the continuation payoff from history $h_t$ under the strategy profile $\sigma$ by 
\[
V_i(\sigma,h_t)
:=\sup_{(S_s)\in\mathcal P(\sigma;h_t)}
\int_t^\infty e^{-r(s-t)-(\Lambda_s-\Lambda_t)}
\pi_i(S_s)\,\de s.
\]

\begin{definition}[Markov-perfect equilibrium]
\label{def:generalized-mpe}\label{def:mpe}
A \emph{Markov profile} $\sigma$ is a Markov-perfect equilibrium (\MPE) if,
after every feasible history $h_t$ starting from any initial state, neither firm can gain from a deviation: for each firm $i$ and
every alternative strategy $\sigma_i'$,
$V_i((\sigma_i',\sigma_j),h_t)\leq V_i(\sigma,h_t).$ An \emph{MPE outcome} is an admissible path under an MPE profile that
attains both firms' assigned continuation payoffs.
 \end{definition}

Although deviations can depend on the full history of past play, a Markov best response always exists under our path and payoff rules.

\begin{lemma}[Markov best responses]\label{lem:markov-best-response}
If the opponent uses a Markov strategy, then there is a Markov
strategy that is a best response.
Consequently, a Markov profile $\sigma$ is an \MPE{} if and only if,
for every state $S$, firm $i$, and alternative Markov strategy $\sigma_i'$,
$
V_i((\sigma_i',\sigma_j),S)\leq V_i(\sigma,S).$ 
\end{lemma}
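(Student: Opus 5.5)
Fix firm $j$'s Markov policy $\sigma_j$ and treat firm $i$'s problem as a deterministic optimal control problem on the state space $\mathcal S$. The state evolves by $\dot X=g$, $\dot A_i=a_i\in[0,1]$ chosen freely, and $\dot A_j\in K[\sigma_j](S)$. Payoffs are evaluated with the supremum over admissible paths, which is the same as letting firm $i$ select the most favorable selection of $j$'s regularized speeds. So firm $i$ effectively controls the pair $(a_i,a_j)\in[0,1]\times K[\sigma_j](S)$, with running reward $e^{-rt-\Lambda_t}\pi_i(S_t)$. The plan has three steps. First, define the value function $W_i(S):=\sup$ of firm $i$'s payoff over all (possibly history-dependent) strategies $\sigma_i'$ and all admissible continuations. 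Second, show that this supremum equals the value of the relaxed control problem in which firm $i$ picks any measurable $a_i(\cdot)$ and any selection of $K[\sigma_j]$. Third, show that the relaxed problem has an optimal feedback (Markov) control.

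For the first reduction, I would show two things. A history-dependent $\sigma_i'$ generates, along any admissible path, an absolutely continuous trajectory whose $A_i$-derivative lies in $[0,1]$ and whose $A_j$-derivative lies in $K[\sigma_j]$; hence its payoff is at most the relaxed value. Conversely, any relaxed trajectory can be reproduced by an open-loop (time-indexed) strategy for $i$, and that strategy is progressively measurable. The relevant property of the regularization is that, once $i$'s speed is a fixed measurable function of time, the set of admissible $A_j$-paths includes the chosen one. The payoff is then the supremum over this set, so the relaxed value is attained in the limit.

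Existence of an optimum in the relaxed problem comes from compactness. The set of relaxed trajectories from $S$ is compact in the topology of uniform convergence on compacts: the velocities are bounded, the velocity sets are convex, and the set-valued map is upper semicontinuous (Filippov's theorem). The payoff is upper semicontinuous on this set, because $\lambda$ is lower semicontinuous (it jumps up at $D=0$), which makes $e^{-\Lambda_t}$ upper semicontinuous. In addition, $r>\alpha-\beta$ together with the bound $A_i\le A_{i0}+t$ gives a dominating integrable tail, so the tails can be controlled uniformly. Optimal trajectories therefore exist from every state. The value $W_i$ depends only on the current state because the dynamics are autonomous apart from $X$, which is part of the state. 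Dynamic programming then shows that the tail of an optimal trajectory from $S$ is optimal from each later state.

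The main obstacle is turning these optimal trajectories into a single Borel Markov policy $\sigma_i^*(S)$ whose own regularized admissible paths achieve $W_i$. The difficulty is that the regularization of $\sigma_i^*$ may add paths, and a supremum over a larger set can only help $i$, so the payoff cannot fall below $W_i$. The only risk is exceeding it, which is impossible because every admissible path of $(\sigma_i^*,\sigma_j)$ is a relaxed trajectory, so its payoff is at most $W_i$. What remains is to make sure at least one admissible path attains $W_i$. I would first try a measurable-selection argument: use the Kuratowski--Ryll-Nardzewski theorem to pick, for each $S$, a speed $a_i$ equal to the initial right-derivative of some optimal trajectory. I would then show that the resulting feedback has a Krasovskii solution reproducing an optimal trajectory. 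If this fails, the fallback is the verification characterization of $W_i$ as a viscosity solution of the HJB equation, combined with a pointwise maximizing selection.

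Once a Markov best response exists, the stated equivalence follows immediately. If no Markov deviation improves on $\sigma$ from any state, then $V_i(\sigma,S)\ge W_i(S)\ge V_i((\sigma_i',\sigma_j),h_t)$ for any history-dependent $\sigma_i'$. This uses the fact that continuation from $h_t$ depends only on the current state $S_t$, since $\sigma_j$ is Markov.
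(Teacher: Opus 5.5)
\textbf{There is a genuine gap.} The step you call the main obstacle is the entire content of the lemma, and the route you sketch for it does not go through as stated.

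\emph{The selection step.} A Kuratowski--Ryll-Nardzewski selection of ``the initial right-derivative of some optimal trajectory'' is not well defined, for two reasons. Optimal controls are only measurable, so $A_i$ need not have a right-derivative at the initial date. You also have not shown that the optimal-trajectory correspondence is measurable. Even granting a selection, choices made state by state need not be consistent along any single optimal trajectory.

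More importantly, admissibility under $(\sigma_i^*,\sigma_j)$ requires $(\dot A_{it},\dot A_{jt})\in K[(\sigma_i^*,\sigma_j)](S_t)$. This joint Krasovskii set is generally strictly smaller than $K[\sigma_i^*](S_t)\times K[\sigma_j](S_t)$. The opponent speed an optimal relaxed path relies on may be approached only along states where your selection prescribes the wrong own speed. So nothing guarantees that any admissible path of the constructed profile reaches $W_i$. The HJB/viscosity fallback faces the same synthesis problem at points of nondifferentiability.

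\emph{The existence step.} Separately, your existence step is flawed: $r>\alpha-\beta$ does not give an integrable dominating tail against an arbitrary Markov opponent. If $\sigma_j$ pauses, firm $i$'s flow grows at rate $\alpha$. With bounded hazard the payoff is then infinite whenever $\alpha>r+\sup_{D>0}\lambda(D)$, which is compatible with $r>\alpha-\beta$; the paper notes that payoffs can be $+\infty$. Your compactness and upper-semicontinuity argument for attainment therefore breaks down.

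\textbf{The missing idea.} Because a profile is valued by its \emph{best} admissible path, no optimal trajectory or optimal feedback is needed at all. You already observed that regularization may add paths and that this can only help firm $i$. The paper pushes this to the extreme: it builds a maximally permissive Markov strategy $\sigma_i^*$ with $K[(\sigma_i^*,\sigma_j)](S)=[0,1]\times K[\sigma_j](S)$ outside a countable set $E$ of states.
\begin{itemize}
\item Enumerate a countable base of balls in state--speed space.
\item In each ball containing infinitely many points of the graph of $\sigma_j$, pick two fresh states; put one in $E_0$ and the other in $E_1$.
\item Set $\sigma_i^*=\mathbf{1}\{S\in E_1\}$, which is Borel because $E_1$ is countable.
\end{itemize}
Near every accumulation point $(S,b)$ of the graph, both own prescriptions $0$ and $1$ occur paired with opponent speeds converging to $b$, so convexification fills the rectangle. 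The exceptional set $E$ consists of the isolated graph points. A fixed dense set such as $\mathbb Q^3$ would not work, since $\sigma_j$ could be correlated with it; this is why the construction must adapt to the graph of $\sigma_j$.

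Because $g>0$, any path visits each state at most once, so it meets $E$ only on a null set of dates. Hence every admissible path under any history-dependent $(\sigma_i',\sigma_j)$ is admissible under $(\sigma_i^*,\sigma_j)$. Taking suprema gives $V_i((\sigma_i',\sigma_j),h_t)\le V_i((\sigma_i^*,\sigma_j),S_t)$ directly, whether or not values are finite or attained. The equivalence then follows exactly as in your last paragraph.
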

This lemma would be immediate in a discrete-time version of this game, but requires a proof here because of regularization; the proof is in \cref{sec:proof}

\section{Equilibrium}
\label{sec:equilibrium}\label{sec:main}

\subsection{Racing and pacing strategies}
\label{sec:tied-values}

We first normalize each firm's continuation payoff by its current flow
profit. At a tie $A_1=A_2=A$, both firms' current flow profits equal
$e^{(\alpha-\beta)A}$. Let $P(D)$ be normalized payoffs: payoff divided by initial flow profit
$e^{(\alpha-\beta)A}$, when 
firms start tied at distance $D$ above safety pause until safety catches up, and pace along the safety threshold forever, or start below safety and race until they reach the safety threshold and pace forever. 

At $D=0$ they pace, so 
$
P(0)=\frac1{r-g(\alpha-\beta)}$. For $D>0$, the wait lasts $D/g$. Profits and disaster risk continue
during the pause, so the payoff is
\begin{align*}
P(D)&=\int_0^{D/g}
 \exp\left\{-rt-\int_0^t\lambda(D-gs)\,\de s\right\}\de t
 +\exp\left\{-rD/g-\frac1g\int_0^D\lambda(s)\,\de s\right\}P(0).
\end{align*}
The first term gives discounted profits while waiting for safety to catch up. The second
is the discounted pacing value, weighted by survival until safety.

For $D<0$, the firms race without risk for $-D/(1-g)$ units of time
before pacing. Their normalized payoff is thus 
\[
P(D)=\frac{1-e^{(r-\alpha+\beta)D/(1-g)}}{r-\alpha+\beta}
+e^{(r-\alpha+\beta)D/(1-g)}P(0). 
\]

Let $R(D)$ be the normalized payoff when both firms race forever
from a tie at $D$: 
\[
R(D)=\int_0^\infty
\exp\left\{-(r-\alpha+\beta)t
-\int_0^t\lambda\bigl(D+(1-g)s\bigr)\,\de s\right\}\de t.
\]
Define the \emph{effective hazard} as the constant hazard that gives the payoff from racing starting at the safety boundary
\[
\lambda^{\mathrm{eff}}:=\frac1{R(0)}-(r-\alpha+\beta).
\]
It equals $\lambda_0$ when $\phi=0$ and rises when the risk profile $\phi$
rises pointwisely. It also depends on the safety speed $g$ and the
discounted growth rate $r-\alpha+\beta$. This statistic summarizes
the value of racing from safety.

\FloatBarrier
\begin{definition}[Racing]\label{def:racing}
The \emph{racing profile} $\sigma^R=(\sigma_1^R,\sigma_2^R)$ prescribes
$\sigma_i^R(S):=1$ for each firm $i$ at every state $S$.
\end{definition}

We construct the pacing profile when
$\lambda^{\mathrm{eff}}\geq(1-g)(\alpha-\beta)$, or equivalently when
$R(0)\leq P(0)$. Write
$G:=|A_1-A_2|$ for the \emph{capability gap} between firms. At a strict gap, the leader
has capability $A^L=A$ and the follower has capability $A^F=A-G$. 

Our pacing strategy is as follows: If one firm is strictly behind, the follower races until it catches up (if ever) while the leader chooses its best response $b(D,G)$ given that the follower is racing. If firms are tied, firms choose speeds $a^*(D)$ that prescribe racing whenever the capability frontier is below safety, pacing whenever the capability frontier coincides with safety, and either pausing or racing when the capability frontier is above safety.

\begin{definition}[Pacing]\label{def:pacing}\label{def:profiles}

 If $\lambda^{\mathrm{eff}}\geq(1-g)(\alpha-\beta)$, the \emph{pacing profile}
 $\sigma^P=(\sigma_1^P,\sigma_2^P)$ prescribes, for firm $i$ with
 opponent $j$,
 \[
 \sigma_i^P(S):=\begin{cases}
 a^*(D),&A_i=A_j,\\
 1,&A_i<A_j,\\
 b(D,G),&A_i>A_j.
 \end{cases}
 \]
 \end{definition}

 We now make the construction of $a^*(D)$ and $b(D,G)$ precise.
 If $P(0)=R(0)$, set $\bar D=0$. If $P(0)>R(0)$, let
$\bar D>0$ be the unique solution to $P(\bar D)=R(\bar D)$.
In either case, set  \[
a^*(D):=\begin{cases}
1,&D<0,\\
g,&D=0,\\
0,&0<D\leq\bar D,\\
1,&D>\bar D,
\end{cases}
\]
 and let $\overline{D}>0$ be the unique
distance above safety at which waiting and racing give the same payoff: When $P(0)>R(0)$, waiting gives the higher payoff for
$0<D<\bar D$, and racing gives the higher payoff for $D>\bar D$.
$
P(\bar D)=R(\bar D)
$.\footnote{This cutoff exists and is unique; see \cref{lem:scalar-control} in \cref{sec:proof}.}  If $P(0)\leq R(0)$, set $\bar D=0$.

When one firm is strictly behind, the pacing strategy prescribes that the follower races at max speed until it catches up (if ever). Given this, the leader best responds and chooses an optimal path $(a_t)_t$. Let $V_L(D,G)$ denote the leader's optimal discounted payoff divided by $e^{(\alpha - \beta)A}$, the flow profit each firm would earn if they were both at the current capability frontier. With this normalization, the leader's current flow profit is $e^{\beta G}$ and its Bellman equation is thus:

\begin{equation*}\tag{B}\label{general:eq:leader}
\begin{aligned}
\underbrace{rV_L}_{\text{discounting}}
={}&\underbrace{e^{\beta G}}_{\text{current profit}}
-\underbrace{\lambda(D)V_L}_{\text{disaster loss}}\\[1em]
&+\max_{a\in[0,1]}\Bigl\{
\underbrace{(a-g)\partial_D V_L}_{\substack{\text{change in distance}\\\text{above safety}}}
+\underbrace{(a-1)\partial_G V_L}_{\substack{\text{change in}\\\text{the lead}}}
+\underbrace{(\alpha-\beta)aV_L}_{\substack{\text{growth of}\\\text{frontier flow}}}
\Bigr\}.
\end{aligned}
\end{equation*}
The first term in the continuation value captures how current speeds translate into the distance above safety (since $\dot D = a - g$); the second term captures how current speeds change the lead over the leader's rival (since $\dot G = a - 1$); the last term in the Bellman equation accounts for growth of the
normalizing factor at rate $(\alpha-\beta)a$. Note also that since the pacing strategy pins down speeds after firms are tied i.e., $D = 0$, the boundary condition is $V_L(D,0)=\max\{P(D),R(D)\}$.

We show that facing this problem, the leader can attain its optimal payoffs by racing up to some fixed time, upon which it begins accommodation---pausing until the follower catches up and then both firms choose speeds $a^*$ thereafter. This time is optimally chosen and might be zero, infinite, or positive. 

\begin{figure}[H]
\centering
\includegraphics[width=\textwidth]{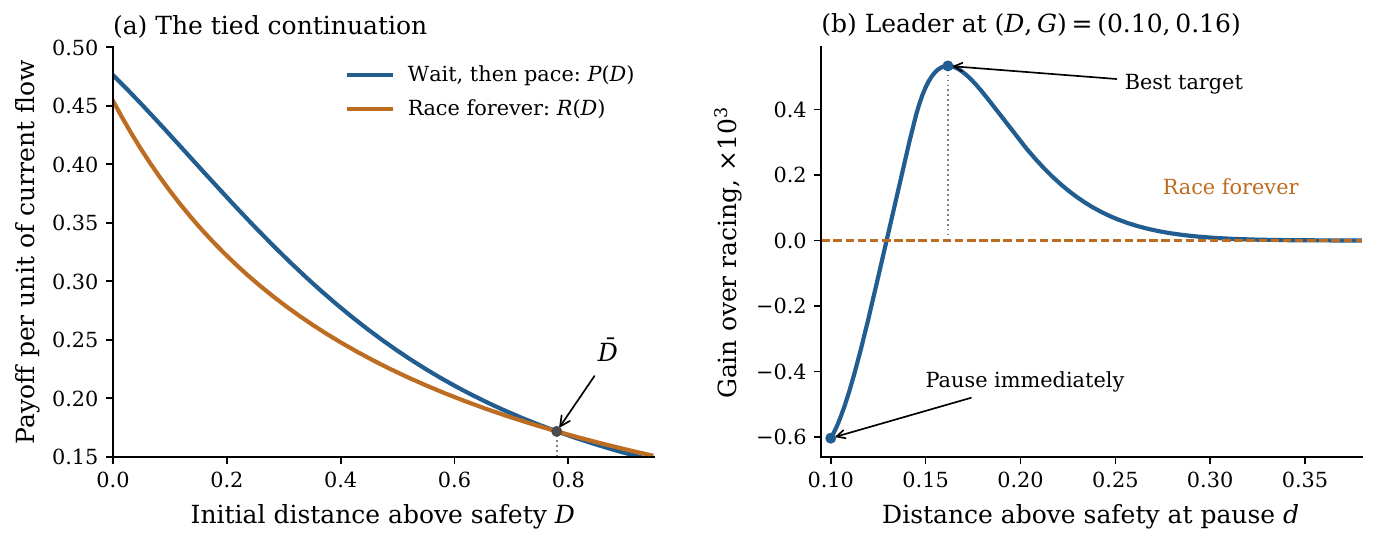}
\begin{singlespacing}
\caption{Continuation values under linear risk}
\label{fig:continuation-values}
\end{singlespacing}
\end{figure}

The leader's continuation is pinned down by the following stopping
formula that gives the optimal value for $0\leq D\leq\bar D$:
\[
\begin{aligned}
&e^{-\beta G}V_L(D,G)=R(D)
+\max\Biggl\{0,\\[.5em]
&\qquad\max_{D\leq d\leq\bar D}
\underbrace{\exp\!\left[-\frac{1}{1-g}
\int_D^d\bigl(r-\alpha+\beta+\lambda(s)\bigr)\,\de s\right]}_{
\substack{\text{discounting, growth, and survival}\\
\text{while racing to }d}}
\underbrace{\bigl[C(d,G)-R(d)\bigr]}_{
\substack{\text{gain from accommodating at }d\\
\text{instead of racing forever}}}
\Biggr\}.
\end{aligned}
\]
The choice $d=D$ is immediate accommodation, the outer zero corresponds to perpetual racing. If a finite $D$  attains the maximum on the RHS, we select the smallest maximizer. \cref{fig:continuation-values}(b) illustrates the payoff from accommodating when the distance above safety is equal to $d > D$.\footnote{Parameters: $\alpha=2$,
$\beta=1$, $r=3$, $g=0.9$, and $\lambda(D) = \kappa D$ where $\kappa=4.8$.}

The leader's best response to pacing is
\[
b(D,G):= \begin{cases}
1, & P(0)=R(0),\\
g, & P(0)>R(0),\ D=0,\ \text{and }
V_L(0,G)=e^{\beta G}C(0,G),\\
0, & P(0)>R(0),\ 0<D\leq\bar D,\ \text{and }
V_L(D,G)=e^{\beta G}C(D,G),\\
1, & \text{otherwise.}
\end{cases}
\]
The equality $V_L(D,G)=e^{\beta G}C(D,G)$ means that accommodating
immediately attains the leader's best payoff. In the first case, the leader is
on the safety threshold ($D=0$) and immediate accommodation is optimal,
so it paces at speed $g$. In the second case, the leader is above safety
but no farther than the tied cutoff ($0<D\leq\bar D$), and immediate
accommodation is optimal, so it pauses. In all other cases, the leader finds it strictly optimal to continue racing, and might do so forever, or accommodate at some future date.

\begin{figure}[H]
\centering
\includegraphics[width=0.9\textwidth]{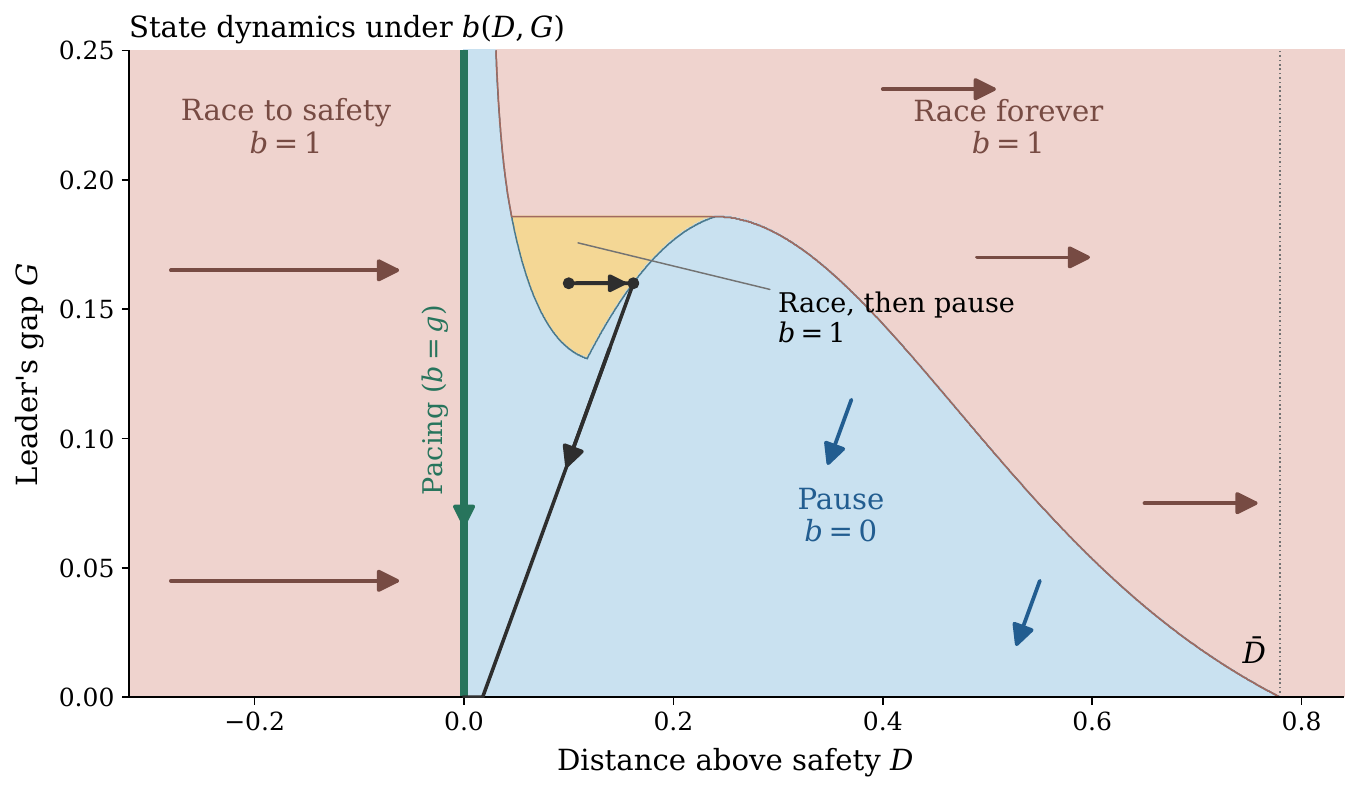}
\caption{Phase diagram induced by $\sigma^P$ under linear risk}
\label{fig:linear-state-strategies}
\end{figure}

\cref{fig:linear-state-strategies} is a phase diagram under the pacing strategy $\sigma^P$ where the $x$-axis is distance above safety $D$ while the $y$-axis is the gap $G$ between leader and follower. Racing moves the state right: $(\dot D,\dot G)=(1-g,0)$.
For $D<0$ it leads to safety; in the red region at $D>0$ it continues
forever. Pausing in blue moves the state down and left:
$(\dot D,\dot G)=(-g,-1)$. On the green line the leader paces and
the gap closes vertically. 

\subsection{Markov Perfect Equilibria}\label{sec:thresholds}

\begin{theorem}[Pacing and racing]\label{thm:main}\label{general:thm:mpe} \phantom{}
\begin{enumerate}[label=(\alph*),leftmargin=2em]
\item \emph{Racing only.} If $\lambda^{\mathrm{eff}}<(1-g)(\alpha-\beta)$, $\sigma^R$
is an \MPE{}, and no \MPE{} outcome paces forever from a tie on safety.
\item \emph{Racing and pacing.} If $(1-g)(\alpha-\beta)\leq\lambda^{\mathrm{eff}}\leq(1-g)\alpha$,
$\sigma^R$ and $\sigma^P$ are both \MPE{}.
\item \emph{Pacing only.} If $\lambda^{\mathrm{eff}}>(1-g)\alpha$, $\sigma^P$
is an \MPE{}, and $\sigma^R$ is not an \MPE{}.
\end{enumerate}
\end{theorem}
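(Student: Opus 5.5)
By \cref{lem:markov-best-response}, it suffices to check, for each candidate Markov profile and each state $S$, that no Markov deviation improves a firm's payoff. Each such check is a single-agent control problem against a fixed Markov opponent. I would verify optimality by a verification argument for the Bellman equation \cref{general:eq:leader} and its analogues, using the normalization by $e^{(\alpha-\beta)A}$ and the stationarity in $(D,G)$.

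\textbf{Racing profile.} Suppose the rival always races, and firm $i$ is at gap $G$, with $G$ positive if $i$ leads and negative if it lags. Firm $i$'s normalized flow is $e^{\alpha A_i-\beta A_j}$. A deviating firm that is behind cannot affect the frontier or the hazard while it remains behind, so racing is dominant there: it raises own profit at no cost in risk. The only nontrivial deviation is therefore a firm at or ahead of the frontier slowing down. If $i$ slows at a tie, the rival becomes the leader and the frontier keeps moving at speed one, so the hazard is unchanged. Slowing thus only forgoes the $\alpha$-growth in own capability while the rival keeps racing. Formally, the HJB coefficient on $a_i$ is $\alpha V+\partial_{A_i}$-terms, which is positive whenever $i$ is weakly behind.

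For a leader, slowing to speed $a$ reduces the hazard. Comparing the marginal profit gain $\alpha$ against the marginal hazard effect, I would show that racing is optimal iff $\alpha R(0)$-type terms dominate. After integration along the path this condition should reduce exactly to $\lambda^{\mathrm{eff}}\le(1-g)\alpha$. The key computation is that at $D=0$ a tied firm's best deviation is to pace at $g$ while the rival races. The rival then leads and the frontier still races, so the deviation gains nothing. Pausing, however, is only useful if it moves the frontier, which it cannot do when the rival races. This suggests that racing might be an MPE for all $\lambda^{\mathrm{eff}}$. The restriction in (c) must then come from the \emph{leader} at a strict gap: by pausing, the leader lets the follower catch up, and the follower then controls the frontier. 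So I expect the deviation that breaks racing in (c) to be a leader ``pausing'' at $D>0$. I would compute its marginal value at $D=0^+$ with $G=0^+$: pacing at $g$ instead of $1$ loses $(1-g)\alpha$ in growth of own profit per unit of value, and gains the hazard reduction, which in normalized terms is $\lambda^{\mathrm{eff}}$ after using $R(0)$. This yields the cutoff $(1-g)\alpha$ for (a), (b) versus (c).

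\textbf{Pacing profile.} For $\sigma^P$ I would invoke \cref{lem:scalar-control} for the existence and uniqueness of $\bar D$, together with the stopping formula for $V_L$. There are three checks.
\begin{enumerate}[label=(\roman*)]
\item The follower racing is optimal, because it does not affect the hazard until it ties.
\item The leader's $b(D,G)$ solves \cref{general:eq:leader}. This is a verification that $V_L$ is a viscosity (or piecewise $C^1$) solution, with the Hamiltonian maximized at the prescribed speeds on each region of \cref{fig:linear-state-strategies}.
\item At a tie, $a^*$ is a best response. A firm that races out of the pause becomes the leader facing a racing follower, so its payoff is $V_L(D,0^+)$. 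The needed inequality is $V_L(D,0^+)\le\max\{P(D),R(D)\}$, which requires $\lambda^{\mathrm{eff}}\ge(1-g)(\alpha-\beta)$. This inequality is exactly $P(0)\ge R(0)$.
\end{enumerate}

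\textbf{Nonexistence claims.} For (a), with $R(0)>P(0)$, any tied firm on safety in a putative perpetually pacing MPE outcome can race. The rival's response yields at least the racing-from-tie value, which exceeds $P(0)$, so pacing forever cannot be an outcome. For (c), the pacing deviation described above strictly improves on racing, so $\sigma^R$ is not an MPE.

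The main obstacle is step (ii): verifying that the stopping-form $V_L$ satisfies the HJB, including at the kinks at $D=\bar D$ and $D=0$, where regularization via $K[\sigma]$ admits extra paths. I would handle this by bounding every admissible path's payoff by $V_L$, using a comparison argument along the differential inclusion, and exhibiting the prescribed path as one that attains the bound.
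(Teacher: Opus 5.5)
\textbf{Follower step.} Your overall architecture matches the paper's: reduce to single-agent control against a Markov rival, verify a candidate value, use the racing guarantee for the nonexistence claim in (a), and break racing in (c) with a leader slowing down. But the pacing verification has a real hole in step (i). You argue the follower should race because it does not affect the hazard until it ties. That is false under $\sigma^P$. The leader's prescription $b(D,G)$ depends on the gap, so a follower that slows down enlarges $G$. That changes whether and when the leader stops racing and accommodates, and the leader's choice sets the frontier and hence the follower's hazard. Nothing in your plan rules out a follower slowing down to induce earlier accommodation. The missing idea is that the leader's stopping region is monotone in the gap. With $H(d,G)=[C(d,G)-R(d)]/M(d)$ as in the stopping formula, one shows $H_G<0$ and increasing differences $H_{DG}>0$. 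Hence a larger gap never turns a racing state into a stopping state. Only then can you show that a deviating follower faces a rival capability at least the prescribed frontier (weakly higher hazard, weakly lower flow) until it reaches that frontier, after which a tie bound takes over. This is Steps 2 and 5 of \cref{lem:pacing-verification}.

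\textbf{Tie step and racing step.} Your tie check (iii) misdescribes the deviation. A firm that races out of a tie never becomes a strict leader: once $A_i>A_j$, the rival is strictly behind on an open set of states, its regularized speed set is $\{1\}$, and $(A_i-A_j)^+$ stays at zero. So $V_L(D,0^+)$ is not a deviation payoff. The paper instead bounds every continuation from a tie by the deviator's own auxiliary problem with $(\eta,\rho)=(\alpha-\beta,r)$, since its flow is at most $e^{(\alpha-\beta)A_i}$ and its hazard at least $\lambda(A_i-X)$. This covers slowing down, which you do not treat and which hands the rival a strict lead where it plays $b$. The bound is $\max\{P(D),R(D)\}$ (\cref{lem:tied-bound}); $a^*$ attains it, and $P(0)\geq R(0)$ is needed precisely at $D=0$.

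In the racing part, a marginal comparison at safety does not give global optimality of racing for a leader at every $D>0$. You need to bound the leader's payoff against a racing rival by the auxiliary problem with $(\eta,\rho)=(\alpha,r+\beta)$; this holds even if the rival overtakes, since hazard then only rises. Then use log-convexity of $R$, from concavity of $\phi$, to propagate the sign of $R'+\alpha R$ from $0^+$ to all $D>0$.

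\textbf{Deviation for (c).} It must be taken exactly at $D=0$ with $G>0$. Pace at $g$ for $h<G/(1-g)$ and resume racing \emph{before} catch-up. The gain is $(1-e^{-Bh})[1/B-R(0)]$ with $B=r+\beta-g\alpha$, positive iff $\lambda^{\mathrm{eff}}>(1-g)\alpha$. At $D=0^+$ with $\lambda_0>0$, pacing still incurs hazard $\lambda_0$, and your comparison would give the cutoff $(1-g)\alpha+\lambda_0$ instead. The gain comes from a zero-hazard interval during which the leader still holds the frontier on safety, not from the follower taking over the frontier.
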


\begin{figure}[!ht]
\centering
\includegraphics[width=.8\textwidth]{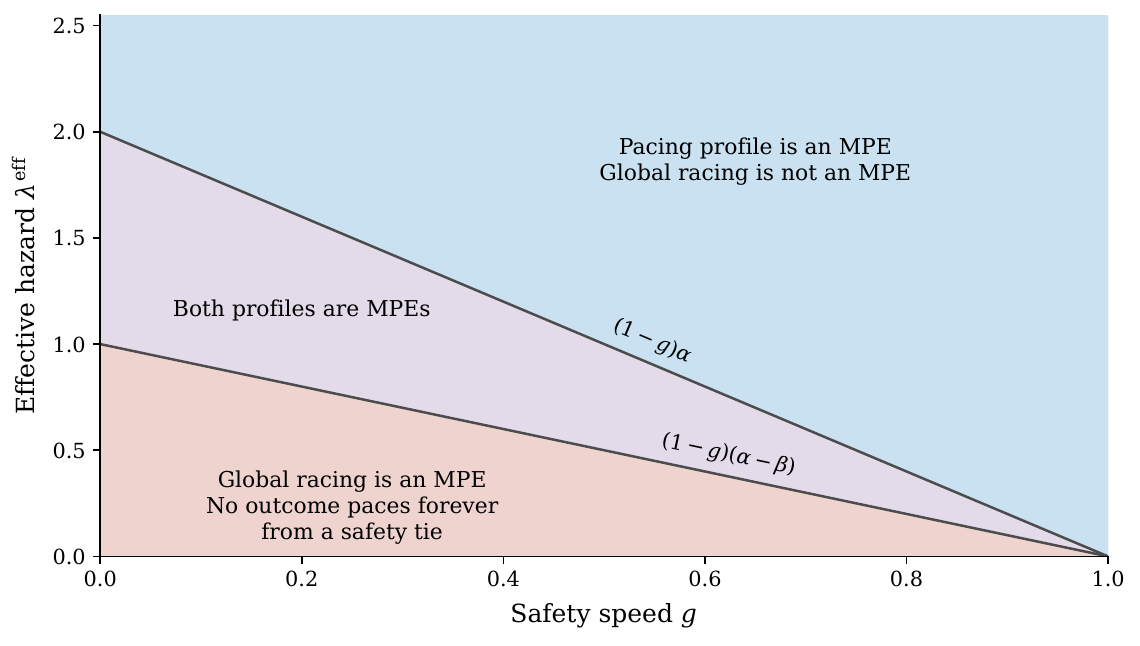}
\caption{Regimes of cooperation}
\label{fig:regimes}
\end{figure}

\cref{general:thm:mpe} is illustrated by \cref{fig:regimes}.  The lower cutoff rules out perpetual pacing in part (a).
Start with both firms at the safety threshold,
$A_{1,0}=A_{2,0}=X_0=A_0$. If firm $i$ races forever, the speed limit implies
$
A_{it}=A_0+t$ and $ A_{jt}\leq A_0+t.$
Firm $i$ therefore determines the frontier, so the distance above safety
is $D_t=(1-g)t$, exactly as under joint racing. Its flow profit satisfies
$
e^{\alpha A_{it}-\beta A_{jt}}
\geq e^{(\alpha-\beta)(A_0+t)}.$ The deviation thus faces the same disaster risk as joint racing and
earns at least as much at every date. It guarantees normalized payoff
$R(0)$, whatever the rival does. Perpetual joint pacing gives $P(0)$.
Hence, when $R(0)>P(0)$, either firm can profitably deviate from pacing
by racing forever. This inequality is equivalent to
$
\lambda^{\mathrm{eff}}<(1-g)(\alpha-\beta).
$
In part (b), perpetual racing and safe pacing are both equilibrium
outcomes from a tie on safety. In the interior of this region, both
firms prefer the pacing outcome, since $P(0)>R(0)$. Yet one firm cannot
obtain that outcome simply by slowing down: its payoff depends on how
its rival responds and what happens when they meet. The pacing profile is an equilibrium when effective hazard satisfies
$
\lambda^{\mathrm{eff}}\geq(1-g)(\alpha-\beta).
$
The racing profile is an equilibrium when
$
\lambda^{\mathrm{eff}}\leq(1-g)\alpha.$ Part (b) is the region where both conditions hold. Risk is high enough
to sustain pacing, but low enough that racing remains optimal against
a rival that always races. Both boundary values belong to this
coexistence region.

\cref{fig:catchup-comparison} illustrates coexistence of these equilibria.\footnote{We use linear risk
with $\alpha=2$, $\beta=1$, $r=3$, $g=1/2$, and
$\kappa\simeq6.4124$. The slope is chosen to give
$\lambda^{\mathrm{eff}}=0.9$, strictly between the two cutoffs.} 
It starts with a small leader on the safety
frontier ($A_{10}=X_0=0.20$ and $A_{20}=0.10$). In both panels, the leader paces while the follower races
until catch-up. Under the pacing profile, both firms then pace, so the
leader's initial slowdown leads to a permanently safe continuation.
Panel (b) shows an unprofitable deviation against the racing profile:
the leader tries pacing until catch-up and then resumes racing, while
the rival races throughout. The slowdown provides only a temporary
safe interval and sacrifices the lead. It gives the leader $0.3396$
per unit of its initial profit flow, below the $0.3448$ from racing
immediately. The leader therefore prefers to race immediately against
this rival; panel (b) shows why the proposed slowdown is rejected. The follower also has an incentive to race, because slowing down  makes immediate accommodation less attractive to the leader.

% Simulation source: figures/cooperating_v3_figures.py
\begin{figure}[!ht]
\centering
\includegraphics[width=\textwidth]{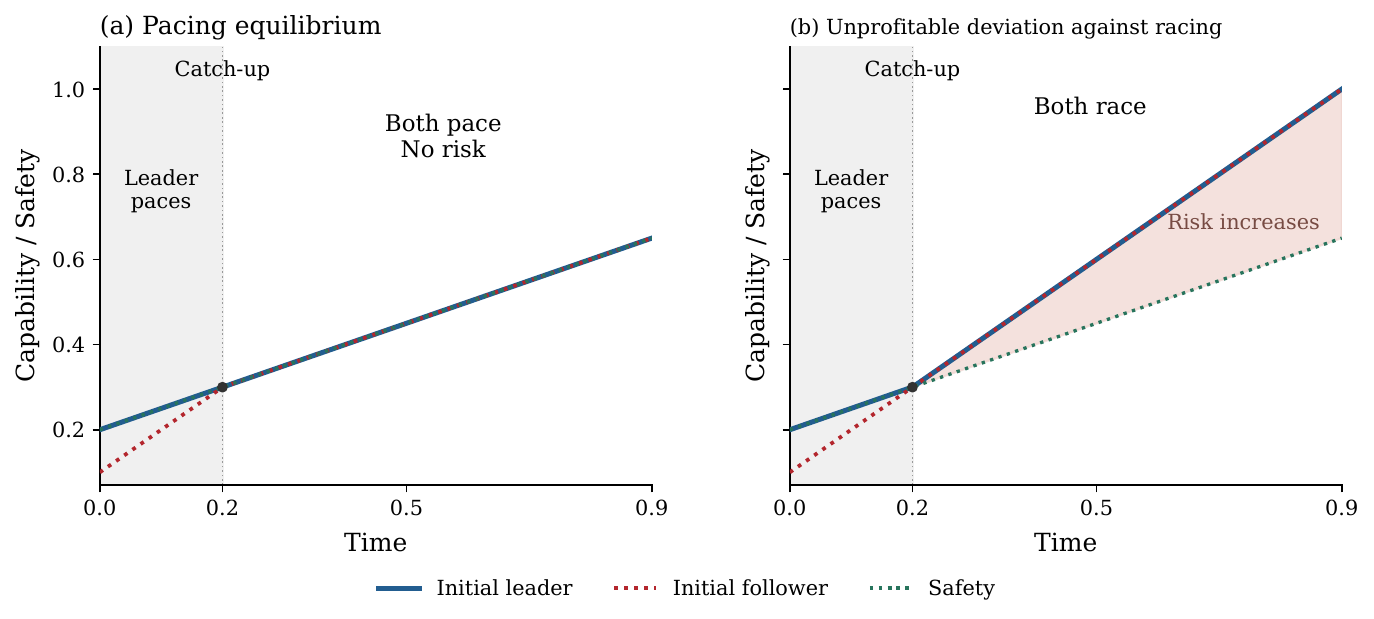}
\caption{Slowdown with different continuations}
\label{fig:catchup-comparison}
\end{figure}
\FloatBarrier

Note that the pacing profile specifies behavior at every state, but it need not
lead to pacing from every initial state.   With
safety far behind, the leader races forever. With safety closer, it
pauses while the follower catches up. This can only be sustained if the safety variable is not too far behind, otherwise waiting would cost too much in forgone growth and exposure to disaster. In this case, the leader would prefer racing and maintaining its lead, even at the cost of disaster risk.

In part (c) of \cref{thm:main}, the pacing profile remains an equilibrium but global
racing does not. Above the upper cutoff, $R(0)<\frac{1}{r+\beta-g\alpha}$. Hence, a leader can gain by unilaterally pacing briefly before
resuming racing against a rival that always races: the brief spell on
safety avoids risk, and the benefit outweighs the loss from letting
the follower narrow the gap and this profitable deviation
rules out the global racing profile. 

% Simulation source: figures/cooperating_v3_figures.py
\begin{figure}[!ht]
\centering
\includegraphics[width=\textwidth]{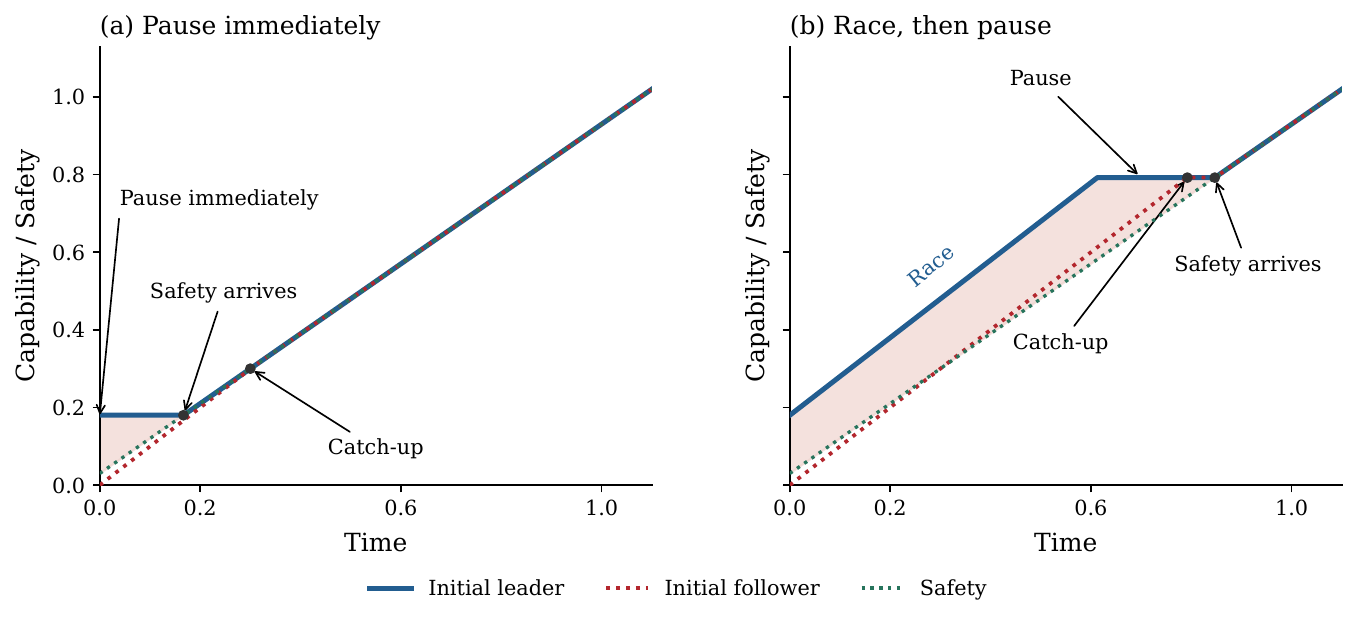}
\caption{Different accommodation timings}
\label{fig:pause-comparison}
\end{figure}
\FloatBarrier

When the leader starts above safety, it might nevertheless prefer a short race before accommodating. \cref{fig:pause-comparison} compares immediate
accommodation with the optimal path from the same state above safety,
using the linear-risk parameters of \cref{fig:linear-state-strategies}.
Here $\lambda^{\mathrm{eff}}\simeq0.2018$ exceeds the upper cutoff $0.2$. Panel (a) illustrates the leader's strategy of pausing immediately until safety catches up, after which the leader paces the frontier while the follower develops at maximum speed. Eventually, the follower catches up, and both firms pace the frontier. Panel (b) illustrates the leader's strategy of first racing before pausing then, when safety catches up, both firms pace the frontier.

\cref{fig:leader-phase-detail} enlarges the part of
\cref{fig:linear-state-strategies} containing the path in
\cref{fig:pause-comparison}(b). Racing moves the state to the right
at a fixed gap: $(\dot D,\dot G)=(1-g,0)$. At the boundary of the
pausing region, the leader stops while the follower keeps racing,
so $(\dot D,\dot G)=(-g,-1)$ and the path turns down and left.
Catch-up occurs before safety arrives. Both firms then pause until$D=0$ and then pace together.

\begin{figure}[!ht]
\centering
\includegraphics[width=.92\textwidth]{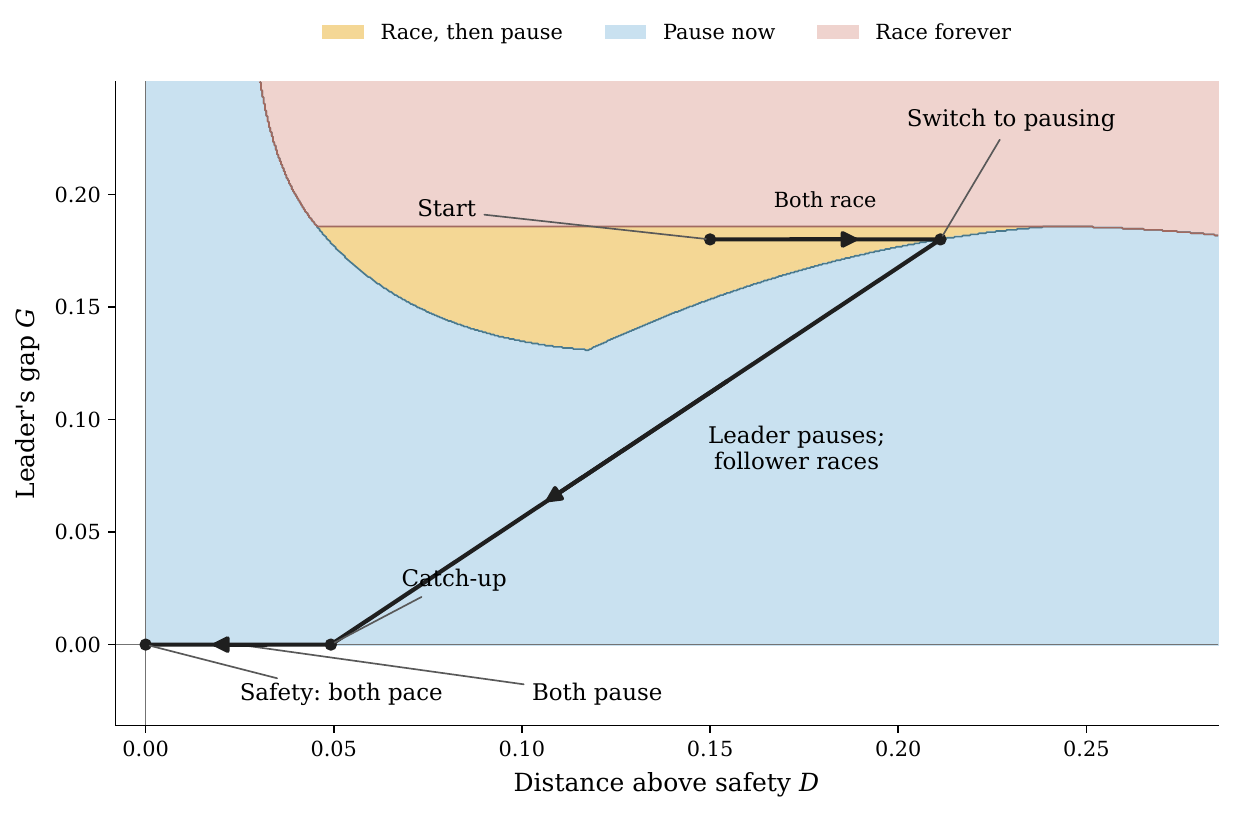}
\caption{The race-then-pause path in state space}
\label{fig:leader-phase-detail}
\end{figure}

\subsection{Comparative statics}
\label{sec:risk-examples}\label{sec:linear-simulations}
 We now flesh out some simple implications of \cref{thm:main}. We first show faster safety progress, higher risk, and greater patience
favor pacing in the equilibrium regions of \cref{thm:main}. We then
use linear risk to examine how these incentives change adjustment
paths and disaster exposure.

\begin{proposition}[Safety progress and patience favor pacing]
\label{prop:comparative-statics}
With other primitives fixed, the ratio \mbox{$\lambda^{\mathrm{eff}}/(1-g)$} is:
\vspace{-1em}
\begin{enumerate}[label=(\alph*),leftmargin=2em]
\item weakly increasing in safety speed $g$;
\item weakly increasing under pointwise increases of the hazard profile $\lambda$;
\item weakly decreasing in the discount rate $r$, and strictly if $\phi$
is nonconstant. 
\end{enumerate}
\vspace{-1em}
Thus faster safety, higher risk, or greater patience weakly expands
the region where $\sigma^P$ is an MPE and weakly shrinks the region
where $\sigma^R$ is an MPE.
\end{proposition}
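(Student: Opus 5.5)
The plan is to write $\lambda^{\mathrm{eff}}/(1-g)$ in closed form through a change of variables in $R(0)$. Set $\rho:=r-\alpha+\beta>0$ and substitute $u=(1-g)t$ in
\[
R(0)=\int_0^\infty \exp\Bigl\{-\rho t-\int_0^t\lambda((1-g)s)\,\de s\Bigr\}\de t .
\]
Writing $\Phi(u):=\int_0^u \phi(v)\,\de v$, and using that $\lambda=\lambda_0+\phi$ on $(0,\infty)$, this gives
\[
R(0)=\frac{1}{1-g}\int_0^\infty \exp\Bigl\{-\frac{(\rho+\lambda_0)u+\Phi(u)}{1-g}\Bigr\}\de u .
\]
Now set $k:=1/(1-g)$, which is increasing in $g$, and $\theta:=\rho+\lambda_0$. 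Then
\[
\frac{\lambda^{\mathrm{eff}}}{1-g}=k\Bigl(\frac1{R(0)}-\rho\Bigr)
=\frac{k^2}{J(k)}-k\rho,
\qquad J(k):=\int_0^\infty e^{-k(\theta u+\Phi(u))}\,\de u .
\]
The three parts then become monotonicity statements about this single expression. The region statement at the end follows immediately. The pacing condition $\lambda^{\mathrm{eff}}\geq(1-g)(\alpha-\beta)$ and the racing condition $\lambda^{\mathrm{eff}}\leq(1-g)\alpha$ are both comparisons of this ratio with $g$-free constants, and they are the conditions from \cref{thm:main}. So an increase in the ratio weakly expands the first set and weakly shrinks the second.

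\emph{Part (b).} This is immediate. A pointwise increase of $\lambda$ lowers the integrand of $R(0)$ pointwise, so $R(0)$ weakly falls. The ratio $1/R(0)-\rho$ therefore weakly rises, and $g$ is unchanged.

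\emph{Part (c).} I will use the representation $\lambda^{\mathrm{eff}}=1/R(0)-\rho$ as a weighted average. Write $R(0)=\int_0^\infty e^{-\rho t-\Lambda_t}\,\de t$ with $\Lambda_t:=\int_0^t\lambda((1-g)s)\,\de s$. Integrating by parts gives
\[
1=\int_0^\infty(\rho+\lambda_t)\,e^{-\rho t-\Lambda_t}\,\de t ,
\]
so that
\[
\lambda^{\mathrm{eff}}=\frac{\int \lambda_t\, e^{-\rho t-\Lambda_t}\,\de t}{\int e^{-\rho t-\Lambda_t}\,\de t},
\]
which is the mean of $\lambda_t$ under the density proportional to $e^{-\rho t-\Lambda_t}$. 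Raising $r$ raises $\rho$. The family of densities is then ordered by monotone likelihood ratio, because the ratio is $e^{-\Delta\rho\, t}$, which is decreasing in $t$. Since $\lambda_t$ is nondecreasing in $t$, the mean weakly falls. It falls strictly when $\lambda_t$ is nonconstant on the support, that is, when $\phi$ is nonconstant.

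\emph{Part (a).} I expect this to be the main obstacle. I need $\frac{d}{dk}\bigl[k^2/J(k)-k\rho\bigr]\geq0$. Differentiating, and using $J'(k)=-\int h\,e^{-kh}$ with $h(u):=\theta u+\Phi(u)$, the condition becomes
\[
\frac{2k}{J}+\frac{k^2}{J^2}\int h\,e^{-kh}\,\de u\;\geq\;\rho .
\]
I will handle this by integration by parts, which gives the identity $1=\int k h'(u)\,e^{-kh}\,\de u$, where $h'=\theta+\phi\geq\rho$. Two facts then combine. The first is that $\int h\,e^{-kh}\geq0$. The second is that $2k/J\geq\rho$, which follows from $J\leq\int e^{-k\rho u}\,\de u=1/(k\rho)$. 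Together these give the inequality, and in fact with room to spare. If I want a cleaner weak statement, an alternative is to note that $\lambda^{\mathrm{eff}}$ itself, written as the weighted mean of $\lambda_0+\phi(u)$ under the density proportional to $e^{-k h(u)}$ in the $u$ variable, is nondecreasing in $k$ by the same likelihood-ratio argument. Indeed, increasing $k$ reweights by $e^{-\Delta k\,h}$, and this factor is decreasing in $u$. Then dividing by $1-g$, which is decreasing in $g$, makes the ratio increasing in $g$.

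I will check this last argument carefully, since it seems to give $\lambda^{\mathrm{eff}}$ decreasing in $k$. If so, the sign must instead come from the explicit derivative bound above, and I will rely on that computation.
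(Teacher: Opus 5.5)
Your parts (b) and (c), and the translation into the equilibrium regions, are correct and essentially the paper's argument. That argument writes effective hazard as the mean of $\lambda(vt)$ under the density proportional to $e^{-\rho t-\Lambda_t}$, and uses a likelihood-ratio shift when $r$ rises.

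Part (a) is not established. First, there is an algebra slip. Your substitution gives $R(0)=kJ(k)$, so $\lambda^{\mathrm{eff}}/(1-g)=k/R(0)-k\rho=1/J(k)-k\rho$, not $k^2/J(k)-k\rho$. The ``room to spare'' is the symptom. If $\lambda\equiv0$, the ratio is identically zero in $g$, yet your bound $2k/J-\rho>0$ would make it strictly increasing. With the correct expression you need $\int_0^\infty h\,e^{-kh}\,\de u\geq\rho J(k)^2$. This holds with equality when $\lambda\equiv0$, and it does not follow from $h'\geq\rho$ and $h\geq0$.

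More fundamentally, nothing in your argument for (a) uses concavity of $\phi$, and (a) is false without it. Take $\lambda_0=0$ and the convex profile $\phi(D)=M(D-1)_+$, and let $M\to\infty$. Then $R(0)\to(1-e^{-\rho/v})/\rho$ and $\lambda^{\mathrm{eff}}/v\to x/(e^x-1)$ with $x=\rho/v$, which strictly decreases as $g$ rises. Your second thought about the $u$-variable reweighting is right: $\lambda^{\mathrm{eff}}$ itself falls with $g$. So the ratio is the product of a falling factor and a rising factor $k$, and no argument that ignores the curvature of $\phi$ can sign it.

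The missing idea is to stay in calendar time and put the factor $1/v$ inside the integrand, where concavity controls it. Let $v=1-g$ and $w_v(t)\propto e^{-\rho t-\int_0^t\lambda(vs)\,\de s}$. Then the ratio is $\int_0^\infty[\lambda(vt)/v]\,w_v(t)\,\de t$. Now compare $v_2<v_1$:
\begin{itemize}
\item The log-derivative of $w_{v_2}/w_{v_1}$ is $\lambda(v_1t)-\lambda(v_2t)\geq0$. So $w_{v_2}$ puts relatively more weight on later dates and raises the mean of any nondecreasing function.
\item Concavity with $\phi(0)=0$ makes $\phi(x)/x$ nonincreasing, so $\lambda(v_2t)/v_2\geq\lambda(v_1t)/v_1$ pointwise. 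The $\lambda_0/v$ part of this inequality is trivial.
\item $\lambda(v_1t)/v_1$ is nondecreasing in $t$.
\end{itemize}
Hence
\[
\int_0^\infty\frac{\lambda(v_2t)}{v_2}\,w_{v_2}(t)\,\de t\;\geq\;\int_0^\infty\frac{\lambda(v_1t)}{v_1}\,w_{v_2}(t)\,\de t\;\geq\;\int_0^\infty\frac{\lambda(v_1t)}{v_1}\,w_{v_1}(t)\,\de t,
\]
which is (a). In calendar time, both effects of faster safety push in the same direction: the reweighting toward later dates, and the pointwise change of the integrand via concavity. In your $u$-coordinates the two effects conflict.
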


Faster safety weakly lowers exposure along the racing path, but also reduces
the growth advantage of racing over pacing. Under concave risk, the
second effect dominates for equilibrium incentives. Greater patience
gives more weight to the higher hazards reached later in a race;
this channel is absent under constant risk. The proofs are in \cref{sec:comparative-statics-proofs}.

Under constant risk strictly above safety, effective hazard equals
$\lambda_0$. The two cutoffs are $(1-g)(\alpha-\beta)$ and
$(1-g)\alpha$, both decreasing in $g$ and independent of $r$.
The leader chooses between immediate accommodation and perpetual
racing: any gain from eventually accommodating is largest when
accommodation starts immediately. Under linear risk, $\lambda_0=0$ and $\phi(D)=\kappa D$, the racing
payoff from safety is
\[
R(0)=\int_0^\infty
e^{-(r-\alpha+\beta)t-\kappa(1-g)t^2/2}\,\de t.
\]
Let the lower and upper cutoff slopes be $\underline\kappa$ and
$\overline\kappa$, defined respectively by $R(0)=P(0)$ and
$R(0)=1/(r+\beta-g\alpha)$. Pacing exists for
$\kappa\geq\underline\kappa$, global racing exists for
$\kappa\leq\overline\kappa$, and both exist between the cutoffs,
including their endpoints.

\subsection{Subgame Perfect Equilibria}\label{sec:spe-outcomes}
We now allow equilibrium strategies to depend on the full history of play, and study  how competitive dynamics shape risk across all subgame perfect equilibria.

\begin{definition}[Subgame-perfect equilibrium]\label{spe:def:spe}
A pure profile $\sigma$ is a {subgame-perfect equilibrium} (SPE) if,
after every feasible history $h_t$, for each firm $i$ and
every alternative history-dependent strategy $\sigma_i'$,
$
V_i((\sigma_i',\sigma_j),h_t)\leq V_i(\sigma,h_t).$ 
An \emph{SPE outcome} is an admissible continuation that attains both
assigned payoffs on the same path.
\end{definition}

We establish that when risk is low and the future is discounted sufficiently, firms perpetually race across all SPE. We also bound  the probability of disaster up to a fixed time. To show this, we first construct a lower-bound $\underline{V}_{\infty}$ on firms' payoffs across all SPE. We do so recursively, as illustrated in
\cref{fig:recursive-floor}. At the rightmost safety tie, each firm can guarantee its racing payoff
$\underline{V}_0=R(0)$. Thus, at the middle tie, firm 2 (red) can pause which gives 
firm 1 (blue) the option to wait before pacing to the terminal guarantee. The logic here is that because disaster ends both firms’ profits, firm 1’s incentive to avoid disaster also protects firm 2. This gives firm 2 the improved payoff guarantee \(\underline V_1\). Now working backward, at the initial tie, firm $1$ can in turn pause and---with the roles reversed---use firm 2's continuation
guarantee $\underline{V}_1$ to obtain $\underline{V}_2$.
Repeating the argument gives guarantees converging to
$\underline{V}_\infty$.
\begin{figure}[H]
\centering
\includegraphics[width=0.9\textwidth]{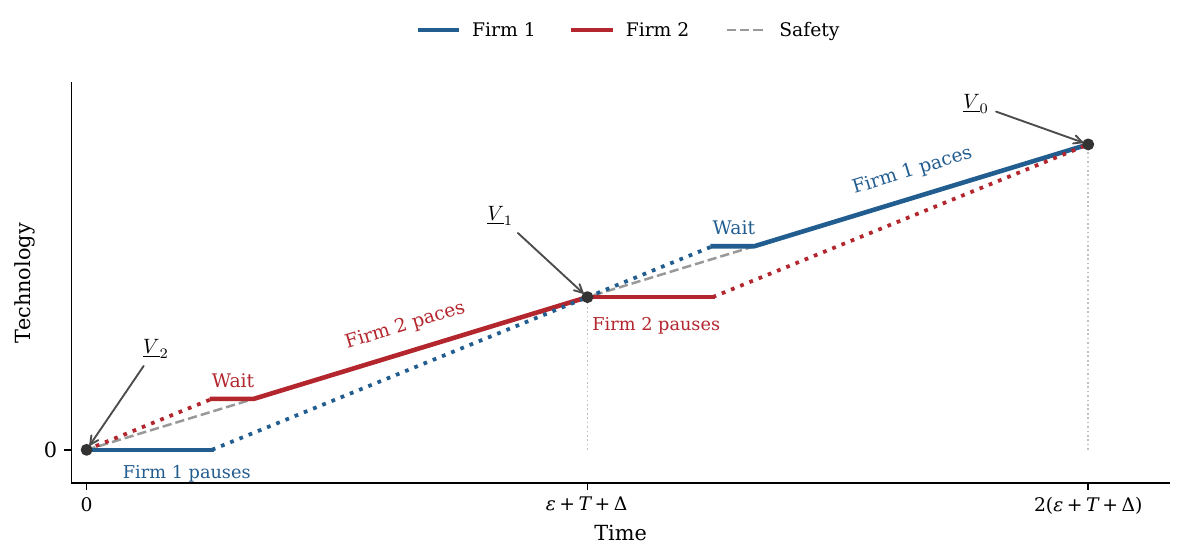}
\caption{Constructing the recursive payoff guarantee}
\label{fig:recursive-floor}
\end{figure}
Next, we construct an upper bound $\overline V(t,z)$ on average discounted profits across all SPE when cumulative disaster exposure reaches \(z\) by date \(t\). To do so, we first observe that cumulative disaster by time $t$ is upper-bounded by 
$
\Lambda_t^R:=\int_0^t\lambda\big( (1-g)s \big)\,\de s$,
since $(1-g)s \geq D_s$. Next notice that $F(t):=\max\left\{e^{(\alpha-\beta)t},
\frac{e^{\alpha t}+e^{-\beta t}}{2}\right\}$ is the largest average flow profit by date $t$.\footnote{The first term is if it maximizes total profits for both firms to be tied, and the second is if one firm races while the other's capability is at $0$.} Thus define 
\[
\overline V(t,z)
:=\underbrace{\int_0^t e^{-rs}F(s)
\exp\!\left[-\max\{0,z-\Lambda_t^R+\Lambda_s^R\}\right]\,\mathrm ds}_{\substack{\text{Profits through }t\\\text{with exposure postponed as long as possible}}}
+\underbrace{e^{-z}\int_t^\infty e^{-rs}F(s)\,\mathrm ds}_{\substack{\text{Profits after }t\\\text{with no further exposure}}}.
\]
which is decreasing in $z$. Thus, the exposure ceiling
\[
\overline{\Lambda}(t):=
\max\left\{
z\in[0,\Lambda_t^R]:
\overline V(t,z)\geq\underline V_\infty
\right\}
\]
gives the largest cumulative exposure consistent with our SPE payoff guarantees.

\begin{theorem}[Racing and risk bounds in every SPE]\label{thm:spe} \phantom{}
\begin{enumerate}[label=(\alph*),leftmargin=2em]
\item \emph{Low risk forces racing.} If
$\lambda^{\mathrm{eff}}<(1-g)(\alpha-\beta)$ and $
r+\lambda^{\mathrm{eff}}>\frac{\alpha^2+\beta^2}{\alpha-\beta},$  
 every SPE outcome starting from a safety tie is perpetual joint racing.
\item \emph{Upper bound on risk.}  If $r>\alpha$ and
$g>1/2$ and firms start from a tie on the safety threshold, then every SPE outcome satisfies
$
\Pr(\tau_D\leq t)\leq1-e^{-\overline\Lambda(t)}$ for every finite date $t\geq0$. Also, every SPE gives each firm normalized payoff at least the amount 
$\underline{V}_\infty$.
\end{enumerate}
\end{theorem}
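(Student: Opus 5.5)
The plan is to prove the two parts separately. Both rest on one observation already used after \cref{thm:main}: a firm that races forever becomes the frontier, and it faces exactly the joint-racing exposure $D_t=(1-g)(t-t_0)$ from a safety tie at $t_0$. Its flow profit is at least $e^{(\alpha-\beta)A_t}$ at every date, whatever the rival does. So at any history with a tie on safety, each firm can guarantee normalized payoff $R(0)$. The same holds from any tie at $D\le 0$, where the guarantee is at least the value of racing from that point.

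For part (a), I take an SPE outcome path from a safety tie. I will show that along it neither firm's speed can be below $1$ on a set of positive measure. Suppose the path deviates from joint racing, and let $t_0$ be the first time it does, i.e.\ the infimum of times where some $\dot A_{it}<1$ on a positive-measure set right after. Up to $t_0$ the path is joint racing from a safety tie, so the state at $t_0$ is a tie at $D=(1-g)t_0\ge 0$. From there the deviating-speed firm $i$ can race forever and secure at least $R(D)$ times current profit, with the exposure and profit comparison done as above. I then bound firm $i$'s equilibrium continuation from above. On the interval where $\dot A_i<1$ (say over $[t_0,t_0+h]$), firm $j$ can gain a lead. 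Firm $i$'s flow is $e^{\alpha A_i-\beta A_j}$, so a shortfall $\delta$ in $A_i$ relative to the racing path costs a factor $e^{-\alpha\delta}$. A lead of $\delta$ for $j$ costs $i$ at most a factor $e^{-\beta\delta}$ more, while the hazard reduction is at most $\lambda$ evaluated along a lower frontier.

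I would linearize in $h\to 0$. The continuation is worth roughly $R$ minus a first-order loss proportional to $(\alpha+\beta)$ times the shortfall, plus a risk saving bounded by $\lambda^{\mathrm{eff}}$-type terms. Here $\lambda^{\mathrm{eff}}<(1-g)(\alpha-\beta)$ says the risk saving from slowing is dominated by the growth loss. The second condition, $r+\lambda^{\mathrm{eff}}>(\alpha^2+\beta^2)/(\alpha-\beta)$, is what I expect to control the continuation after the slowdown: the subsequent payoff cannot exceed the maximal-profit benchmark by enough to compensate. Concretely, I would bound the best SPE continuation of a firm that is behind by the value of being a permanent leader/follower under the $F$-type bound, and compare it with $R(0)$. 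The quadratic $\alpha^2+\beta^2$ should arise from combining the own-shortfall factor $\alpha$ with the rival-lead factor $\beta$ in a discounted integral. This comparison step is the main obstacle. The first-deviation time in continuous time must be handled with care, and so must admissible paths: I would use the sup-payoff convention and \cref{lem:markov-best-response}-style regularization to argue that guarantees hold on every admissible path. The result is that some firm strictly prefers to race, a contradiction. Hence the outcome is joint racing at almost every time.

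For part (b), I will formalize the recursive guarantee. Define $\underline V_0=R(0)$. Given a guarantee $\underline V_k$ available to either firm at any later safety tie, define $\underline V_{k+1}$ as the payoff firm $i$ secures by pausing or pacing at a tie while the rival $j$ best-responds. Since disaster hits both firms, $j$ prefers to accommodate whenever its own value of accommodating, bounded below using $\underline V_k$ at the next tie, exceeds racing. This mirrors the leader stopping problem in (B) with terminal value $\underline V_k$. The map $\underline V_k\mapsto\underline V_{k+1}$ is monotone and bounded; the conditions $r>\alpha$ and $g>1/2$ are used to show boundedness of the leader's gain and that the iterates increase. They therefore converge to $\underline V_\infty$, and every SPE gives each firm at least $\underline V_\infty$ by induction over histories.

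For the risk bound, I take an SPE path with cumulative exposure $\Lambda_t=z$ by date $t$. Since $D_s\le(1-g)s$, the exposure accumulated over $[s,t]$ is at most $\Lambda^R_t-\Lambda^R_s$. Hence survival to $s$ is at least $\exp[-\max\{0,z-\Lambda^R_t+\Lambda^R_s\}]$; this is the pointwise-best placement of the exposure. Average flow profit is bounded by $F(s)$. The average of the two firms' payoffs is therefore at most $\overline V(t,z)$. This average must be at least $\underline V_\infty$, so monotonicity of $\overline V$ in $z$ gives $z\le\overline\Lambda(t)$. The probability bound then follows from $\Pr(\tau_D\le t)=1-e^{-\Lambda_t}$.
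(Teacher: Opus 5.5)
\textbf{Part (a).} The proposal has a genuine gap here. You try to show that the firm that first slows down is \emph{individually} worse off than racing, by bounding its own SPE continuation. That step cannot work. After the slowdown, the SPE continuation is constrained only by the two firms' guarantees, and it can reward the slower firm. For example, the rival can later stall while the slower firm races; that firm's flow then grows at rate $\alpha$ rather than $\alpha-\beta$, and its payoff can exceed $R(0)$. The $F$-type average-flow ceiling does not link high flows to exposure, so comparing it with $R(0)$ gives nothing. A linearization in $h$ says nothing about the continuation either.

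The missing idea is to bound the \emph{sum} of the two payoffs on the common outcome path. The paper builds a supersolution for total payoff,
$\widehat V(S)=[R(\max\{D,0\})/P(0)]\,e^{(\alpha-\beta)\max\{A,X\}}\,W(\max\{A,X\}-\min_iA_i)$ with $W(G)=2P(0)(\alpha e^{\beta G}+\beta e^{-\alpha G})/(\alpha+\beta)$. This equals $2R(0)$ at the initial tie. The paper then checks that the residual $\pi_1+\pi_2+\tfrac{\mathrm d}{\mathrm dt}\widehat V(S_t)-(r+\lambda(D_t))\widehat V(S_t)$ is nonpositive along every feasible path. It is strictly negative on or below safety, at any strict lead, and at an unsafe tie with common speed below one. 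The second hypothesis is equivalent to $2\alpha\beta R(0)<\alpha-\beta$. This makes the strict-lead bound $[2\alpha\beta R(D)-(\alpha-\beta)](\pi_L-\pi_F)/(\alpha+\beta)$ negative, so creating a capability gap strictly destroys joint value. Hence total payoff is at most $2R(0)$, with equality only under perpetual joint racing. Since each firm guarantees $R(0)$ by racing, the conclusion follows with no first-deviation date and no single-firm comparison.

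\textbf{Part (b).} The recursive structure is right, but the step ``$j$ best-responds and prefers to accommodate'' is not available in an SPE. The (B) stopping problem is a Markov construction against a follower committed to racing, so it does not transfer. After firm $i$ pauses, the rival need not accommodate. All you know is that its continuation is at least what it can secure with its own comparison path: wait or race to safety, pace until firm $i$ could catch up, then use the guarantee $m$. What turns this into a guarantee for the \emph{pausing} firm is the payoff-ratio inequality $V_j\geq\frac{r-\alpha}{r+\beta}\frac{\pi_j(S)}{\pi_i(S)}V_i$. It follows from the speed bounds on flows and common survival, since $d\int_0^\infty e^{-ds-\Lambda_s}\,\mathrm ds$ is increasing in $d$.

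This is where $r>\alpha$ enters. The factor $q(\varepsilon)\leq(r-\alpha)/(r+\beta)<1$ makes the update a contraction, which gives convergence, and the pacing MPE supplies the upper bound $P(0)$. The condition $g>1/2$ is not about bounding a leader's gain. It makes $\Delta=\frac{2g-1}{g(1-g)}\varepsilon>0$, so the rival's comparison path stays weakly ahead and reaches safety before the pausing firm can catch up.

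Your final step matches the paper's: you turn $\underline V_\infty$ into the exposure ceiling through $\overline V(t,z)$ and its monotonicity in $z$.
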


Part (a) of \cref{thm:spe} shows that firms cannot cooperating on pacing the frontier when risks are sufficiently low. The underlying idea is that each firm can guarantee its racing payoff $R(0)$ against any response, so an SPE must deliver at least $2R(0)$ in total profits. But under part (a)'s assumptions, low risk makes delay costly, while sufficient discounting
limits the future gains from creating a capability lead. Hence, the total payoff from racing of $2R(0)$ attains the upper bound while every other outcome gives strictly less and so cannot be an SPE.

The next result shows that history-dependent cooperation can give both firms strictly more than the pacing payoff in the region where racing and pacing coexist as MPEs. 

\begin{proposition}[An SPE with higher payoffs than pacing]\label{prop:high SPE}
Suppose  $\lambda(D)=\lambda$ for every $D>0$, so that
$\lambda^{\mathrm{eff}}=\lambda
$. and that $(1-g)(\alpha-\beta)\leq\lambda\leq(1-g)\alpha$ and $r+\lambda<\alpha.$ Starting from $A_{10}=A_{20}=X_0=0$, there is an SPE outcome
giving both firms strictly more than the pacing-MPE payoff
$P=1/[r-g(\alpha-\beta)]$.
\end{proposition}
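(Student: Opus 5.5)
The plan is to build an explicit history-dependent outcome in which the firms \emph{alternate leadership}, and to enforce it with a grim trigger that reverts to the racing profile $\sigma^R$ after any observed deviation. By \cref{thm:main}(b), $\sigma^R$ is an \MPE{} throughout the assumed region $(1-g)(\alpha-\beta)\leq\lambda\leq(1-g)\alpha$, so it is a credible punishment. Because hazards are constant, $\lambda^{\mathrm{eff}}=\lambda$, and the racing value from a tie is $R=1/(r-\alpha+\beta+\lambda)$. The lower inequality gives $R\leq P$. I will also use the standard fact that the punishment yields each firm its racing continuation value from whatever state the deviation produces.

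\textbf{Step 1 (the cooperative cycle).} Fix a phase length $T>0$. Starting at $A_{10}=A_{20}=X_0=0$, firm 1 races for time $T$ while firm 2 pauses. Firm 2 then races until it overtakes firm 1 by the same margin, while firm 1 pauses. The cycle repeats with the roles swapped.

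The key observation is that $r+\lambda<\alpha$ makes being the sole racer extremely valuable. While firm $i$ races alone with a constant hazard, its discounted flow behaves like $e^{(\alpha-r-\lambda)t}$, which grows. During the catch-up phase the frontier does not move, so $D$ falls at rate $g$ and, if the phase is long enough, the firms spend time at or below safety with no hazard.

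By scale invariance of the flow payoffs, the cycle is stationary up to the multiplicative factor $e^{(\alpha-\beta)\cdot(\text{frontier advance per cycle})}$. Each firm's value therefore solves a linear renewal equation of the form
\[
V=\text{(one-cycle payoff)}+\delta(T)\,V ,
\]
where $\delta(T)<1$ collects discounting, survival, and growth over a cycle; the assumption $r>\alpha-\beta$ keeps $\delta(T)<1$. The first step is to write these values in closed form. Exploiting the symmetry of the alternation, I then show that for some $T$ (I expect a moderate $T$, or letting the catch-up phase end exactly on the safety threshold) both firms' time-0 values strictly exceed $P=1/[r-g(\alpha-\beta)]$. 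The mechanism is that the leader's phase exploits $\alpha>r+\lambda$, while the laggard's loss $e^{-\beta G}$ is bounded.

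\textbf{Step 2 (incentives).} At every on-path history I verify that each firm's promised continuation value is at least its best payoff from deviating and then facing $\sigma^R$. Since $\sigma^R$ is an \MPE{}, the best deviation value is the racing continuation from the current state $(D,G)$, normalized appropriately. For the current leader this constraint holds easily, because cooperation already lets it race.

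\textbf{Main obstacle.} The binding constraint is the paused follower's, at the moment just before its own racing phase starts, when its gap $G$ is largest and the rival's continuation is most favorable. There the comparison is between two quantities:
\begin{itemize}
\item the cooperative continuation, in which it becomes the sole racer;
\item the deviation value, in which it races while the rival also races, so it stays behind by $G$ forever and earns normalized value $e^{-\beta G}R$.
\end{itemize}
I expect this comparison to favor cooperation, since the cooperative continuation grants sole leadership worth more than $R$. The delicate point is the follower's constraint \emph{early} in the rival's racing phase, where its deviation keeps the gap small. I would try to choose $T$ so that the follower's ratio of cooperative value to $e^{-\beta G}R$ is minimized at $G=0$, and check that it is at least one there using $\lambda\leq(1-g)\alpha$ and $R\leq P$ below the promised value.

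If constant-$T$ cycles fail this check, I would shorten the phases toward zero, where the outcome approaches joint pacing with value near $P$. I would then perturb to first order, showing that the derivative of the common value in $T$ is positive at $T=0$ exactly because $r+\lambda<\alpha$, while the incentive constraints, which are slack by the \MPE{} property of $\sigma^P$, survive small perturbations.
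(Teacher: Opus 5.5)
There is a genuine gap. Your fixed-$T$ alternation cannot be supported by the racing trigger, and the constraint that breaks it is the one you dismiss as easy. At the end of its own racing phase a firm holds the largest lead of the cycle. If it simply keeps racing, the trigger makes its rival race too, so that lead is locked in forever. By contrast, the paused firm just before its own racing phase has nothing to gain. Early in the rival's racing phase, the just-paused firm still leads by almost the full margin, so the gap a deviation preserves is large, not small.

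This constraint fails for every $T$. Put $k:=r+\beta+\lambda$ and $q:=\alpha-r-\lambda>0$, and normalize the firm's survival-discounted flow $e^{-rt-\Lambda_t}\pi_i(S_t)$ to one at the end of its racing phase. If $g\le1/2$, then $D_t>0$ for almost every $t>0$ on your path, and every later phase lasts $2T$. The log of this flow falls at rate $k$ while the firm pauses and rises at rate $q$ while it races. The resulting sawtooth lies strictly below $e^{-(k-q)s/2}$, the curve through its successive peaks, so
\[
\text{cooperative continuation}<\frac{2}{k-q}=\frac{1}{r+\lambda-(\alpha-\beta)/2}<\frac{1}{r+\lambda-(\alpha-\beta)}.
\]
The right-hand side is exactly what the firm gets by racing on with its lead preserved. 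If $g>1/2$, the frontier's average speed $1/2$ falls behind safety, so $D_t\to-\infty$ at successive peaks. The cooperative continuation is then below $1/(r-(\alpha-\beta)/2)$, while the deviation value tends to $1/(r-\alpha+\beta)$, so the constraint fails at late peaks.

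Your fallback fails too. As $T\to0$ the cycle converges to both firms developing at speed $1/2$, not to joint pacing at speed $g$. Its value is $1/(r+\lambda-(\alpha-\beta)/2)$ if $g\le1/2$ and $1/(r-(\alpha-\beta)/2)$ if $g>1/2$, strictly below $P$ in both cases (the first uses $\lambda\geq(1-g)(\alpha-\beta)$). The slackness of $\sigma^P$'s constraints says nothing about a different path with a different punishment.

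The missing idea is that $r+\lambda<\alpha$ has to be exploited non-stationarily. While a firm races alone, its discounted flow grows at rate at least $q>0$. So a racing block can always be continued until that flow climbs back to nearly its previous peak.

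The paper ends block $n$ when the active firm's discounted flow reaches $M_n=e^{\alpha-r}\rho^n$, with $\rho<1$ close to one. Block lengths then increase, so each firm's racing block is longer than the waiting block before it. This is impossible for both firms in any stationary alternation. Each firm's discounted-flow peaks then shrink only by the factor $\rho^2$ per cycle. Since every future target is followed by at least one unit of waiting, for $t$ in block $n$ the remaining payoff satisfies
\[
\int_t^\infty e^{-rs-\Lambda_s}\pi_i(S_s)\,\de s\geq\frac{\rho^2}{1-\rho^2}\,\frac{1-e^{-(r+\beta+\lambda)}}{r+\beta+\lambda}\,M_{n-1}>\frac{M_{n-1}}{r-\alpha+\beta}\geq\frac{\max_j e^{-rt-\Lambda_t}\pi_j(S_t)}{r-\alpha+\beta}.
\]
At every on-path history this beats even the hazard-free bound on any deviation against perpetual racing. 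At $t=0$ it gives both firms more than $1/(r-\alpha+\beta)>P$, and taking $\rho<1$ keeps payoffs finite. Your grim trigger is the right enforcement device, but the cooperative path must be of this escalating kind.
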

The key here is that when $r+\lambda<\alpha$  a firm’s discounted flow rises during its own risky development turn, making sufficiently delayed reciprocal development valuable enough to support.

Finally, we analyze the extent to which the threat of future racing can be used to discipline pacing. For this result we  generalize firms' profit functions to any  positive, continuously
differentiable functions $\pi_i(A_i,A_j)$ such that for each firm $i$ $R_i(S)<\infty$ for every state $S$ and
$P_i(x)<\infty$ for every $x\in\mathbb{R}$,
and at every capability pair $A_i, A_j$,
$\dfrac{\partial\log\pi_i(A_i,A_j)}{\partial A_i}\geq\alpha>0$ and $
\dfrac{\partial\pi_i(A_i,A_j)}{\partial A_j}\leq0.$ 
Here $\alpha$ is a common lower bound on the proportional return to
own development. We also assume a finite
hazard ceiling $\sup_{D>0}\lambda(D) =: \bar \lambda \leq (1-g) \alpha$.\footnote{The argument holds for more general increasing hazards that satisfy the same bound.}

From a state $S=(A_1,A_2,X)$, with $D=\max\{A_1,A_2\}-X$, define
the normalized mutual-racing payoff $R_i(S)$ and the normalized
joint-pacing payoff $P_i(x)$ from a safety tie at capability $x$ by
\begin{align*}
R_i(S)&:=\int_0^\infty
e^{-rs-\int_0^s\lambda(D+(1-g)t)\,\de t}
\frac{\pi_i(A_i+s,A_j+s)}{\pi_i(A_i,A_j)}\,\de s,\\
P_i(x)&:=\int_0^\infty e^{-rs}
\frac{\pi_i(x+gs,x+gs)}{\pi_i(x,x)}\,\de s.
\end{align*}
Under exponential profits, the safety-tie values
reduce to $R_i(x,x,x)=R(0)$ and $P_i(x)=P(0)$.

\begin{proposition}[Pacing is sustainable if and only if it beats racing]
\label{prop:pacing-iff}
Permanent joint pacing from $A_{10}=A_{20}=X_0=A_0$ is an
SPE outcome if and only if for each firm $i$ and each $x \geq A_0$, 
\begin{equation*}
P_i(x)\geq R_i(x,x,x)
\end{equation*}
When this holds, a grim trigger implements pacing:
both firms pace while the history follows the prescribed pacing
path, and race forever after any departure.
\end{proposition}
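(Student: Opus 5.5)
The plan is to prove the two directions separately. Necessity will come from the deviation argument used after \cref{thm:main}, now run for general profits. Sufficiency will come from verifying that the grim-trigger profile is subgame perfect.

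\emph{Necessity.} Suppose permanent joint pacing from $A_{10}=A_{20}=X_0=A_0$ is an SPE outcome under some profile $\sigma$. Along that outcome, at each date $t$ the state is a safety tie at $x=A_0+gt$ for some $x\geq A_0$. In any SPE, firm $i$'s continuation payoff there equals $\pi_i(x,x)P_i(x)$.

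Now let firm $i$ deviate at that history by racing forever. Since speeds are at most one, $A_{js}\leq x+s=A_{is}$, so firm $i$ determines the frontier and $D_s=(1-g)s$, exactly as under mutual racing. Since $\partial\pi_i/\partial A_j\leq0$, the flow profit satisfies $\pi_i(x+s,A_{js})\geq\pi_i(x+s,x+s)$ on every admissible continuation. The hazard is therefore identical to the mutual-racing hazard and the flow is pointwise at least as large. This gives firm $i$ a payoff of at least $\pi_i(x,x)R_i(x,x,x)$ whatever $\sigma_j$ prescribes.

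One point needs care because payoffs are defined as suprema over admissible paths. I will note that the deviation fixes $\dot A_i=1$, so every admissible path has this form. The bound therefore holds path by path, and hence for the sup. SPE then requires $P_i(x)\geq R_i(x,x,x)$ for every $x\geq A_0$.

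\emph{Sufficiency.} Define the grim trigger: pace at speed $g$ while the history coincides with the prescribed pacing path, and race at speed one after any departure. Off-path, the continuation is mutual racing forever, and I need this to be an equilibrium from every state $S$.

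Against a rival who races forever, firm $i$'s feasible deviations are pauses or slowdowns. I will show they are unprofitable using the hazard ceiling $\bar\lambda\leq(1-g)\alpha$ together with $\partial\log\pi_i/\partial A_i\geq\alpha$. The argument is a verification or HJB-type inequality on $W_i(S)=\pi_i(S)R_i(S)$. Slowing firm $i$'s speed by $\delta$ over a short interval costs at least $\alpha\delta W_i$ in lost own-capability growth. The maximal benefit is a reduction in hazard, since $D$ can fall at rate at most $\delta$ when firm $i$ is the leader, and nothing when $j$ leads. Converting that hazard reduction into value requires comparing $\lambda(D)$ with $\lambda(D')$ over the remaining race. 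Because $\lambda$ is nondecreasing and bounded by $\bar\lambda$, the total discounted hazard saving per unit of lost distance is at most $\bar\lambda/(1-g)$ times the value. This is at most $\alpha W_i$, so racing is a best response.

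I would formalize this with a comparison argument: any deviation path is dominated by racing, via integrating the derivative of the payoff along the interpolation between the two capability paths. This off-path step is the main obstacle. Because $\lambda$ may jump at $D=0$, the comparison has to handle paths that straddle the safety boundary, and I would treat the jump $\lambda_0\leq\bar\lambda$ within the same ceiling bound.

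\emph{On-path incentives.} At a pacing history at tie $x$, any deviation triggers mutual racing from the post-deviation state. The best deviation is therefore bounded by the supremum over deviations against a racing rival, which by the off-path argument is at most immediate racing, worth $\pi_i(x,x)R_i(x,x,x)$. Deviations at a measure-zero instant leave the path unchanged under the admissibility rules, so they do not matter.

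The hypothesis $P_i(x)\geq R_i(x,x,x)$ then yields no profitable deviation at any on-path date. Since $x=A_0+gt$ ranges over all values $\geq A_0$, this covers every on-path date and completes the proof.
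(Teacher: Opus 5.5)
Your architecture matches the paper's. Necessity comes from the racing guarantee at each on-path safety tie. Sufficiency comes from a grim trigger whose punishment is mutual racing, with $\bar\lambda\le(1-g)\alpha$ and $\partial\log\pi_i/\partial A_i\ge\alpha$ making racing a best response to racing.

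\textbf{The gap.} That last step, which you flag as the main obstacle, is never proved, and the formalization you propose would fail. Suppose you interpolate $A_i^\theta=\theta A_i^{R}+(1-\theta)A_i^{\mathrm{alt}}$ and try to sign the $\theta$-derivative of the payoff. Take exponential profits and a leader on safety facing a racing rival. Let the alternative pace at speed $g$ until catch-up, and let $\lambda(D)=\bar\lambda(1-e^{-\kappa D/\bar\lambda})$ with $\bar\lambda<(1-g)\alpha$ and $\kappa$ large. At $\theta=0$, the marginal exposure through a pre-catch-up date $s$ is $\kappa(1-g)s^2/2$, while the marginal log-profit gain is only $\alpha(1-g)s$. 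So for large $\kappa$ the payoff decreases in $\theta$ near $0$, even though $\theta=1$ beats $\theta=0$.

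Your HJB calculation, $\partial_{A_i}W_i\ge(\alpha-\bar\lambda/(1-g))W_i$ at the racing continuation, is the right local computation. Turning it into a verification argument, however, needs a chain rule for $W_i=\pi_iR_i$ along arbitrary paths. $W_i$ has kinks at ties and at $D=0$ when $\lambda_0>0$. The assumptions on general $\pi_i$ (only a lower bound on the own log-derivative) do not obviously supply the differentiability you would need.

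\textbf{The missing idea.} The paper uses a date-by-date comparison that needs no regularity. Let $\delta_t:=A_i+t-A_{it}$ be the shortfall relative to racing. It is nondecreasing, so $D_s\ge D+(1-g)s-\delta_t$ for all $s\le t$. A change of variables along the racing path then gives
\[
\Lambda_t^R-\Lambda_t\le\frac1{1-g}\int_{D+(1-g)t-\delta_t}^{D+(1-g)t}\lambda(z)\,\de z\le\frac{\bar\lambda}{1-g}\,\delta_t ,
\]
which covers the jump at zero. Meanwhile racing raises log flow at date $t$ by at least $\alpha\delta_t$. Hence racing's survival-weighted flow dominates at every date, whether $i$ leads, trails, or is tied.

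\textbf{Smaller omissions.}
\begin{enumerate}
\item In the necessity direction, the on-path continuation payoff is a supremum. Its equality with $\pi_i(x,x)P_i(x)$ needs the concatenation property in \cref{spe:lem:paths}.
\item In the sufficiency direction, regularization admits departures even under the grim trigger itself. You therefore have to bound the assigned payoff, not only strategy deviations.
\item You must check that after a departure the rival's regularized speed is one. This holds because the set of on-path histories is closed.
\item A path whose first departure is at $T>t$ is bounded by pacing until $T$ plus $e^{-r(T-t)}\pi_i(X_T,X_T)R_i(X_T,X_T,X_T)$. It is not bounded by immediate racing from $x$.
\end{enumerate}
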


The idea driving \cref{prop:pacing-iff} is that when  $\lambda \leq (1-g) \alpha$, mutual racing is a credible continuation
that attains both firms' lowest equilibrium payoffs at every state. With exponential profits, the payoff comparison reduces to $P(0)\geq R(0)$, or equivalently
$\lambda^{\mathrm{eff}}\geq(1-g)(\alpha-\beta)$ which is exactly the condition for permanent pacing in \cref{thm:main} which implemented it in Markov strategies.

\subsection{Monitoring lags: internal vs deployed models}
We show that low effective hazard rules out pacing, intermediate
risk can sustain pacing when deployment is sufficiently prompt, and
high effective hazard sustains pacing at every deployment lag.

The two firms develop capability at speeds in $[0,1]$.
An internal model becomes public and is deployed after a fixed lag
$L>0$. We assume the
initial history commonly known history is that both firms developed at speed $g$: $A_{is}=X_s+gL, s\in[-L,0], i=1,2,$so $A_0=X_0+gL$.
At date $t$, deployed capabilities determine profits and risk:
\[
\pi_{it}:=e^{\alpha A_{i,t-L}-\beta A_{j,t-L}}
\quad\text{and}\quad
\lambda_t=\lambda(A_{t-L}-X_t).
\]
We retain the hazard profile in \eqref{eq:risk}
and assume a finite ceiling $\bar\lambda:=\sup_{D>0}\lambda(D)$.
The initial histories place deployed capabilities at the safety threshold during
$[0,L]$, so profits over that interval are fixed and hazard is zero.

Firm $i$ observes its own development through date $t$ and
its rival's capability through $\max\{0,t-L\}$; both firms observe
survival and the safety threshold $X_t=X_0+gt$.

\paragraph{Strategies and equilibrium.}
Firm $i$'s private history $h_{it}$ records its own development,
its observations of its rival's capabilities, and survival through $t$.
A private history is feasible if there exists a pair of absolutely
continuous capability histories extending the initial histories, with
speeds in $[0,1]$ almost everywhere, that reproduces the firm's own
development and its observed rival deployments.

Let $H_{it}$ denote firm $i$'s
private history. Firm $i$'s information is described by
the filtration $
\mathcal F_{it}:=\sigma(H_{is}:0\le s\le t).
$ 
A pure strategy $\sigma_i$ specifies a speed
process $(a_{it})_{t\ge0}$ that is jointly measurable, $[0,1]$-valued, and adapted to $(\mathcal F_{it})_{t\ge0}$. As in our baseline model, strategies map to paths via regularization. Write $h_t$ for the full
history on $[-L,t]$ and let $\mathcal P(\sigma;h_0)$ denote the admissible paths from the common initial
history $h_0$, and each firm evaluates a strategy
according to its best admissible path.

\begin{definition}[Pure Strategy  Nash Equilibrium]
\label{def:deployment-nash}
A pure strategy profile $\sigma$ is a \emph{Nash equilibrium} if, for
each firm $i$ and every pure strategy $\sigma_i'$,
$
V_i(\sigma)\ge V_i((\sigma_i',\sigma_j)).
$ A \emph{Nash equilibrium outcome} is an admissible path attaining
both firms' assigned payoffs on the same path, with play stopped at
disaster.
\end{definition}

\begin{definition}[Joint pacing under delayed deployment]
\label{def:deployment-pacing}
An equilibrium outcome sustains \emph{joint pacing} if, conditional on survival, $A_{1,t-L}=A_{2,t-L}=X_t$ at every date $t\ge0$.
\end{definition}

\paragraph{Deployment lags that sustain pacing.}

If a firm races from time $0$, the capability of its deployed technology at time $L + s$ exceeds safety by $(1-g)s$ so 
$\Lambda_s^R:=\int_0^s\lambda((1-g)t)\,\de t$
is the associated cumulative disaster exposure through time $L+s$.

Let $B(L)$ be the normalized
payoff from  racing against a rival that
paces for $L$ units of time and then races forever:
\[
B(L)
=\underbrace{\int_0^L e^{-(r-\alpha+\beta g)s-\Lambda_s^R}\,\de s}
_{\text{rival paces before detection}}
\quad+\underbrace{e^{\beta(1-g)L}\int_L^\infty
e^{-(r-\alpha+\beta)s-\Lambda_s^R}\,\de s}
_{\text{rival races after detection}}.
\]
As before, pacing gives value
$P(0)=1/[r-g(\alpha-\beta)]$.\footnote{Starting from total payoffs, we first subtract profits over $[0,L]$ (that cannot be changed by speed choices from $0$) and then divide by $e^{-rL}$ since current speed choices only affect profits after $L$ units of time, and divide by $e^{(\alpha - \beta)A_0}$ which is the profit scale from initial capabilities. Hence, the pacing payoff under this normalization is $P(0)$.}

\Needspace{18\baselineskip}
\begin{proposition}[Risk and delay determine sustainable pacing]
\label{prop:deployment-lag}
For every lag $L>0$, joint racing is a pure Nash equilibrium outcome.
Pacing is sustainable under the following conditions: 
\begin{enumerate}[label=(\alph*),leftmargin=2em]
\item If $\lambda^{\mathrm{eff}}<(1-g)(\alpha-\beta)$, then no pure
Nash equilibrium sustains joint pacing for any $L>0$.

\item If $\lambda^{\mathrm{eff}}\ge(1-g)(\alpha-\beta)$ and
$\bar\lambda<(1-g)\alpha$, then joint pacing is a pure Nash
equilibrium outcome if and only if $L\le L^*$, where $L^*\ge0$ is
the unique finite solution to $B(L^*)=P(0)$.

\item If $\lambda^{\mathrm{eff}}\ge(1-g)\alpha$, then joint pacing
is a pure Nash equilibrium outcome for every $L>0$.
\end{enumerate}
\end{proposition}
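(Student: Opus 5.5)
The plan is to treat every claim as a comparison between one firm's best unilateral deviation and the prescribed path, with the rival's behaviour held fixed until the lag $L$ reveals the deviation. All payoffs are normalized as in the footnote, so joint pacing is worth $P(0)$.

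\textbf{Joint racing.} Let both firms race regardless of what they observe. A firm facing a rival that always races chooses its own path against a fixed racing opponent. Its problem after date $L$ is the tied racing problem of \cref{thm:main}, shifted in time by the lag. Any slowdown lowers its own deployed capability relative to the rival. Since the rival's racing keeps the frontier on the racing path, the slowdown does not reduce exposure at all. Its deployed profit $e^{\alpha A_i-\beta A_j}$ falls pointwise, so the deviation is unprofitable.

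Should the frontier argument fail, I will instead invoke the Bellman comparison behind part (a)/(b) of \cref{thm:main}. This requires checking $\lambda$ against $(1-g)\alpha$, which is probably why only a weak version is needed: against a racing rival, own racing keeps $D$ pinned to the rival's frontier. In that case the extra own capability raises profit at rate $\alpha$ at no extra hazard.

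\textbf{Part (a).} Under a pacing profile, firm $i$ can race forever from date $0$. Whatever the rival does afterward, the deviator sets the frontier at $(1-g)s$ above safety at deployed time $L+s$. Its flow is at least the tied-racing flow, because the rival's capability is at most its own. The deviation therefore guarantees at least $R(0)$, exactly as in the discussion after \cref{thm:main}. Since $R(0)>P(0)$ in this case, joint pacing is not an equilibrium for any $L$.

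\textbf{Parts (b) and (c).} I use grim-trigger strategies: pace until a rival deployment off the safety path is observed, then race forever. The rival detects a deviation at date $t$ only at $t+L$. The key step is to show that, against this trigger, the best deviation is to race from date $0$ and keep racing. Its value is $B(L)$, since the rival paces on $[0,L]$ of deployed time and races thereafter. The profit exponent $-(r-\alpha+\beta g)$ on the first piece reflects that the rival's deployed capability grows at rate $g$. The factor $e^{\beta(1-g)L}$ records the lead of $(1-g)L$ that the deviator keeps permanently once both race.

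To establish this reduction, I will argue in three steps. First, delaying the start of a deviation only shifts and discounts the same problem, because the environment is stationary after normalization. Second, after detection the post-trigger problem is racing against a racer, which is solved by racing, as in the joint-racing step. Third, before detection, racing at speed $1$ is optimal because the objective is linear in speed and the hazard is at most $\bar\lambda$. Pausing above safety yields the comparison $\bar\lambda$ versus $(1-g)\alpha$, and $\bar\lambda<(1-g)\alpha$ makes continuing to race better than partial pausing. Equilibrium then holds if and only if $B(L)\le P(0)$.

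To show this comparison yields a single threshold, I check three properties of $B$:
\begin{itemize}
\item $B(0)=R(0)\le P(0)$;
\item $B$ is continuous;
\item $B'(L)$ has a sign structure that makes $\{L: B(L)\le P(0)\}$ an interval $[0,L^*]$.
\end{itemize}
Differentiating,
\[
B'(L)=e^{-\Lambda^R_L}e^{-(r-\alpha)L}\bigl[e^{-\beta g L}-e^{-\beta L}e^{\beta(1-g)L}\bigr]+\beta(1-g)\bigl(B(L)-\textstyle\int_0^L\cdots\bigr).
\]
I expect this sign analysis to be the main obstacle. The plan is to show $B'>0$, or at least single crossing, and also $B(L)\to\infty$ or $B(L)>P(0)$ as $L\to\infty$ when $\bar\lambda<(1-g)\alpha$, because then the lead term $e^{\beta(1-g)L}$ against hazard bounded by $\bar\lambda$ grows without bound.

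In part (c), $\lambda^{\mathrm{eff}}\ge(1-g)\alpha$, so the pre-detection step reverses: racing is no longer optimal before detection. The deviator's best response is then to pause back to safety. No deviation beats $P(0)$, since its best payoff is bounded by the value of pacing, so pacing is sustained for every $L$.
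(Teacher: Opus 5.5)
Your arguments for joint racing, for part (a), and for the sufficiency half of (b) follow the paper's route. After the first departure date, you compare any own path with racing while holding the rival's path fixed: a shortfall $\delta_s$ costs $\alpha\delta_s$ in log profit and saves at most $\bar\lambda\delta_s/(1-g)$ in exposure. You then bound the resulting payoff by $B(L)$.

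\textbf{The gap is in part (c).} There you assert that the comparison ``reverses'' and that the deviator's payoff ``is bounded by the value of pacing.'' That bound is the whole content of (c), and nothing reverses pathwise. The hypothesis $\lambda^{\mathrm{eff}}\ge(1-g)\alpha$ is an average condition. It forces $\bar\lambda\ge(1-g)\alpha$, so the shortfall comparison from (b) is unavailable. Yet near safety $\lambda(D)$ can lie below $(1-g)\alpha$, so race-then-pause deviations must be ruled out explicitly. Two ingredients are missing.

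\emph{First, the right benchmark.} A firm racing against a pacing rival faces net discount $r-\alpha+\beta g$, not the tied-race rate $r-\alpha+\beta$ that defines $\lambda^{\mathrm{eff}}$. The relevant benchmark is therefore $B(\infty)=\int_0^\infty e^{-(r-\alpha+\beta g)s-\Lambda^R_s}\,\de s\ge R(0)$, and you must show $B(\infty)\le P(0)$. The paper writes the hypothesis as $\int_0^\infty e^{-(r-\alpha+\beta)s}\bigl[e^{-\Lambda^R_s}-e^{-(1-g)\alpha s}\bigr]\de s\le0$. Since $\Lambda^R_s/s$ is nondecreasing, the bracket changes sign at most once, from positive to negative. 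Reweighting by the increasing factor $e^{\beta(1-g)s}$ therefore keeps the integral nonpositive and gives $B(\infty)\le 1/[r-g(\alpha-\beta)]=P(0)$.

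\emph{Second, every own path.} You must bound all deviation paths, not just perpetual racing. Against a pacing rival, hazard is zero while the deviator is weakly behind. So the deviator solves the auxiliary problem of \cref{lem:scalar-control} with $\eta=\alpha$ and $\rho=r+\beta g$, whose racing value is $B(\infty)$. Because $\rho>\eta$ can fail here, that lemma does not apply as stated; the paper proves the extension \cref{lem:scalar-control-extended} using the finite ceiling $\bar\lambda$. If you keep your grim trigger in (c), you also need one more observation. The triggered rival's path lies weakly above the pacing path, which for any fixed own path lowers profit and weakly raises hazard, so the permanent-pacer bound suffices. The paper avoids this by using unconditional pacing.

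\textbf{Two smaller points concern (b).} The ``only if'' direction must hold for every pure profile, not only your trigger. You need the paper's observation that, under any profile with joint pacing as an outcome, hidden racing by firm $i$ leaves firm $j$'s private history unchanged before $L$. The always-race deviation then admits a path on which $j$ paces on $[0,L]$ and moves at speed at most one afterward, yielding at least $B(L)$. Your part-(a) argument delivers only $R(0)=B(0)$.

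The sign analysis of $B$ is not an obstacle. The bracket in your own formula for $B'$ vanishes identically, leaving $B'(L)=\beta(1-g)e^{\beta(1-g)L}\int_L^\infty e^{-(r-\alpha+\beta)s-\Lambda^R_s}\,\de s>0$. The long-lag behavior is not driven by the lead factor $e^{\beta(1-g)L}$; the post-detection term tends to zero whenever $B(\infty)<\infty$. It is driven instead by monotone convergence, $B(L)\uparrow B(\infty)\ge\int_0^\infty e^{-(r-\alpha+\beta g+\bar\lambda)s}\,\de s>P(0)$, where the last inequality is exactly $\bar\lambda<(1-g)\alpha$.
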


The strategy that sustains pacing in part (b) is as follows: a firm develops at speed \(g\) if it has always paced and all its observations of its rival are consistent with pacing. Otherwise, it prescribes speed \(1\).\footnote{The strategy for (c) is different and simply prescribes speed $g$ after every history. But we will focus our explanation on (b).} Hence, the prospect that (belated) discovery that one has deviated might start an AI race disciplines incentives to pace along the safety threshold. This is also why pacing becomes more difficult to sustain as the lag grows: firms can accelerate to speed $1 > g$ undetected for $L$ units of time and, in so doing, build up a lead. By the time this deviation is detected, its rival might race in response but it is already too late: the deviating firm would have already built up an enduring lead of $L \cdot (1-g)$, and enjoys elevated profits until disaster strikes.

\Cref{prop:deployment-lag} is stated for Nash equilibrium, making
the impossibility of sustaining pacing in part (a) stronger.
The construction in part (b) also makes racing optimal after
a detected departure, independently of beliefs about hidden rival
development. Fix any feasible private history at which racing is
prescribed, any consistent hidden rival history, and any feasible
continuation of the rival's development. Holding that rival path
fixed, sacrificing one unit of capability costs $\alpha$ units of
log profit but saves at most $\bar\lambda/(1-g)$ units of cumulative
hazard. Since $\bar\lambda/(1-g)<\alpha$, racing yields at least as
much survival-adjusted continuation payoff as any slower 
development path.

\cref{prop:deployment-lag} implies that firms can benefit from committing to revealing internal progress before deployment. This might reflect, for instance, the use of a third-party `embedded evaluator' like Model Evaluation and Threat Research (METR) that tracks the state of internal R\&D. Doing so in effect shortens or eliminates the the lag between internal development and public observation, which  allows pacing to be sustained for a wider set of parameters: when the capabilities of  models that have not yet been deployed are  public, makes deviating from pacing to racing is less appealing since doing so would be quickly met with racing.

\setlength{\bibsep}{0pt}
\bibliographystyle{abbrvnat}
\bibliography{ref}

\appendix
\crefalias{section}{appendix}
\crefalias{subsection}{appendix}
\normalsize
\section{Proofs}\label{sec:proof}

\subsection{Histories and verification}\label{sec:proof-preliminaries}
\label{sec:history-paths}

\paragraph{Admissible continuations after histories.}
We use the histories and pure strategies of \cref{sec:strategies},
writing the prescribed speed pair at history $h_t$ as $\sigma(h_t)$.
To extend regularization to histories, we keep each history constant after
its terminal date\footnote{That is, $h_t(t') = S_{t'}$ for $t' \leq t$ and $h_t(t') = S_{t'}$ for $t' \geq t$.} and measure distance by
\begin{equation*}
d(h_t,\widetilde h_{t'})
:=|t-t'|+\sup_{s\geq0}
\bigl\|h_t(s\wedge t)-\widetilde h_{t'}(s\wedge t')\bigr\|.
\end{equation*}
where $\|\cdot \|$ is the Euclidian norm on $\mathbb{R}^3$.  The supremum is finite because the two stopped histories are constant
beyond their terminal dates.  Define the joint Krasovskii set by
\begin{equation*}\tag{HK}\label{spe:eq:regularization}
K[\sigma](h_t)
:=\bigcap_{\varepsilon>0}\overline{\operatorname{co}}
\bigl\{\sigma(\widetilde h_{t'}):
 d(h_t,\widetilde h_{t'})<\varepsilon\bigr\}.
\end{equation*}

An admissible continuation after $h_t$ extends that fixed history by an
absolutely continuous state path satisfying, almost everywhere for $s>t$,
\begin{equation*}\tag{HA}\label{spe:eq:path}
(\dot A_{1s},\dot A_{2s})\in K[\sigma](h_s)
\quad\text{and}\quad
\dot X_s=g.
\end{equation*}
Write $\mathcal P(\sigma;h_t)$ for these continuations, where $h_s$
includes the original history and the extension through date $s$. 

\begin{lemma}[Regularity on histories]\label{lem:nice}
For every pure strategy profile $\sigma$, the correspondence
$K[\sigma]$ defined by \cref{spe:eq:regularization} has nonempty compact
convex values, is globally bounded, and is upper semicontinuous in
the stopped-history metric.
\end{lemma}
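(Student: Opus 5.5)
The proof verifies the four properties one at a time, working directly from the definition \cref{spe:eq:regularization}. For each $\varepsilon>0$ put
\[
M_\varepsilon(h_t):=\overline{\operatorname{co}}\bigl\{\sigma(\widetilde h_{t'}):d(h_t,\widetilde h_{t'})<\varepsilon\bigr\}\subseteq[0,1]^2 .
\]
Then $K[\sigma](h_t)=\bigcap_{\varepsilon>0}M_\varepsilon(h_t)$.

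\emph{Nonemptiness, compactness, convexity, boundedness.} The set defining $M_\varepsilon(h_t)$ contains $\sigma(h_t)$ itself, because $d(h_t,h_t)=0$. So each $M_\varepsilon(h_t)$ is nonempty. Each is also closed and convex by construction, and lies in the compact convex square $[0,1]^2$ because strategies take values in $[0,1]$. The sets decrease as $\varepsilon\downarrow0$. A decreasing family of nonempty compact sets has nonempty compact intersection; in fact $\sigma(h_t)$ lies in every member, so it lies in the intersection. An intersection of convex sets is convex. Finally, $K[\sigma](h_t)\subseteq[0,1]^2$ for every $h_t$, which gives the uniform bound: every allowed speed pair has Euclidean norm at most $\sqrt2$.

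\emph{Upper semicontinuity.} I will show that the graph is closed. Since the values sit inside the fixed compact set $[0,1]^2$, a closed graph is equivalent to upper semicontinuity. Take histories $h^n\to h$ in the stopped-history metric $d$, and points $v^n\in K[\sigma](h^n)$ with $v^n\to v$. Fix $\varepsilon>0$ and choose $n$ large enough that $d(h^n,h)<\varepsilon/2$.

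The key step is the triangle inequality for $d$. Any $\widetilde h$ with $d(h^n,\widetilde h)<\varepsilon/2$ satisfies $d(h,\widetilde h)<\varepsilon$. Hence the generating set of $M_{\varepsilon/2}(h^n)$ is contained in that of $M_\varepsilon(h)$, and so
\[
v^n\in K[\sigma](h^n)\subseteq M_{\varepsilon/2}(h^n)\subseteq M_\varepsilon(h).
\]
Because $M_\varepsilon(h)$ is closed, letting $n\to\infty$ gives $v\in M_\varepsilon(h)$. As $\varepsilon$ was arbitrary, $v\in K[\sigma](h)$.

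An equivalent formulation uses neighborhoods. Given an open $U\supseteq K[\sigma](h)$, compactness and the nested structure give some $\varepsilon$ with $M_\varepsilon(h)\subseteq U$. The inclusion above then yields $K[\sigma](h')\subseteq U$ whenever $d(h,h')<\varepsilon/2$.

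\emph{Main obstacle.} The only nontrivial point is checking that $d$ really is a metric, or at least a pseudometric satisfying the triangle inequality, on stopped histories. It is the sum of $|t-t'|$ and a sup-distance between the paths extended constantly beyond their terminal dates. Each of these two terms satisfies the triangle inequality: for the second, compare the stopped paths pointwise in $s$ and then take suprema. So their sum does as well.

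Two further points need confirming. First, the supremum is finite. This holds because the stopped paths are continuous on compact time intervals and constant beyond their terminal dates. Second, nothing in the argument requires measurability or continuity of $\sigma$: the regularization convexifies over neighborhoods precisely so that upper semicontinuity holds for an arbitrary bounded map $\sigma$. Once these are settled, the lemma delivers the hypotheses of Filippov's existence theorem for the history-dependent differential inclusion \cref{spe:eq:path}.
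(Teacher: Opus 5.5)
Your proposal is correct and follows essentially the same route as the paper. The value properties come from nested closed convex hulls inside $[0,1]^2$, and upper semicontinuity comes from a closed-graph argument using the $\varepsilon/2$ inclusion $K[\sigma](h^n)\subseteq M_\varepsilon(h)$. Your explicit checks — that $\sigma(h_t)$ lies in every hull, and that the stopped-history distance $d$ satisfies the triangle inequality — only fill in details the paper leaves implicit.
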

\begin{proof}
For $\varepsilon>0$, let $K_\varepsilon[\sigma](h)$ be the closed
convex hull of the prescriptions at feasible histories within
$\varepsilon$ of $h$. It is a nonempty compact convex subset of
$[0,1]^2$. As $\varepsilon$ decreases, these sets are nested, and
$K[\sigma](h)=\bigcap_{n\geq1}K_{1/n}[\sigma](h)$. Thus
$K[\sigma](h)$ is nonempty, compact, and convex. The inclusion
$K[\sigma](h)\subseteq[0,1]^2$ gives global boundedness.

It remains to prove upper semicontinuity. Suppose $h^n\to h$ and
$v^n\to v$, with $v^n\in K[\sigma](h^n)$. For any $\varepsilon>0$,
eventually $d(h^n,h)<\varepsilon/2$, and hence
$K[\sigma](h^n)\subseteq K_\varepsilon[\sigma](h)$. Closedness then
gives $v\in K_\varepsilon[\sigma](h)$. Since this holds for every
$\varepsilon>0$, $v\in K[\sigma](h)$, so the correspondence has a
closed graph. A closed-graph correspondence whose values lie in the
common compact set $[0,1]^2$ is upper semicontinuous.
\end{proof}

\begin{lemma}[Paths after histories]\label{spe:lem:paths}
Fix a pure strategy $\sigma$ and a feasible history $h_t$.
\begin{enumerate}[label=(\roman*),leftmargin=2em,itemsep=2pt,parsep=0pt]
\item \emph{Existence.} There is an admissible continuation after $h_t$:
$
\mathcal P(\sigma;h_t)\neq\varnothing.
$\item \emph{Restriction and concatenation.} If
$S\in\mathcal P(\sigma;h_t)$, then, for every $s\geq t$,
$
S|_{[s,\infty)}\in\mathcal P(\sigma;h_s),$
where $h_s$ is the full history generated by $h_t$ and $S$.
If $\widetilde S\in\mathcal P(\sigma;h_s)$, then the joined path
\[
\widehat S_v:=
\begin{cases}
S_v,&t\leq v\leq s,\\
\widetilde S_v,&v>s
\end{cases}
\]
satisfies $\widehat S\in\mathcal P(\sigma;h_t)$.
\item \emph{Consistency with Markov paths.} If $\sigma$ is Markov,
then history regularization equals state regularization:
$
K[\sigma](h_s)=K[\sigma](S_s)$, and hence \cref{spe:eq:path} coincides with \cref{eq:generalized-path}.

\item \emph{Deviations against a Markov opponent.} If $\sigma_j$ is
Markov, then, for every history-dependent deviation $\sigma_i'$,
\[
S\in\mathcal P((\sigma_i',\sigma_j);h_t)
\quad\Longrightarrow\quad
\begin{cases}
\dot A_{is}=a_i(s),\\
\dot A_{js}\in K[\sigma_j](S_s),\\
\dot X_s=g,
\end{cases}
\quad\text{for almost every }s>t,
\]
for some measurable speed path $a_i:[t,\infty)\to[0,1]$.
These are the conditions in \cref{eq:generalized-deviation}.
\end{enumerate}
\end{lemma}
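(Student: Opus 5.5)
We treat the four parts in turn, using \cref{lem:nice} for the regularity properties of $K[\sigma]$ on histories.

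\textbf{Existence (i).} The plan is to reduce to Filippov's existence theorem, as in the Markov case, but on a path space. Fix $h_t$ and consider continuations on $[t,\infty)$, identifying each continuation with its stopped history. By \cref{lem:nice}, the map $s\mapsto K[\sigma](h_s)$ has nonempty compact convex values, is bounded, and is upper semicontinuous in the stopped-history metric. Since histories are path-dependent rather than finite-dimensional, we would not quote Filippov directly but rerun its Euler-polygon argument. For step size $1/n$, build a piecewise-linear extension that on each interval $[t+k/n,t+(k+1)/n]$ uses a fixed selection from $K[\sigma]$ at the history reached at the left endpoint. These polygons are uniformly $1$-Lipschitz in the capability coordinates with $\dot X=g$, so Arzel\`a--Ascoli gives a subsequence converging uniformly on compacts. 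Uniform convergence of paths implies convergence of the stopped histories $h^n_s\to h_s$ in $d$, uniformly in $s$ on compacts. Upper semicontinuity together with convexity and closedness (the standard Mazur/convergence theorem for differential inclusions) then places the limit derivative in $K[\sigma](h_s)$ a.e. A diagonal argument over $[t,t+T]$, $T\to\infty$, yields a global continuation. The main obstacle is this step: checking that the uniform-on-compacts topology controls the metric $d$, which includes a sup over all $s$ and the terminal-date term $|t-t'|$. This holds because stopped histories beyond date $T$ are constant, so only a compact time window matters, and because the terminal dates of the histories being compared coincide.

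\textbf{Restriction and concatenation (ii).} These are immediate, since \cref{spe:eq:path} is a pointwise a.e. condition in which the regularized set at date $v$ depends only on $h_v$. For restriction, the history generated through $v\ge s$ is the same whether we start from $h_t$ or from $h_s$, so the inclusion holds a.e. on $(s,\infty)$. For concatenation, the joined path is absolutely continuous because it is continuous at $s$ and absolutely continuous on each piece. Its derivative satisfies the inclusion a.e. on $(t,s)$ by the first path and on $(s,\infty)$ by the second, where the histories agree because $\widetilde S$ extends $h_s$.

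\textbf{Consistency with Markov paths (iii).} Suppose $\sigma$ is Markov, so $\sigma(\widetilde h_{t'})=\sigma(\widetilde S_{t'})$. If $d(h_s,\widetilde h_{t'})<\varepsilon$, then evaluating the sup at large arguments gives $\|S_s-\widetilde S_{t'}\|<\varepsilon$, so $K[\sigma](h_s)\subseteq K[\sigma](S_s)$. Conversely, any state $S'$ with $\|S'-S_s\|<\varepsilon$ can be realized as the terminal state of a history close to $h_s$. To do this, append to $h_s$ a short linear segment reaching $S'$. The $X$-coordinate must advance at rate $g$, so we adjust the terminal date by $(X'-X_s)/g$, which is small, and let the capability coordinates move with slopes in $[0,1]$ if feasible. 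If a coordinate must decrease, we instead perturb the tail of $h_s$ slightly downward. This segment-construction is the fiddly part, since we must keep the path feasible while achieving the target state. Given it, the opposite inclusion holds, and the inclusions agree.

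\textbf{Deviations against a Markov opponent (iv).} Take $S\in\mathcal P((\sigma_i',\sigma_j);h_t)$. Its $i$-coordinate is absolutely continuous with slope in $[0,1]$, so $a_i:=\dot A_i$ is a measurable speed path. Projecting the joint set onto coordinate $j$ gives a subset of the Krasovskii set of $\sigma_j$ alone, since projection commutes with closed convex hulls and with nested intersections. By the argument of (iii), applied to the Markov component $\sigma_j$, that set equals $K[\sigma_j](S_s)$. Hence $\dot A_{js}\in K[\sigma_j](S_s)$ a.e., and $\dot X_s=g$ holds by construction.
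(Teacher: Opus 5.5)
\textbf{Parts (i), (ii) and (iv).} These are sound and follow the paper. In (i) you use the same Euler polygons, Arzel\`a--Ascoli and diagonal extraction. You close with Mazur's lemma and upper semicontinuity, where the paper uses rational support functions, reverse Fatou and a separating hyperplane; the two closings are interchangeable. One small correction: the terminal dates being compared do not coincide. The speed on $[t+k/n,t+(k+1)/n)$ is selected at $h^n_{t+k/n}$, not at $h^n_{t'}$. You therefore need the extra $O(1/n)$ term that comes from $|t-t'|$ and the Lipschitz bound, as in the paper's estimate of $d(h^n_{\eta_n(t')},h_{t'})$. Part (iv) uses only the easy inclusion in (iii), and you prove that inclusion correctly.

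\textbf{The gap: the reverse inclusion $K[\sigma](S_s)\subseteq K[\sigma](h_s)$ in (iii).} You reach a nearby state $S'$ by keeping the initial state of $h_s$ and modifying its tail. The terminal states reachable this way do not form a neighborhood of $S_s$.
\begin{itemize}
\item Suppose firm $i$ raced on all of $[0,s]$. Every feasible history from the same initial state whose safety coordinate is $X_s$ must end at date $s$, with $A_i\leq A_{i0}+s=A_{is}$. So no state with $A_i'>A_{is}$ and $X'=X_s$ is reachable.
\item Suppose firm $i$ paused throughout. Capabilities are nondecreasing, so $A_i$ can never fall below $A_{i0}=A_{is}$, and perturbing the tail downward is impossible.
\end{itemize}

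This is not just a fiddly construction: with initial states held fixed, the claimed equality is false. Take $\sigma_j\equiv1$, and let $\sigma_i=1$ on the half-space $A_i\leq A_{i0}+(X-X_0)/g$ and $0$ off it. At the end of the history on which firm $i$ races throughout, state regularization gives $[0,1]\times\{1\}$. But every feasible history from $S_0$ ends inside that half-space, so the fixed-initial-state set is $\{(1,1)\}$.

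The missing idea is that \cref{spe:eq:regularization} ranges over feasible histories with arbitrary initial states. Translate all of $h_s$ by the constant vector $S'-S_s$:
\begin{itemize}
\item the derivatives are unchanged, so the history is still feasible and $X$ still advances at rate $g$;
\item the terminal date is unchanged;
\item the stopped-history distance to $h_s$ is exactly $\|S'-S_s\|$.
\end{itemize}
So every prescription in an $\varepsilon$ state neighborhood also occurs in the $\varepsilon$ history neighborhood, which gives the reverse inclusion.
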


\begin{proof}[Proof of \cref{spe:lem:paths}]

\emph{(i) Existence.}
We follow the approximation and compactness argument of
\citet[proof of Theorem 1.1]{Haddad1984}. By \cref{lem:nice}, the history correspondence
$h \mapsto K[\sigma](h)$ has nonempty compact convex values,
is globally bounded, and is upper semicontinuous.

We construct a continuation directly fixing the history $h_t$. For each positive integer $n$, partition the future at
dates $t+k/n$, $k=0,1,\ldots$. Denote the constructed state path by $S^n$
and its development speed pair by $a^n$. Start by setting
$S^n_{t'}=S_{t'}$ for $0\leq t'\leq t$. Once the path has been
constructed through $t+k/n$, let $h^n_{t+k/n}$ be its full history
and choose $
a^n(t+k/n)\in K[\sigma](h^n_{t+k/n}).$
Extend the state path to the next partition date by
\[
S^n_{t'}=S^n_{t+k/n}
+\left(t'-t-\frac{k}{n}\right)
\begin{pmatrix}
a^n_1(t+k/n)\\
a^n_2(t+k/n)\\
g
\end{pmatrix},
\qquad t'\in\left[t+\frac{k}{n},t+\frac{k+1}{n}\right].
\]
Set $a^n(t')=a^n(t+k/n)$ on the corresponding half-open interval
$[t+k/n,t+(k+1)/n)$. Repeating this construction for
$k=0,1,\ldots$ defines the entire path and its piecewise constant
development speeds. Each path is absolutely continuous on every finite interval: it agrees
with the feasible history before $t$ and has finitely many linear
pieces on each bounded interval after $t$. Since the chosen development
speeds lie in $[0,1]^2$, every constructed history is feasible and
\[
\|\dot S^n_{t'}\|^2
=(\dot A^n_{1t'})^2+(\dot A^n_{2t'})^2+g^2
\leq 2+g^2
\quad\text{for almost every }t'\geq0.
\]
Here the same bound holds before $t$ by feasibility of the fixed past.
Absolute continuity therefore gives, for every $0\leq a<b$,
\[
\|S^n_b-S^n_a\|
\leq\int_a^b\|\dot S^n_{t'}\|\,\de t'
\leq\sqrt{2+g^2}\,(b-a).
\]
Thus the paths share a Lipschitz constant independent of $n$.

Fix a finite horizon $T>t$. Since all paths start at the same state
$S_0$, the preceding inequality implies $
\sup_{0\leq t'\leq T}\|S^n_{t'}\|
\leq\|S_0\|+\sqrt{2+g^2}\,T,$  so the restrictions to $[0,T]$ are uniformly bounded and
equicontinuous. By Arzel\`a--Ascoli, they are relatively compact in
$\mathcal C([0,T],\mathbb R^3)$ with the uniform norm.
Extract a subsequence converging uniformly on \([0,t+1]\), then from it a further subsequence converging uniformly on \([0,t+2]\), and continue in this way. The diagonal subsequence converges uniformly on $[0,t+m]$ for every positive integer $m$, and hence locally uniformly on $[0,\infty)$; these limits therefore define a single continuous path $S$ on $[0,\infty)$.Relabel this subsequence with $n$.
The limit inherits the common Lipschitz bound and agrees with $h_t$.
Passing to the limit in the capability and safety increments gives,
for $t\leq a<b$, $
0\leq A_{ib}-A_{ia}\leq b-a$ and $X_b-X_a=g(b-a),$ so  the limit is feasible. It remains to show that the limiting path’s development speeds belong to $K[\sigma](h_{t'})$ at almost every date, where $h_{t'}$ is the full history generated by that path.

Let $\eta_n(t')$ be the partition date immediately preceding $t'$ (inclusive of $t'$ whenever it is a partition date). For every finite
$T>t$ and $t'\in[t,T]$, the stopped histories satisfy
\[
d(h^n_{\eta_n(t')},h_{t'})
\leq \sup_{0\leq t''\leq T}\|S^n_{t''}-S_{t''}\|
     +\frac{1+\sqrt{2+g^2}}{n}
\longrightarrow0.
\]
By construction, $a^n(t')\in K[\sigma](h^n_{\eta_n(t')})$.
Fix $q\in\mathbb Q^2$ and a date $t'$. Choose a subsequence along
which $q\cdot a^n(t')$ converges to its limsup. Compactness of
$[0,1]^2$ gives a further subsequence along which $a^n(t')$ converges
to some vector $a$. Since
$h^n_{\eta_n(t')}\to h_{t'}$, the closed-graph property implies that
$a\in K[\sigma](h_{t'})$. It follows that
$\limsup_{n\to\infty}q\cdot a^n(t')
\leq\max_{v\in K[\sigma](h_{t'})}q\cdot v$.

To verify measurability of the right-hand side, define the support
function $H_q(h):=\max_{v\in K[\sigma](h)}q\cdot v$. This function is
upper semicontinuous on the space of stopped histories. The common
Lipschitz bound gives
$d(h_s,h_{s'})\leq(1+\sqrt{2+g^2})|s-s'|$, so $s\mapsto h_s$ is
continuous in the stopped-history metric. Therefore
$s\mapsto H_q(h_s)$ is upper semicontinuous and hence measurable.
The functions $q\cdot a^n(t')$ and $H_q(h_{t'})$ are uniformly
bounded. Applying
reverse Fatou on any interval $[a,b]\subseteq[t,T]$ and using uniform
convergence of the capability paths gives
\[
q\cdot
\begin{pmatrix}A_{1b}-A_{1a}\\ A_{2b}-A_{2a}\end{pmatrix}
=\lim_{n\to\infty}\int_a^b q\cdot a^n(t')\,\de t'
\leq\int_a^b\max_{v\in K[\sigma](h_{t'})}q\cdot v\,\de t'.
\]
Since this integral inequality holds on every subinterval, Lebesgue differentiation gives the corresponding inequality between the integrands almost everywhere: $q\cdot(\dot A_{1t'},\dot A_{2t'})
\leq\max_{v\in K[\sigma](h_{t'})}q\cdot $ for almost every $t'\in[t,T].$  
Since $\mathbb Q^2$ is countable, these inequalities hold simultaneously outside a null set. At each remaining date, both sides are continuous in $q$, so density extends the inequality to every $q\in\mathbb R^2$. Thus, in every direction $q$,  the speed pair projects no farther than the allowed set does. By \cref{lem:nice}, \(K[\sigma](h_{t'})\) is closed and convex. The separating-hyperplane theorem therefore implies that the realized speed pair belongs to this set: 
$
(\dot A_{1t'},\dot A_{2t'})\in K[\sigma](h_{t'})$ for almost every $t'\in[t,T].$ Taking a countable sequence of horizons increasing to infinity proves
\cref{spe:eq:path} on the entire continuation. This establishes
existence.

\emph{(ii) Restriction and concatenation.}
Fix $S\in\mathcal P(\sigma;h_t)$ and a later date $t'\geq t$.
By admissibility, there is a null set $N\subset(t,\infty)$ such that
$
(\dot A_{1t''},\dot A_{2t''})\in K[\sigma](h_{t''}$ and $\quad \dot X_{t''}=g$ for every $t''>t$ $N.$ 
For restriction, retain $h_{t'}$ as the fixed past. Joining this past
to $S|_{[t',\infty)}$ reproduces the original full history $h_{t''}$
at every date $t''\geq t'$. The displayed conditions therefore hold
on $(t',\infty)$ outside the null set $N\cap(t',\infty)$.
Absolute continuity is preserved under restriction, so
$S|_{[t',\infty)}\in\mathcal P(\sigma;h_{t'})$.

For concatenation, take
$\widetilde S\in\mathcal P(\sigma;h_{t'})$ and form $\widehat S$
as in the statement, with capability coordinates $\widehat A_1,
\widehat A_2$ and safety coordinate $\widehat X$.
Let $\widetilde h_{t''}$ denote the full history
of the appended continuation, including the fixed past $h_{t'}$,
and let $\widehat h_{t''}$ denote the full history of the joined path.
These histories satisfy
\[
\widehat h_{t''}=
\begin{cases}
h_{t''},&t\leq t''\leq t',\\
\widetilde h_{t''},&t''>t'.
\end{cases}
\]
The pieces agree at the joining date, since
$\widetilde S_{t'}=S_{t'}$. Thus $\widehat S$ is continuous and,
being absolutely continuous on each side of $t'$, is absolutely
continuous on every finite interval. Its derivative equals that of
$S$ before $t'$ and that of $\widetilde S$ after $t'$, wherever those
derivatives exist.

Let $\widetilde N\subset(t',\infty)$ be a null set outside which
the appended continuation satisfies \cref{spe:eq:path}.
The derivative and history identities give
$
(\dot{\widehat A}_{1t''},\dot{\widehat A}_{2t''})
\in K[\sigma](\widehat h_{t''})$ and $ \dot{\widehat X}_{t''}=g$ 
for every $t''>t$ outside $N\cup\widetilde N\cup\{t'\}$.
This union is a null set, and the joined path extends $h_t$.
Hence $\widehat S\in\mathcal P(\sigma;h_t)$.

\emph{(iii) Consistency with Markov paths.}
Fix a history $h_{t'}$ ending at $S_{t'}$. For any nearby history
$\widetilde h_{t''}$, its terminal state satisfies
$
\|\widetilde S_{t''}-S_{t'}\|
\leq d(h_{t'},\widetilde h_{t''}).$ 
If the profile is Markov, its prescription at that history is its
prescription at $\widetilde S_{t''}$. Taking closed convex hulls and then intersecting over neighborhood sizes gives
$K[\sigma](h_{t'})\subseteq K[\sigma](S_{t'})$.
Conversely, for any state $S'$ near $S_{t'}$, translate every state of
$h_{t'}$ by the constant vector $S'-S_{t'}$. The translated history is
feasible, ends at $S'$, and has stopped-history distance
$\|S'-S_{t'}\|$ from $h_{t'}$. Such translations are allowed because
\cref{spe:eq:regularization} does not fix the initial state of nearby
histories. Every prescription in a state neighborhood is therefore
also obtained in the corresponding history neighborhood. Taking
closed convex hulls and intersections gives the reverse inclusion,
and hence $K[\sigma](h_{t'})=K[\sigma](S_{t'})$.

\emph{(iv) Deviations against a Markov opponent.}
Fix a history-dependent deviation $\sigma_i'$ and a Markov opponent
$\sigma_j$. The endpoint bound in part (iii) implies that the
opponent's coordinate of every prescription in an $\varepsilon$
history neighborhood belongs to its prescriptions in the
$\varepsilon$ state neighborhood. Projection preserves convex
combinations and limits. Taking closed convex hulls and then
intersecting over $\varepsilon>0$ therefore gives
$
\operatorname{proj}_j K[(\sigma_i',\sigma_j)](h_{t'})
\subseteq K[\sigma_j](S_{t'}).
$
Along any admissible deviation path, set $a_i(t')=\dot A_{it'}$ wherever
the derivative exists and set $a_i(t')=0$ on the remaining null set.
This defines a measurable speed path in $[0,1]$. The projection
inclusion gives $\dot A_{jt'}\in K[\sigma_j](S_{t'})$ almost everywhere,
while feasibility gives $\dot X_{t'}=g$. These are exactly the conditions
in \cref{eq:generalized-deviation}.
\end{proof}

For verifying that a profile is an MPE, we develop a test using paths instead of r alternative strategies.  Fix any measurable speed path
$a_i:[0,\infty)\to[0,1]$ and consider every absolutely continuous state path starting at $S_0=S$ that satisfies, almost everywhere,
\begin{equation*}\tag{D}\label{eq:generalized-deviation}
\dot A_{it}=a_i(t),\text{  }
\dot A_{jt}\in K[\sigma_j](S_t), \text{ and } 
\dot X_t=g. \end{equation*}
Condition \cref{eq:generalized-deviation} restricts the resulting state path, given the chosen speed path and the opponent's strategy;
it imposes no optimality requirement on $a_i$.  By \cref{spe:lem:paths}, every history-dependent deviation against a Markov
opponent satisfies \cref{eq:generalized-deviation} with its realized own speed as the control.

\begin{lemma}[Verification with several paths]
\label{lem:generalized-verification}
Suppose finite candidate values $\widehat V_i(S)$ satisfy the following conditions at every
state $S$:
\begin{enumerate}[label=(\roman*),leftmargin=2em]
\item \emph{Attaining the candidate payoffs.} One common admissible path under $\sigma$ gives each firm $\widehat V_i(S)$.
\item \emph{No profitable deviation.} Every path from $S$ satisfying
\cref{eq:generalized-deviation}, for every measurable control $a_i$,
gives firm $i$ at most $\widehat V_i(S)$.
\end{enumerate}
Then $V_i(\sigma,S)=\widehat V_i(S)$, and $\sigma$ is an \MPE{}.
\end{lemma}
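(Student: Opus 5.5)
The argument is a sandwich: the candidate values are bounded above by $V_i(\sigma,S)$ through condition (i), and bounded below by every deviation payoff through condition (ii). Then \cref{lem:markov-best-response} together with part (iv) of \cref{spe:lem:paths} turns the resulting state-by-state inequalities into the equilibrium requirement at every history.

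\textbf{Step 1: identify the value.} Fix a state $S$.

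By (i) there is a common admissible path in $\mathcal P(\sigma;S)$ that yields $\widehat V_i(S)$ for each firm. Since $V_i(\sigma,S)$ is a supremum over $\mathcal P(\sigma;S)$, this gives $V_i(\sigma,S)\ge \widehat V_i(S)$.

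For the reverse inequality, take any admissible path $(S_t)\in\mathcal P(\sigma;S)$. Almost everywhere its derivative pair lies in $K[\sigma](S_t)$, and the coordinate projection of $K[\sigma](S_t)$ onto firm $j$ lies in $K[\sigma_j](S_t)$. This holds because projection commutes with closed convex hulls of finite-dimensional bounded sets and with nested intersections.

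So the path satisfies \cref{eq:generalized-deviation}, with the control $a_i(t):=\dot A_{it}$, set to $0$ on the null set where the derivative fails to exist. By (ii), the path gives firm $i$ at most $\widehat V_i(S)$. Taking the supremum yields $V_i(\sigma,S)=\widehat V_i(S)$.

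\textbf{Step 2: no profitable deviation at states.} Let $\sigma_i'$ be any alternative strategy, Markov or history-dependent. By \cref{spe:lem:paths}(iv), every path in $\mathcal P((\sigma_i',\sigma_j);S)$ satisfies \cref{eq:generalized-deviation} for some measurable own-speed path. By (ii) its payoff is at most $\widehat V_i(S)=V_i(\sigma,S)$. Taking the supremum gives $V_i((\sigma_i',\sigma_j),S)\le V_i(\sigma,S)$ for every $S$. By \cref{lem:markov-best-response}, this state-by-state condition already characterizes an \MPE{}.

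\textbf{Step 3: the main obstacle, passing from states to histories.} The difficulty is that \cref{def:mpe} quantifies over feasible histories $h_t$, not states. To bridge this:

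\begin{itemize}[leftmargin=2em]
\item Payoffs are time-invariant after normalization. The continuation payoff after $h_t$ depends on the future path only through the integrand $e^{-r(s-t)-(\Lambda_s-\Lambda_t)}\pi_i(S_s)$, which depends on the states alone.
\item By \cref{spe:lem:paths}(iii), the admissible continuations under the Markov profile $\sigma$ after $h_t$ are exactly the time shifts of $\mathcal P(\sigma;S_t)$. Hence $V_i(\sigma,h_t)=V_i(\sigma,S_t)$.
\item For a history-dependent deviation, \cref{spe:lem:paths}(iv) applied at $h_t$ shows that every continuation satisfies \cref{eq:generalized-deviation} from $S_t$, after shifting time. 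Condition (ii) at $S_t$ therefore bounds its payoff by $\widehat V_i(S_t)$.
\end{itemize}

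One detail needs checking: the argument of (iv) in \cref{spe:lem:paths} must go through with the initial history fixed at an arbitrary $h_t$. The endpoint bound used there only involves terminal states, so I expect this to be routine.

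Combining the three steps gives $V_i((\sigma_i',\sigma_j),h_t)\le V_i(\sigma,h_t)$ at every feasible history, so $\sigma$ is an \MPE{}.
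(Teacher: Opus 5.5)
Your proposal is correct and follows the paper's proof. Condition (i), the projection inclusion $\operatorname{proj}_j K[\sigma](S)\subseteq K[\sigma_j](S)$, and condition (ii) pin down $V_i(\sigma,S)=\widehat V_i(S)$; then \cref{spe:lem:paths}(iv), together with the fact that Markov continuation payoffs depend only on the current state, gives $V_i((\sigma_i',\sigma_j),h_t)\leq\widehat V_i(S_t)=V_i(\sigma,h_t)$ at every history. The detail you flag is already covered, since \cref{spe:lem:paths}(iv) is stated for an arbitrary feasible history $h_t$, and your Step 2 route through \cref{lem:markov-best-response} is valid but redundant given Step 3.
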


\begin{proof}
We first identify the profile's payoff, then bound strategy deviations.
Condition (i) gives $V_i(\sigma,S)\geq\widehat V_i(S)$.
Every joint admissible path satisfies \cref{eq:generalized-deviation}
with $a_i=\dot A_i$, because projection of the joint Krasovskii set
onto the opponent's coordinate lies in $K[\sigma_j](S)$.
Condition (ii) therefore bounds every such path by $\widehat V_i(S)$,
giving equality.

Now fix a feasible history $h_t$ and a deviation $\sigma_i'$ that may
depend on the full history. By \cref{spe:lem:paths}, every continuation
under $(\sigma_i',\sigma_j)$ satisfies \cref{eq:generalized-deviation}
from $S_t$. Condition (ii) bounds its payoff by $\widehat V_i(S_t)$.
Taking suprema and using the fact that Markov continuation payoffs
depend only on the current state gives $
V_i((\sigma_i',\sigma_j),h_t)\leq\widehat V_i(S_t)
=V_i(\sigma,h_t),$ which establishes \cref{def:mpe}.
\end{proof}

\begin{lemma}[Racing guarantee]\label{lem:race-baseline}
If a firm starts as a weak leader at distance $D$ above safety and then races,
it earns at least $R(D)$ per unit of its initial flow against every
admissible response.
\end{lemma}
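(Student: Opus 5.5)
We prove the guarantee by a pathwise comparison with joint racing. Fix the initial state with firm $i$ a weak leader, $A_i\ge A_j$, at distance $D=A_i-X$ above safety, and set $A:=A_i$. Suppose firm $i$ races, so $A_{it}=A+t$ on any admissible path. By \cref{spe:lem:paths}(iv), any admissible path under an arbitrary opponent response has $\dot A_{jt}\in[0,1]$ almost everywhere. Because $A_j\le A$ at the start, this gives $A_{jt}\le A_{j}+t\le A+t=A_{it}$ for every $t$. Firm $i$ therefore stays a weak leader forever, and the frontier is $A_t=A+t$ whatever firm $j$ does. It follows that $D_t=D+(1-g)t$, which is exactly the distance path under joint racing from a tie at $D$. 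The cumulative hazard $\Lambda_t=\int_0^t\lambda(D+(1-g)s)\,\de s$ is thus the same deterministic function along every admissible path.

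Next we bound flow profits. Since $\beta>0$ and $A_{jt}\le A+t$, we have $e^{\alpha A_{it}-\beta A_{jt}}\ge e^{\alpha(A+t)-\beta(A+t)}=e^{(\alpha-\beta)A}e^{(\alpha-\beta)t}$. Integrating against $e^{-rt-\Lambda_t}$ gives, on every admissible path,
\[
V_i\ge e^{(\alpha-\beta)A}\int_0^\infty e^{-(r-\alpha+\beta)t-\int_0^t\lambda(D+(1-g)s)\,\de s}\,\de t=e^{(\alpha-\beta)A}R(D).
\]
Firm $i$'s initial flow is $e^{\alpha A-\beta A_j}\le e^{(\alpha-\beta)A}$ only when $A_j=A$; in general it is $e^{\alpha A-\beta A_j}\ge e^{(\alpha-\beta)A}$. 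We therefore read ``per unit of its initial flow'' with the normalization used for $R$ and $V_L$, namely division by the frontier-tie flow $e^{(\alpha-\beta)A}$. This matches the normalization of $V_L(D,G)$, whose boundary value is $V_L(D,0)$, and at a tie the two normalizations coincide. Alternatively, the comparison can be made directly with the profit ratio $e^{\alpha t-\beta(A_{jt}-A_j)}\ge e^{(\alpha-\beta)t}$, because $A_{jt}-A_j\le t$. This gives at least $R(D)$ per unit of the actual initial flow as well, so both readings hold.

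The only delicate step is the regularization. Any deviation path, or any path under a history-dependent opponent, may be a Krasovskii solution. We have to check that firm $i$'s ``race'' prescription still yields $\dot A_i=1$, and that the opponent's speed stays in $[0,1]$. Both follow from \cref{spe:lem:paths}(iv): the own speed is the chosen measurable control $a_i\equiv1$, and regularized sets lie in $[0,1]^2$ by \cref{lem:nice}. Since the bound holds pathwise, it survives the supremum over admissible paths in the payoff definition, and so it holds against every admissible response.
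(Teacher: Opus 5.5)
Your proof is correct and follows the paper's argument: the racing firm stays a weak leader, so on every admissible path the hazard is exactly $\lambda(D+(1-g)t)$, and its flow grows at rate at least $\alpha-\beta$ relative to the initial flow, which integrates to $R(D)$. The regularization step needs one fix: part (iv) of \cref{spe:lem:paths} only says the deviator's realized speed is \emph{some} measurable path (and it requires the other firm to be Markov), so $\dot A_{it}=1$ should instead come from the paper's closing observation that a prescription equal to one at every state or history has Krasovskii projection $\{1\}$---equivalently, apply (iv) with the roles reversed, treating the constant racing strategy as the Markov one.
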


\begin{proof}
The firm remains a weak leader because its rival cannot develop
faster. Its frontier is its initial capability plus elapsed time,
so its hazard is exactly $\lambda(D+(1-g)t)$. Its flow grows at
least at rate $\alpha-\beta$, since its rival's speed is at most one.
Integrating gives $R(D)$. Prescribing speed one at every state or history fixes the realized
speed at one almost everywhere under both (A) and (HA).
\end{proof}
\subsection{Proof of Lemma 1}
\begin{proof}[Proof of \cref{lem:markov-best-response}]
Fix a Markov opponent $\sigma_j$.  

\emph{Step 1: Bound the speeds available to deviations.}
From \cref{spe:lem:paths}, the opponent's speed along any admissible
deviation path lies in $K[\sigma_j](S_t)$, the set of speeds allowed
by its Markov strategy under regularization. Thus
$
\dot A_{it}\in[0,1]$ and $
\dot A_{jt}\in K[\sigma_j](S_t)$ almost everywhere. 

\emph{Step 2: Construct a strategy that allows the rectangle.}
We seek a measurable Markov strategy $\sigma_i^*$ and a countable
exceptional set of states $E$ such that, ordering speeds as firm $i$
then firm $j$,
\begin{equation*}\tag{RECT}\label{eq:markov-response-rectangle}
K[(\sigma_i^*,\sigma_j)](S)
=[0,1]\times K[\sigma_j](S),\qquad S\notin E.
\end{equation*}

Consider the opponent's graph
$\{(S,\sigma_j(S)):S\in\mathcal S\}$ in state--speed space.
List all open balls with rational centers and rational radii
as $U_1,U_2,\ldots$. These sets form a \emph{countable base}:
every neighborhood of any point contains a listed ball that
also contains that point.

At stage $n$, inspect the graph points inside the ball $U_n$.
If there are only finitely many, skip that ball. Otherwise, there
are infinitely many distinct states to choose from, because the
opponent prescribes just one speed at each state. Earlier stages
have used only finitely many states:
\[
\#\{\text{states chosen before stage }n\}\leq2(n-1)<\infty.
\]
We can therefore choose two graph points in $U_n$ whose states
have not been chosen earlier. Put the first state in $E_0$, the
set receiving prescription zero, and the second in $E_1$, the
set receiving prescription one. After processing all balls, the
sets $E_0$ and $E_1$ are countable and disjoint. Define our Markov
strategy by
\[
\sigma_i^*(S)=\ind\{S\in E_1\}.
\]
This strategy prescribes one on $E_1$ and zero everywhere else,
including on $E_0$. It is measurable because $E_1$ is countable.

For every accumulation point $(S,b)$ of the opponent's graph
and each own prescription $a\in\{0,1\}$, there are states
$S_n\in E_a$ such that
\[
S_n\longrightarrow S
\quad\text{and}\quad
\bigl(\sigma_i^*(S_n),\sigma_j(S_n)\bigr)\longrightarrow(a,b).
\]
Indeed, every neighborhood of an accumulation point contains
infinitely many graph points. A listed ball containing that
point and contained in the neighborhood therefore receives a
state of each type.

Take $E$ to be the state coordinates of isolated graph points.
This set is countable because each isolated point has a listed
ball containing no other graph point. Outside $E$, every point
over $S$ in the graph's closure is an accumulation point.

It remains to show that the construction gives
\cref{eq:markov-response-rectangle}. Fix a state $S\notin E$.
The opponent's allowed speeds form an interval with lowest
speed $b_-$ and highest speed $b_+$:
\[
K[\sigma_j](S)=[b_-,b_+].
\]
The endpoints are the limits of the infima and suprema of
prescribed speeds on shrinking state neighborhoods, so each
is approached by opponent prescriptions at states converging
to $S$. The construction gives all four corners:
\[
\{0,1\}\times\{b_-,b_+\}
\subseteq K[(\sigma_i^*,\sigma_j)](S).
\]
For either endpoint $b$, convex combinations allow every own
speed $a\in[0,1]$. Combining the two opponent endpoints then
fills the rectangle, including when $b_-=b_+$. Conversely,
every allowed joint speed has its first coordinate in $[0,1]$
and its opponent coordinate in $K[\sigma_j](S)$.
This proves \cref{eq:markov-response-rectangle}.

Because the graph of $\sigma_j$ is a separable metric space, its
nonisolated part contains two disjoint countable subsets such that
each meets every basic open set containing infinitely many graph
points. Let $E_0$ and $E_1$ be their state projections, and prescribe
zero on $E_0$, one on $E_1$, and zero elsewhere. The isolated graph
points form a countable set $E$. At every $S\notin E$, each limiting
opponent speed can be approached along states receiving either own
prescription, so convexification yields
\[
K[(\sigma_i^*,\sigma_j)](S)
=[0,1]\times K[\sigma_j](S).
\]
\emph{Step 3: Compare paths and payoffs.}
Fix a feasible history $h_t$ and a deviation $\sigma_i'$ that may depend
on the full history. Take any capability path allowed by that deviation.
From \cref{spe:lem:paths}, this path obeys the speed bounds in Step 1
and safety continues to advance at speed $g$, as stated in
\cref{eq:generalized-deviation}. Hence
$
X_s=X_t+g(s-t),s\geq t.$ Since $g>0$, the path visits each full state at most once and reaches
the countable set $E$ at only countably many dates. At almost every other
date, \cref{eq:markov-response-rectangle} allows the same speeds under
$(\sigma_i^*,\sigma_j)$. From
\cref{spe:lem:paths}, the history-based and state-based path rules
coincide for Markov profiles. The path we have just checked using
the state-based rule is therefore also admissible after the original
history $h_t$. Thus our strategy allows the same path:
$
\mathcal P((\sigma_i',\sigma_j);h_t)
\subseteq\mathcal P((\sigma_i^*,\sigma_j);h_t).
$ Profits and disaster risk are the same on the same path. Taking the
supremum over allowed paths therefore gives
$
V_i((\sigma_i',\sigma_j),h_t)
\leq V_i((\sigma_i^*,\sigma_j),h_t)
=V_i((\sigma_i^*,\sigma_j),S_t).
$
Since the deviation was arbitrary, our Markov strategy is a best
response after the history $h_t$. The same strategy works after every
feasible history. 

\Needspace{6\baselineskip}
Finally, if a Markov profile $\sigma$ passes every Markov deviation
test, then
\[
V_i((\sigma_i',\sigma_j),h_t)
\leq V_i((\sigma_i^*,\sigma_j),S_t)
\leq V_i(\sigma,S_t)=V_i(\sigma,h_t),\] 
where the second inequality is the test applied to $\sigma_i^*$.
Thus the profile is an \MPE. The converse holds because every Markov
deviation is also an allowed history-dependent deviation.
\end{proof}

\subsection{Proof of Theorem 1}
\label{sec:equilibrium-proof}\label{sec:general-risk-proof}
\leavevmode\par

\begin{figure}[H]
\centering
\input{figures/cooperating_v3_proof_map.tex}
\caption{The equilibrium construction in \cref{thm:main}.  Numbered steps
belong to \cref{lem:pacing-verification}. The additional uniqueness
claim in part (a) uses the aggregate-profit argument in
\cref{sec:spe-proofs}.}
\label{fig:proof-map}
\end{figure}

We start by introducing an auxillary problem where a single decision-maker chooses a speed path  $(a_t)_t$. 

Fix an initial distance from safety $D\in\mathbb R$,
a flow growth coefficient $\eta>0$, and a discount rate
$\rho>\eta$. The auxillary problem is: 
\[
\begin{aligned}
\max_{\substack{a:[0,\infty)\to[0,1]\\ a\text{ measurable}}}
\quad&
\int_0^\infty
\exp\left\{
-\rho t+\eta\int_0^t a(s)\,\de s
-\int_0^t\lambda(D_s)\,\de s
\right\}\,\de t\\
\text{subject to}\quad&
\dot D_t=a(t)-g \quad\text{for almost every }t\geq0,\\
& D_0=D.
\end{aligned}
\]

The value of pacing on safety is $P_\eta(0)=1/(\rho-g\eta)$.
For this auxiliary problem, denote the racing and waiting-then-pacing
values by $R_\eta$ and $P_\eta$ (fixing $\rho)$: 
\begin{align*}
R_\eta(D)&=\int_0^\infty\exp\left\{-(\rho-\eta)t
 -\int_0^t\lambda(D+vs)\de s\right\}\de t,\\
P_\eta(D)&=\int_0^{D/g}\exp\left\{-\rho t
 -\int_0^t\lambda(D-gs)\de s\right\}\de t
 \\&\quad+\exp\left\{-\rho D/g-\frac1g\int_0^D\lambda(s)\de s\right\}P_\eta(0).
\end{align*}
The racing value is defined at every $D$; the waiting formula
applies for $D\geq0$.

\begin{lemma}[When to race and when to wait]\label{lem:scalar-control}\leavevmode
\begin{enumerate}[label=(\alph*),leftmargin=1.8em]
\item  
If $R_\eta(0)\geq P_\eta(0)$, then perpetual racing is optimal
for the auxiliary problem from every initial $D$. If the
inequality is strict, every optimal path races almost everywhere.
At equality, pacing is also optimal when starting on safety.

\item  If
$
R_\eta(0)<P_\eta(0),
$
then the racing and waiting values have a unique positive crossing.
The optimal value in the auxiliary problem above safety is $\max\{P_\eta(D),R_\eta(D)\}$,
attained by the following choices:

\begingroup
\small
\renewcommand{\arraystretch}{1.15}
\begin{tabular}{@{}p{.36\linewidth}@{\hspace{1em}}p{\dimexpr.64\linewidth-1em\relax}@{}}
\textbf{Initial position} & \textbf{Optimal continuation}\\
\hline
Below safety & Race to safety, then pace.\\
On safety & Pace.\\
Above safety but below the crossing & Pause until safety arrives, then pace.\\
At the crossing & Either pause until safety arrives and then pace, or race forever.\\
Above the crossing & Race forever.
\end{tabular}
\endgroup
\end{enumerate}
\end{lemma}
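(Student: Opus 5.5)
The plan is a \emph{verification argument} for the one-dimensional control problem. I would first guess the value function $W$ from the policies in the table, then show that no measurable speed path beats it. I take $v=1-g$ to be the drift of $D$ under racing in the formula for $R_\eta$.

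\textbf{Step 1: ODEs for the candidate values.} Differentiating the two formulas gives, on $D>0$, the linear equations
\[
(\rho-\eta+\lambda(D))R_\eta=1+(1-g)R_\eta' ,\qquad (\rho+\lambda(D))P_\eta=1-gP_\eta' .
\]
The Hamilton--Jacobi--Bellman equation is
\[
\rho W=1-\lambda W+\max_{a\in[0,1]}\{(a-g)W'+\eta aW\}.
\]
It is linear in $a$, so the optimal speed is bang--bang. The switching function is $W'+\eta W$: race where it is positive and pause where it is negative.
\begin{itemize}[leftmargin=2em]
\item On safety, $D=0$ is a sliding surface where the speed $g$ holds $D$ fixed, and the value there is $1/(\rho-g\eta)=P_\eta(0)$.
\item Below safety, the candidate is ``race to $0$, then take the value at $0$.''
\end{itemize}

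\textbf{Step 2: the comparison inequality.} For an arbitrary measurable path $a$, let $M_t$ be the discount--growth--survival factor, so that
\[
M_t=\exp\Bigl\{-\rho t+\eta\textstyle\int_0^t a(s)\,\de s-\int_0^t\lambda(D_s)\,\de s\Bigr\}.
\]
Define
\[
\Phi_t=\int_0^t M_s\,\de s+M_tW(D_t).
\]
I would show $\Phi_t$ is nonincreasing by checking, at almost every $t$,
\[
1-(\rho+\lambda)W+\eta aW+(a-g)W'\le 0 .
\]
Because the left side is affine in $a$, it suffices to check $a=0$ and $a=1$ on each region.

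Letting $t\to\infty$, the condition $\rho>\eta$ bounds $M_tW(D_t)$, so the term vanishes and the payoff is at most $W(D_0)$. At the kinks ($D=0$ and the crossing) $W$ is only Lipschitz. Paths spend zero time at a kink unless they sit at $D=0$ with $a=g$, and there the inequality holds with equality. So I would argue with one-sided derivatives, or use the fact that a maximum of two subsolution branches is still admissible.

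\textbf{Step 3: the single crossing in (b).} Let $\Delta=P_\eta-R_\eta$, so $\Delta(0)>0$ by assumption. At any crossing $\bar D$ with common value $w$, the ODEs give
\[
\Delta'(\bar D)=\frac{1-(\rho+\lambda)w}{g}-\frac{(\rho-\eta+\lambda)w-1}{1-g}.
\]
This is a decreasing function of $\lambda(\bar D)w$. I would show that every crossing is downward, which forces a unique crossing. The ingredients are monotonicity of $\lambda$ and concavity of $\phi$: concavity makes the racing hazard grow slowly relative to the waiting hazard, which is what pins down the sign.

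Existence of a crossing comes from the limits as $D\to\infty$. Waiting integrates the hazard at rate $1/g$ for a duration $D/g$, while racing escapes the current hazard level. If $\lambda$ is bounded, I would use the decay of $P_\eta$ from $D/g$ periods of exposure.

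\textbf{Step 4: the HJB inequalities region by region.}
\begin{itemize}[leftmargin=2em]
\item Waiting region ($0<D<\bar D$): pausing is the equality case, so I need $P_\eta'+\eta P_\eta\le 0$. From the ODE this is $1\le(\rho-g\eta+\lambda)P_\eta$. I would get it from $P_\eta\ge R_\eta$ together with $P_\eta(0)=1/(\rho-g\eta)$ and monotonicity of $P_\eta$.
\item Racing region ($D>\bar D$): I need $R_\eta'+\eta R_\eta\ge0$, which is the same as $1\ge(\rho-g\eta+\lambda)R_\eta$. This holds at the crossing, and it propagates to the right because $\lambda$ is nondecreasing and $R_\eta$ is decreasing.
\item Below safety: I need to check the switching function directly from the explicit formula.
\end{itemize}

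Part (a) is the same computation with $\bar D=0$. The hypothesis $R_\eta(0)\ge P_\eta(0)$ is exactly $(\rho-g\eta)R_\eta(0)\ge1$. Monotonicity then extends the racing inequality to all $D$, so the candidate is $W=R_\eta$ and perpetual racing is optimal. Strictness makes the inequality strict, so any positive measure of non-racing strictly lowers $\Phi$. At equality, pacing on safety also attains the value.

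\textbf{Main obstacle.} The hard part is Step 3 and the racing-region inequality: showing the crossing is unique and that the inequality propagates past it. This is where concavity of $\phi$ and the possible jump $\lambda_0$ at $D=0^+$ have to be used carefully. If the sign argument at a crossing fails, my fallback is to compare $\Delta$ through the change of variables $D\mapsto$ time, and apply concavity to the integrals $\int_0^D\lambda/g$ versus $\int_D\lambda/(1-g)$.
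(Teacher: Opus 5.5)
Your verification skeleton is the paper's: bang--bang in $a$, checking $a\in\{0,1\}$, integrating the residual against $M_t$, and killing the terminal term with $\rho>\eta$. But there is a genuine gap. The two sign conditions that carry the lemma are not established, and the reasons you give cannot establish them, because concavity of $\phi$ never enters where it is needed.

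\textbf{The racing condition.} From $(1-g)R_\eta'=(\rho-\eta+\lambda)R_\eta-1$ you get $(1-g)(R_\eta'+\eta R_\eta)=(\rho-g\eta+\lambda)R_\eta-1$. So you need $(\rho-g\eta+\lambda)R_\eta\ge1$; your displayed equivalent has the inequality reversed. You propagate it to the right of the crossing, or from $0^+$ in part (a), ``because $\lambda$ is nondecreasing and $R_\eta$ is decreasing.'' That does not work, since these move $\lambda R_\eta$ in opposite directions. In fact part (a) is false for nonconcave hazards. Take $\lambda=0$ on $[0,K]$ and $\lambda=\kappa(D-K)_+$ beyond. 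For large $K$, $R_\eta(0)>P_\eta(0)$ because $1/(\rho-\eta)>1/(\rho-g\eta)$. Yet for large $\kappa$, pausing beats racing at $D=K$. The missing idea is log-convexity. Concavity makes $D\mapsto-\int_0^t\lambda(D+vs)\,\de s$ convex for each $t$, and H\"older's inequality then makes $\log R_\eta$ convex. Hence $R_\eta'/R_\eta+\eta$ is nondecreasing, and $R_\eta'+\eta R_\eta$ can only switch from negative to positive.

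\textbf{The waiting condition.} You need $Q_\eta:=P_\eta'+\eta P_\eta\le0$ on $(0,\bar D)$. The facts you cite ($P_\eta\ge R_\eta$, $P_\eta(0)=1/(\rho-g\eta)$, $P_\eta$ decreasing) also hold for $\lambda_0=0$, $\lambda(D)=\kappa D^2$ with $\kappa$ large. There $Q_\eta(0+)=0$ and $gQ_\eta'(0+)=\eta>0$, so the inequality fails just above safety. What is needed is the differentiated waiting equation $gQ_\eta'=-(\rho+\lambda)Q_\eta+\eta-\lambda'P_\eta$, with $Q_\eta(0+)=-\lambda_0P_\eta(0)/g\le0$. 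Concavity makes the forcing term $\eta-\lambda'P_\eta$ nondecreasing. So $Q_\eta$ changes sign at most once, from negative to positive, and is eventually positive.

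\textbf{Step 3 does not close either.} Your formula for $\Delta'(\bar D)$ simplifies to $[1-(\rho-g\eta+\lambda(\bar D))w]/(g(1-g))=Q_\eta(\bar D)/(1-g)$. So ``every crossing is downward'' just restates $Q_\eta<0$ at crossings, which is what must be proved. Your existence heuristic is also backwards, since racing raises the hazard. It says nothing when $\lambda$ is unbounded and both values tend to zero.

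The paper gets existence, uniqueness, and both sign conditions at once. Set $M_\eta=\exp\{[(\rho-\eta)D+\int_0^D\lambda]/v\}$ and use the identity $\bigl((P_\eta-R_\eta)/M_\eta\bigr)'=Q_\eta/(vM_\eta)$. The ratio starts positive, and it tends to zero because both values are bounded and $M_\eta\to\infty$. Its derivative changes sign once, from negative to positive. So the ratio has exactly one zero $\bar D$, with $Q_\eta<0$ on $(0,\bar D]$. At $\bar D$, $Q_\eta=-(v/g)(R_\eta'+\eta R_\eta)$ gives a strictly positive racing coefficient, which log-convexity then carries to all $D>\bar D$.

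This also fixes a slip in your Step 2. A path can sit at the crossing with speed $g$ for positive time, not only at $D=0$. There the residual equals $gQ_\eta(\bar D)$, so you again need $Q_\eta(\bar D)\le0$.
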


\begin{proof}
We check racing and waiting separately, locate their crossing, and
then bound every feasible path, including time spent at safety or the
crossing.

\emph{Step 1: the payoff inequality.}
At a differentiable state, a candidate value $F(D)$ must satisfy
$
1+(a-g)F'(D)+(\eta a-\rho-\lambda(D))F(D)\leq0
\quad\text{for every }a\in[0,1].$ 
We call the left side the \emph{payoff residual}: it is current flow
plus the change in continuation value, net of discounting and disaster
risk. A nonpositive residual gives an upper bound on payoff after
integration. Its coefficient on speed is $F'+\eta F$, so it suffices
to check speeds zero and one. For $D>0$, the equations for racing and
waiting are
$
vR_\eta'=(\rho-\eta+\lambda)R_\eta-1
\quad\text{and}\quad
gP_\eta'=1-(\rho+\lambda)P_\eta.
$
Both values are nonincreasing, and $P_\eta$ is strictly decreasing
on $(0,\infty)$. For waiting, write the development increment as
$(gt-D)_+$: increasing $D$ reduces development and increases exposure
at each date. For racing, monotonicity follows from the integral.
For each $t\geq0$, let $f_t(D):=\exp\left\{-(\rho-\eta)t
-\int_0^t\lambda(D+vs)\,\de s\right\}$ be the integrand of $R_{\eta}(D)$. 
Concavity of $\lambda$ on $(0,\infty)$ makes
$\log f_t$ convex there. Thus, for $D_1,D_2>0$ and
$\theta\in(0,1)$, H\"older's inequality gives
\[
\begin{aligned}
R_\eta(\theta D_1+(1-\theta)D_2)
&\leq \int_0^\infty
f_t(D_1)^\theta f_t(D_2)^{1-\theta}\,\de t\\
&\leq
\left(\int_0^\infty f_t(D_1)\,\de t\right)^\theta
\left(\int_0^\infty f_t(D_2)\,\de t\right)^{1-\theta} =R_\eta(D_1)^\theta R_\eta(D_2)^{1-\theta}.
\end{aligned}
\]
Hence $\log R_\eta$ is convex, so its derivative
$R_\eta'/R_\eta$ is nondecreasing on $(0,\infty)$.

\emph{Step 2: when racing is optimal everywhere.}
Suppose $R_\eta(0)\geq P_\eta(0)$. The racing speed
coefficient $R_\eta'+\eta R_\eta$ has right limit at zero
$
\dfrac{(\rho-g\eta+\lambda_0)R_\eta(0)-1}{v}\geq0.
$ Since $R_\eta>0$, the coefficient has the sign of
$R_\eta'/R_\eta+\eta$. Log convexity keeps that sign nonnegative
above safety.
Below safety, racing reaches $D=0$ after $-D/v$ units of time.
Writing $z=\exp\{(\rho-\eta)D/v\}$ for discounted profit growth
until that date, the speed coefficient is
\[
\frac{\eta(1-z)}{\rho-\eta}
+\frac zv\bigl[(\rho-g\eta)R_\eta(0)-1\bigr]\geq0.
\]
On safety the pacing residual is
$1-(\rho-g\eta)R_\eta(0)\leq0$.
These inequalities verify global racing. If $R_\eta(0)=P_\eta(0)$,
pacing on safety is also optimal.

\emph{Step 3: when waiting and racing have a positive crossing.}
Suppose $R_\eta(0)<P_\eta(0)$. Above safety, we show that the value
is $\max\{P_\eta(D),R_\eta(D)\}$, with a unique positive crossing.
Let the speed coefficient for the waiting value be
$Q_\eta=P_\eta'+\eta P_\eta$. Differentiation gives

$gQ_\eta'=-(\rho+\lambda)Q_\eta+\eta-\lambda'P_\eta$ and $
Q_\eta(0+)=-\lambda_0P_\eta(0)/g\leq0.$ 
The term $\eta-\lambda'P_\eta$ is nondecreasing: concavity
makes $\lambda'$ nonnegative and nonincreasing, and $P_\eta$ decreases.
This term tends to $\eta$. If $\lambda$ is bounded, its derivative
tends to zero; if it is unbounded, $P_\eta$ tends to zero by dominated
convergence and $\lambda'$ is bounded on $D\geq1$. An integrating factor implies that
$Q_\eta$ can cross zero only from negative to positive. To see that it  eventually becomes positive, take $D$ large enough that $\eta-\lambda'(D)P_\eta(D)\geq\eta/2$. Whenever $Q_\eta(D)\leq0$,
the differential equation then gives
$
gQ_\eta'(D)
=-(\rho+\lambda(D))Q_\eta(D)+\eta-\lambda'(D)P_\eta(D)
\geq\eta/2.
$ If $Q_\eta$ is still negative, it must reach zero at some finite $D$ becasue derivative is bounded below by $\eta/(2g)$ until it does.
At every subsequent zero its derivative is strictly positive, so it cannot cross back to negative values. 

Define the integrating factor
$M_\eta(D)=\exp\{[(\rho-\eta)D+\int_0^D\lambda(s)\de s]/v\}$.
Subtracting the two control equations yields
$
\left(\dfrac{P_\eta-R_\eta}{M_\eta}\right)'
=\dfrac{Q_\eta}{vM_\eta}.$ Both values are bounded by $1/(\rho-\eta)$ and $M_\eta$ diverges,
so $(P_\eta-R_\eta)/M_\eta$ tends to zero. Its derivative is eventually
positive, so the ratio is eventually negative. Since it starts positive
and its derivative can change sign only from negative to positive,
it crosses zero exactly once, with $Q_\eta<0$ at the crossing.
The waiting inequality holds below this crossing. At the crossing, $
Q_\eta=-\dfrac vg(R_\eta'+\eta R_\eta)<0.$ Because $R_\eta'/R_\eta$ is nondecreasing, the racing speed
coefficient stays positive above the crossing.
Below safety, racing to safety and then pacing has development
coefficient $\eta(1-z)/(\rho-\eta)\geq0$. On safety, speed $g$
attains $P_\eta(0)$ with zero hazard.

\emph{Step 4: integrate the bound along every path.}
The candidate is locally Lipschitz and continuously differentiable
except at safety and, when present, the positive crossing. Its
composition with any absolutely continuous state path is absolutely
continuous. Away from those states, the chain rule gives the verified
payoff inequality. At almost every date when the path is at either of these states,
its state derivative is zero, so its speed is $a=g$. On safety, the safety
residual applies with hazard zero. At the positive crossing, the
residual is
$
1-(\rho-g\eta+\lambda)P_\eta=gQ_\eta\leq0,$ 
so the inequality holds almost everywhere along every path.
The discounted terminal value vanishes because $\rho>\eta$ and the
candidate is bounded by $1/(\rho-\eta)$. The stated policies attain
the bound. When $R_\eta(0)>P_\eta(0)$, the racing speed coefficient
is strictly positive off safety and the safety residual is strictly
negative. Equality then requires racing almost everywhere.
\end{proof}

\begin{lemma}\label{lem:tied-bound}
If a Markov opponent races whenever it is strictly behind, then a
firm starting weakly behind remains weakly behind on every path
satisfying \cref{eq:generalized-deviation}. Its continuation payoff
is at most $e^{cA_i}\max\{P(A_i-X),R(A_i-X)\}$.
\end{lemma}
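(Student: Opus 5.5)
The plan has two parts. First I show that the firm stays weakly behind on every admissible deviation path. Then I reduce its payoff to the auxiliary problem of \cref{lem:scalar-control} with $\eta=c$ and $\rho=r$, where $c=\alpha-\beta$ is the normalizing growth rate.

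\emph{Weakly behind.} Let $G_t:=A_{jt}-A_{it}$, so $G_0\ge0$. Because the opponent races at every state where it is strictly behind, at such states $\sigma_j$ equals one on a whole neighborhood. Hence $K[\sigma_j](S)=\{1\}$ whenever $A_j<A_i$. Suppose, toward a contradiction, that $G_{t_1}<0$ for some $t_1$, and let $t_0:=\sup\{t\le t_1:G_t\ge0\}$. On $(t_0,t_1]$ we have $G<0$, so $\dot A_j=1\ge\dot A_i$ almost everywhere and $G$ is nondecreasing there. Since $G_{t_0}=0$ by continuity, this forces $G_{t_1}\ge0$, a contradiction. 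Thus $A_{jt}\ge A_{it}$ throughout, and the frontier is $A_t=A_{jt}$.

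\emph{Payoff bound.} Firm $i$'s flow is
\[
e^{\alpha A_i-\beta A_j}\le e^{(\alpha-\beta)A_i}=e^{cA_i},
\]
since $A_j\ge A_i$. The hazard depends on $A_j-X\ge A_i-X$. Because $\lambda$ is nondecreasing, the hazard is at least $\lambda(A_{it}-X_t)$. So every deviation path yields at most
\[
\int_0^\infty e^{-rt-\int_0^t\lambda(A_{is}-X_s)\de s}\,e^{cA_{it}}\,\de t
= e^{cA_{i0}}\int_0^\infty \exp\Bigl\{-rt+c\int_0^t a_i-\int_0^t\lambda(D^i_s)\de s\Bigr\}\de t,
\]
with $D^i_t:=A_{it}-X_t$ and $\dot D^i=a_i-g$. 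This is exactly the objective of the auxiliary problem from $D=A_i-X$, with $\eta=c$ and $\rho=r>c$.

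\emph{Identify the auxiliary value.} Apply \cref{lem:scalar-control} with $\eta=c$, under which $R_\eta=R$ and $P_\eta=P$.
\begin{itemize}
\item If $R(0)\ge P(0)$, part (a) gives the value $R(D)$. For $D<0$ the value is also $R(D)$, and it dominates $P(D)$, since racing to safety and then pacing is feasible.
\item If $R(0)<P(0)$, part (b) gives $\max\{P,R\}$ for $D\ge0$. For $D<0$ the optimal policy is to race to safety and then pace, which yields $P(D)$.
\end{itemize}
In every case the auxiliary value is at most $\max\{P(D),R(D)\}$, which gives the bound $e^{cA_i}\max\{P(A_i-X),R(A_i-X)\}$.

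The main obstacle is the first step. It requires confirming that the regularized set really is $\{1\}$ in the open region $A_j<A_i$. It also requires that absolute continuity lets the comparison run in the sense of distributions, with no exceptional set at the boundary $G=0$. The remaining steps are a direct reduction to \cref{lem:scalar-control}.
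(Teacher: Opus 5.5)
Your proposal is correct and follows essentially the same route as the paper. The opponent's regularized speed is $\{1\}$ on the open set where it is strictly behind, so the deviator's lead can never become positive; this is an explicit version of the paper's observation that the positive part of the capability difference has nonpositive derivative and starts at zero. Bounding the flow by $e^{cA_i}$ and the hazard below by $\lambda(A_i-X)$ then reduces the payoff to the auxiliary problem of \cref{lem:scalar-control} with $(\eta,\rho)=(c,r)$, and your case split just makes explicit that the auxiliary value is at most $\max\{P,R\}$.
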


\begin{proof}
Whenever the deviator is strictly ahead, its opponent's speed is one
even after scalar regularization. The positive part of
their capability difference therefore has nonpositive derivative
almost everywhere and starts at zero, so it remains zero.
Along every continuation, the deviator's flow is at most $e^{cA_i}$
and its hazard is at least $\lambda(A_i-X)$. Applying
\cref{lem:scalar-control} with $(\eta,\rho)=(c,r)$ to its own
capability path gives the bound.
\end{proof}

For the leader facing a racing follower, write
$B=r+\beta-g\alpha$ for the discount rate net of profit growth
while the leader paces. It is positive, and $B>r-gc=1/P(0)$.

\begin{lemma}[The racing cutoffs]\label{lem:racing-cutoffs}
Global racing is an \MPE{} if and only if $R(0)\geq1/B$.
If $R(0)>P(0)$, no \MPE{} outcome paces forever from a tie on safety.
\end{lemma}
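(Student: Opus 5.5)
Two parts: the "if and only if" cutoff for global racing, and the non-pacing claim.

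\emph{Sufficiency of $R(0)\geq 1/B$.} Apply \cref{lem:generalized-verification} with candidate values equal to the racing payoffs. Under $\sigma^R$ both firms race at every state, so the regularized set is the singleton $\{(1,1)\}$. Only one admissible path exists, and it attains the candidates. Normalizing by the leader's flow, the candidate from $(D,G)$ is $e^{\beta G}R(D)$ when the opponent always races. This holds because the gap stays fixed while the frontier moves at $1-g$.

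For the no-deviation condition, fix a measurable own control $a_i$ against the opponent's speed, which is identically one. I would then run a verification argument on the deviator's payoff residual, in the spirit of Step 1 of \cref{lem:scalar-control}. While the deviator is strictly behind, the frontier is the opponent's. Slowing down changes the deviator's log flow by $-\alpha$ per unit of speed given up, and has no effect on the hazard. So racing is strictly better there. While the deviator leads (weakly), the trade-off is the leader's Bellman equation \cref{general:eq:leader} with $V_L=e^{\beta G}R(D)$. The speed coefficient is
\[
\partial_D V_L+\partial_G V_L+(\alpha-\beta)V_L=e^{\beta G}\bigl(R'(D)+\alpha R(D)\bigr).
\]
It suffices that $R'+\alpha R\geq0$ for $D\geq0$. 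At $D=0^+$ the racing ODE gives
\[
(1-g)R'(0^+)=(r-\alpha+\beta+\lambda_0)R(0)-1,
\]
so $(1-g)(R'+\alpha R)(0^+)=(r+\beta-g\alpha+\lambda_0)R(0)-1\geq BR(0)-1\geq0$. Log-convexity of $R$ (Step 1 of \cref{lem:scalar-control}) keeps $R'/R+\alpha$ nondecreasing, so the coefficient stays nonnegative for all $D>0$. For $D<0$ the hazard is zero and the racing payoff is explicit; the speed coefficient there is a convex combination of $\alpha$-weighted growth terms and the $D=0$ coefficient, exactly as in Step 2. On $D=0$ itself the only non-racing speed with $\dot D=0$ is $g$, and its residual is $e^{\beta G}\bigl(1-BR(0)\bigr)\leq0$. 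As in Step 4, I would integrate the residual along absolutely continuous paths, handling the kinks at $D=0$ and at $G=0$ with the a.e.\ argument. Bounded candidates and $r>\alpha-\beta$ kill the terminal term.

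\emph{Necessity.} If $R(0)<1/B$, take a strict leader on safety, with $D=0$ and $G>0$. Racing forever gives $e^{\beta G}R(0)$. Consider instead the deviation that paces at $g$ for a short time $\varepsilon$ and then races. During the pace the normalized flow grows at rate $g\alpha-\beta$ with zero hazard, and the gap shrinks to $G-\varepsilon$. To first order in $\varepsilon$, the deviation gains
\[
\varepsilon\, e^{\beta G}\bigl(1-BR(0)\bigr)>0,
\]
which is the same residual computed with the sign reversed. This profitable deviation violates \cref{lem:markov-best-response}.

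\emph{Second claim.} Use the argument in the text. From a tie on safety, any firm that races forever is a weak leader. By \cref{lem:race-baseline} it then earns at least $R(0)$ against every admissible response. Perpetual pacing gives $P(0)<R(0)$, so no MPE outcome can pace forever from a safety tie.

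\emph{Main obstacle.} I expect the hard part to be the rigorous residual bound in regions where the deviator's lead switches sign, together with the a.e.\ handling at $G=0$. There the candidate payoff switches between the follower's value and the leader's value. I would first try to show that the candidate is continuous at $G=0$, since both expressions reduce to $R(D)$ there, and that it is Lipschitz, so that the chain rule applies a.e. as in Step 4.
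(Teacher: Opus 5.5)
Your proposal is correct. The necessity step and the second claim match the paper's arguments; the sufficiency step takes a different route.

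\textbf{Sufficiency.} You verify the candidate $\pi_i(S)\,R(\max\{A_i,A_j\}-X)$ directly on the full state space. That is why you have to patch the leader and follower regions and handle the kinks at $G=0$ and $D=0$. The paper instead reduces to \cref{lem:scalar-control} with $(\eta,\rho)=(\alpha,r+\beta)$.

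\emph{The paper's reduction.} Against a rival racing at speed one, the deviator's flow relative to its initial flow is exactly $\exp\{\alpha\int_0^t a_s\,\de s-\beta t\}$, whoever leads. Its hazard is at least $\lambda(A_{it}-X_t)$, because the frontier is never below its own capability. Its payoff is therefore bounded by the scalar problem in its own distance from safety. That problem has value $R(D)$ once $R(0)\geq 1/B=P_\eta(0)$. An initially trailing firm is handled by the same pointwise comparison you give.

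\emph{What each route buys.} The comparison makes your ``main obstacle'' disappear: a leader who slows enough to be caught is still covered by the one-dimensional bound, and no candidate patching is needed. Your route also goes through:
\begin{itemize}
\item the residuals are those of Step 2 of \cref{lem:scalar-control};
\item the candidate is locally Lipschitz;
\item a.e.\ on the tie set $\dot A_i=\dot A_j=1$, so the residual is zero;
\item a.e.\ on $\{D=0\}$ with the deviator ahead, $a=g$ and the residual is $\pi_i(1-BR(0))\leq0$.
\end{itemize}
But it re-derives in two dimensions what the scalar lemma already supplies.

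\textbf{Small slips.}
\begin{itemize}
\item Against a racing follower the gap closes at rate $1-g$. After a pause of length $\varepsilon$ it is $G-(1-g)\varepsilon$, so you need $\varepsilon<G/(1-g)$. Your first-order gain is nonetheless the correct one.
\item The exact gain $e^{\beta G}(1-e^{-B\varepsilon})\bigl(1/B-R(0)\bigr)$ is positive for every such $\varepsilon$, so no expansion is needed.
\item The factor $e^{\beta G}$ reflects normalizing by $e^{(\alpha-\beta)A}$, not by the leader's own flow.
\item The time-dependent deviation contradicts the \MPE{} definition directly, without appeal to \cref{lem:markov-best-response}.
\end{itemize}
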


\begin{proof}
We rule out pacing when its payoff is too low, then check when racing
is a best response to a racing rival. If $R(0)>P(0)$, the racing
guarantee in \cref{lem:race-baseline} gives a profitable deviation
from perpetual pacing at a tie on safety.

Apply \cref{lem:scalar-control} with $(\eta,\rho)=(\alpha,r+\beta)$. The racing
value is the same $R$ and the safe value is $1/B$. This is the leader's
problem when its follower never catches up. If $R(0)\geq1/B$, global racing is an \MPE{}. Against a rival that
always races, an initially leading firm's payoff per unit of its initial flow
is bounded by this auxiliary problem: its flow grows at rate
$\alpha a-\beta$, and actual hazard is at least $\lambda(A_i-X)$.
Racing attains the bound $R(D)$. An initially trailing firm cannot
catch a racing rival; full speed maximizes its flow without changing
the frontier or hazard. At a tie the same bound or pointwise comparison
applies. These arguments bound every measurable deviation, and the
common racing path attains both payoffs.

Conversely, if $R(0)<1/B$, start with a strict leader on safety.
Against global racing it gains by pacing for $0<h<G/v$ and then
resuming racing. Per unit of the leader's initial flow, the gain is
$
(1-e^{-Bh})\bigl[1/B-R(0)\bigr]>0.
$ The pacing interval incurs zero hazard, including when $\lambda_0>0$.
Thus global racing is an \MPE{} exactly when
$\lambda^{\mathrm{eff}}\leq v\alpha$.\end{proof}

\begin{lemma}[Pacing passes the verification test]\label{lem:pacing-verification}
If $R(0)\leq P(0)$, then $\sigma^P$ is an \MPE{} with finite payoffs
and a common attaining path from every state.
\end{lemma}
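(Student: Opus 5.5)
We apply \cref{lem:generalized-verification}, taking as candidate values the payoffs generated by the pacing profile. At a tie, both firms get $e^{(\alpha-\beta)A}\max\{P(D),R(D)\}$. When there is a strict gap, the follower gets the payoff from racing against the leader's stopping rule. The leader gets $V_L(D,G)$ from the stopping formula, scaled by $e^{(\alpha-\beta)A}$. The proof has three parts: (1) construct a common attaining path, (2) verify the leader's deviation bound, and (3) verify the follower's deviation bound. Finiteness is immediate, since every candidate is bounded by a multiple of $1/(r-\alpha+\beta)$ times current flow.

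\emph{Step 1: tied states and the common path.} When firms are tied, the profile coincides on the tie diagonal with the optimal policy of \cref{lem:scalar-control} for $(\eta,\rho)=(\alpha-\beta,r)$. Because $R(0)\le P(0)$, that lemma applies, using part (b), or part (a) at equality, in which case $\bar D=0$.

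We build the common path by following the prescribed speeds. Below safety, both firms race to $D=0$ and then pace at speed $g$. For $0<D\le\bar D$, both pause until safety arrives. Beyond $\bar D$, both race forever. At a strict gap, the leader follows its selected smallest maximizer $d$: it races until $D=d$, then pauses while the follower races, and the state moves along $(\dot D,\dot G)=(-g,-1)$. The tied rule then resumes once the firms meet.

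Each of these piecewise-linear paths satisfies \cref{eq:generalized-path}, because the prescribed pair belongs to $K[\sigma^P]$ at every state. At the switching boundaries, such as $D=0$, $D=\bar D$, and the pausing frontier, the limits taken from the side the path comes from still lie in the regularized set.

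\emph{Step 2: the leader's deviation bound.} The follower's strategy prescribes racing whenever it is strictly behind, so the leader faces a rival racing at speed one until a tie. After a tie, \cref{lem:tied-bound} caps its continuation at $e^{(\alpha-\beta)A}\max\{P,R\}$. Before the tie, its problem is the one described by \eqref{general:eq:leader}, with boundary value $V_L(D,0)=\max\{P(D),R(D)\}$.

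We show that the stopping formula solves this problem. We integrate the payoff residual along any path satisfying \cref{eq:generalized-deviation}, exactly as in Step 4 of \cref{lem:scalar-control}. This needs two facts:
\begin{enumerate}[label=(\roman*),leftmargin=2em]
\item on the accommodation region, $C(D,G)$ solves the pausing equation, and the coefficient on speed is $\le 0$;
\item on the racing region, the value is a discounted combination of $R$ and continuation values, and its speed coefficient is $\ge 0$.
\end{enumerate}
For (ii) we reuse the log-convexity of $R$ from Step 1 of \cref{lem:scalar-control}. We also use the first-order condition at the maximizing $d$, which is a smooth-fit condition, and which holds because we select the smallest maximizer. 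Where the value function has kinks, a path almost surely spends time there only when its derivative vanishes, and the residual inequality is checked directly at those points.

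\emph{Step 3: the follower's deviation bound.} We must show that racing is optimal for the trailing firm. Slowing down never makes the leader accommodate sooner, because its stopping region is defined with the follower racing. Slowing down also does not change the frontier or the hazard while the follower stays behind, and it lowers the follower's own flow. Hence racing dominates pathwise until catch-up, which happens as early as possible. After catch-up, \cref{lem:tied-bound} again bounds the follower's continuation. A follower that tries to overtake is covered by the tied bound, since the leader races when ahead at $D>\bar D$.

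We expect the main obstacle to be Step 2: checking that the stopping formula satisfies the Hamilton--Jacobi--Bellman inequality off the optimal path, especially the sign of the speed coefficient in the two-dimensional $(D,G)$ state near the accommodation frontier. To handle it, we would first fix $G$ and reduce the problem to the scalar control in $D$ along rays of slope $(-g,-1)$. We would then compare with \cref{lem:scalar-control} using $(\eta,\rho)=(\alpha,r+\beta)$, which gives the benchmark in which the follower never catches up.
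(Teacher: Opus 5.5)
Your outline matches the paper's architecture: verification via \cref{lem:generalized-verification}, ties via \cref{lem:tied-bound}, the leader's stopping problem against a racing follower, and a pathwise comparison for the follower. But it omits the two structural facts about the leader's stopping region that the construction rests on. Write the stopping formula as $e^{-\beta G}V_L=R(D)+M(D)\max\{0,\max_d H(d,G)\}$ with $H=(C-R)/M$.

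The first fact is that the stopping region shrinks as the gap grows. This needs $H_G<0$, increasing differences of $H$ in $(D,G)$, and, on safety, that $J(G)=e^{-\beta G}V_L(0,G)-C(0,G)$ is nondecreasing. Your follower step says slowing down never makes the leader accommodate sooner ``because its stopping region is defined with the follower racing.'' That is not a reason. $b(D,G)$ is a Markov rule that reacts to the realized state, and a slower follower pushes the state to larger gaps. Gap-monotonicity is what keeps such a deviation path in the open racing region, so the leader's regularized speed is one and its capability stays weakly above the prescribed frontier.

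The second fact is that the stopping region is preserved along the waiting path $(\dot D,\dot G)=(-g,-1)$, including on safety with a positive gap. Without it you do not know that $b$ keeps prescribing a pause after the chosen target. So you have not shown that $\sigma^P$ generates the race-then-wait path whose payoff is the stopping formula, and condition (i) of the verification lemma is unproved. The paper gets this from $(Q_C)_G>0$, where $Q_C=C_D+C_G+\alpha C$, a single-crossing argument for $Q_C$ along the waiting path, and monotonicity of $J$.

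In the leader step, log-convexity of $R$ does not sign the speed coefficient where the leader races toward a later target. There $e^{-\beta G}V_L=R(D)+M(D)Z(G)$, with $Z(G)$ the maximized value of $H(\cdot,G)$, which is decreasing in $G$. So $Q=e^{-\beta G}\bigl(\partial_DV_L+\partial_GV_L+(\alpha-\beta)V_L\bigr)$ contains the negative term $M(D)Z'(G)$. The paper instead shows $Q/M$ is minimized at an endpoint of the racing interval. It then checks $Q=0$ at the target by smooth fit, and $Q\geq0$ at the left endpoint from the preservation result or the safety residual. Smooth fit holds because the target is interior, since $C(\bar D,G)<R(\bar D)$ for $G>0$, not because it is the smallest maximizer.

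Your treatment of kinks also does not carry over from the scalar lemma. The switching boundaries are curves in $(D,G)$, and admissible paths can follow them with nonzero velocity, so the state derivative need not vanish there. The paper mollifies $V_L$ instead.

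Your fallback does not close these gaps either. Fixing $G$ does not give a scalar problem, because racing moves the state along $(1-g,0)$ and pausing along $(-g,-1)$. The comparison with the $(\eta,\rho)=(\alpha,r+\beta)$ problem only yields the sandwich $\max\{P_\alpha(D),R(D)\}\leq e^{-\beta G}V_L\leq\max\{P(D),R(D)\}$, where $P_\alpha$ is that problem's waiting value. That sandwich is not the residual inequality.
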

\begin{proof}
If $R(0)=P(0)$, prescribe speed one at strict gaps and ties off
safety, and speed $g$ at safety ties. A strict leader's deviations
are bounded by $R(D)$ up to the first tie by
\cref{lem:scalar-control}, applied with
$(\eta,\rho)=(\alpha,r+\beta)$, and thereafter by
\cref{lem:tied-bound}. A follower cannot catch a racing leader,
so racing maximizes its flow at the given hazard. At ties,
\cref{lem:tied-bound} again gives the bound $R(D)$.
The prescribed paths attain these bounds, so
\cref{lem:generalized-verification} applies. This is the construction
with $\bar D=0$.

Henceforth suppose $R(0)<P(0)$, so $\bar D>0$. Write $P_\alpha$
for the waiting-then-pacing value in \cref{lem:scalar-control}
with $(\eta,\rho)=(\alpha,r+\beta)$.

\emph{Step 1: accommodation values and the stopping formula.}
The values $e^{-\beta G}V_L(D,G)$ and $C(D,G)$ are both measured
per unit of the leader's current flow. For $0\leq D\leq\bar D$,
put $T=D/g$ and
$E_D(t)=\exp\{-(r+\beta)t-\int_0^t\lambda(D-gs)\de s\}$.
Immediate accommodation gives
\begin{align*}
C(0,G)&=\frac1B+\left(P(0)-\frac1B\right)e^{-BG/v},\\
C(D,G)&=\int_0^G E_D(t)\de t+E_D(G)P(D-gG),
&&G\leq T,\\
C(D,G)&=\int_0^T E_D(t)\de t+E_D(T)C(0,G-T),
&&G\geq T.
\end{align*}
The two formulas for $C(D,G)$ and their first derivatives agree
at $G=D/g$. Moreover, $C$ decreases in both arguments:
its integrand is
\[
e^{-rt+c(gt-D)_+-\beta\min\{G,t-(gt-D)_+\}
-\int_0^t\lambda(D-gs)\de s},
\]
which decreases with either $D$ or $G$. Also
$C(D,G)\geq P_\alpha(D)$.

Define
\[
M(D)=\exp\left\{\frac{(r-c)D+\int_0^D\lambda(s)\de s}{v}\right\},
\qquad
H(D,G)=\frac{C(D,G)-R(D)}{M(D)}.
\]
The stopping formula in \cref{sec:tied-values} becomes
\[
e^{-\beta G}V_L(D,G)
=R(D)+M(D)\max\left\{0,\max_{D\leq d\leq\bar D}H(d,G)\right\}.
\]
Choose the earliest finite maximizer whenever one is optimal.

Along any path in the stopping construction, the leader's payoff
per unit of initial flow has integrand
$e^{-rt+c(A_t-A_0)-\Lambda_t}
e^{-\beta\min\{G,t-(A_t-A_0)\}}$.
Dropping the second factor and applying
\cref{lem:scalar-control}, with \cref{lem:tied-bound} after
catch-up, gives the upper bound below. Racing and immediate
accommodation give the lower bound:
\[
\max\{P_\alpha(D),R(D)\}
\leq e^{-\beta G}V_L(D,G)
\leq\max\{P(D),R(D)\}.
\]
For $D\geq\bar D$, racing attains the upper bound $R(D)$.

\emph{Step 2: monotonicity in the gap and the safety residual.}
The waiting equation and its speed coefficient are
\begin{align*}
0&=1-gC_D-C_G-(r+\beta+\lambda(D))C,\\
Q_C&=C_D+C_G+\alpha C
=\frac{1-vC_G-(B+\lambda(D))C}{g}.
\end{align*}
For $0<D<\bar D$, differentiation gives
\[
(Q_C)_G=
\begin{cases}
\displaystyle\frac{\beta E_D(G)}g
\bigl[(r-gc+\lambda(D))P(D-gG)-v\bigr],&G<D/g,\\[1ex]
\displaystyle-\frac{\lambda(D)}gC_G,&G>D/g.
\end{cases}
\]
Both expressions are positive. For the first, the scalar waiting
inequality gives $(r-gc+\lambda(d))P(d)\geq1$ at $d=D-gG$,
and $\lambda(D)\geq\lambda(d)$. For the second, $C_G<0$ and
$\lambda(D)>0$; concavity and monotonicity imply the latter because
$R(0)<P(0)$ excludes identically zero risk. Since $Q_C$ is
continuous at $G=D/g$,
$H_D=Q_C/(vM)$, $H_{DG}>0$ wherever defined, and $H_G<0$.

When $\phi=0$, differentiation also gives $(Q_C)_D>0$ for
$G>0$, so $H(\cdot,G)$ has no interior local maximum.
Since $C(\bar D,G)<P(\bar D)=R(\bar D)$, a finite optimal stop
must then be immediate, recovering the choice between immediate
accommodation and perpetual racing.

The stopping region is defined by $H(D,G)\geq0$ and
$H(D,G)\geq H(d,G)$ for every $d\in[D,\bar D]$.
Since $H(D,G)$ decreases with $G$ and
$H(d,G)-H(D,G)$ increases with $G$ for $d>D$,
increasing the gap cannot turn a racing state into a stopping
state. Thus the rule that pauses in this closed region and races
outside is nondecreasing in $G$.

On safety, define
$J(G)=e^{-\beta G}V_L(0,G)-C(0,G)$. The stopping formula gives
\[
J(G)=\max\left\{0,\ R(0)-C(0,G),\
\max_{0\leq d\leq\bar D}[H(d,G)-H(0,G)]\right\}.
\]
Since $C_G<0$ and $H$ has increasing differences, $J$ is
nonnegative, nondecreasing, and locally Lipschitz. Using
$1-vC_G(0,G)-BC(0,G)=0$, the safety pacing residual is
\[
e^{\beta G}-v\partial_GV_L(0,G)-(r-gc)V_L(0,G)
=-e^{\beta G}[vJ'(G)+BJ(G)]\leq0
\]
almost everywhere. If $J(G)=0$, pacing reduces the gap and keeps
$J$ zero until the tie. Otherwise the earliest optimal choice is
a positive target or perpetual racing. The safety rule is
therefore also nondecreasing in $G$.

If $R(0)\leq1/B$, safety is a stopping state at every finite gap.
Immediate pacing beats perpetual racing because
$C(0,G)>1/B\geq R(0)$.
To compare pacing with racing to a positive target $d$ and then
accommodating, note that $H(d,G)-H(0,G)$ increases with $G$.
As $G\to\infty$, catch-up disappears and $C(d,G)\to P_\alpha(d)$,
so this difference converges to
$(P_\alpha(d)-R(d))/M(d)-[1/B-R(0)]$.
This is the gain from racing to $d$ and then waiting and pacing,
rather than pacing immediately, in the auxiliary problem with
$(\eta,\rho)=(\alpha,r+\beta)$.
By \cref{lem:scalar-control}, that gain is nonpositive.
Hence $H(d,G)\leq H(0,G)$ at every finite gap.

\emph{Step 3: accommodation remains optimal along the prescribed path.}
Start at a stopping state $(D_0,G_0)$ with $D_0,G_0>0$.
While waiting, $\dot D=-g$ and $\dot G=-1$ until safety or a tie,
and
\[
\frac{\de}{\de t}Q_C(D_t,G_t)
=(r+\beta+\lambda(D_t))Q_C(D_t,G_t)
+\lambda'(D_t)C(D_t,G_t)-\alpha.
\]
At $G=D/g$, the derivative along the waiting path is given by
the same expression from either formula for $C$.
Both $C(D_t,G_t)$ and $\lambda'(D_t)$ increase during waiting.
Thus the derivative of $Q_C$ multiplied by
$\exp\{-\int_0^t(r+\beta+\lambda(D_s))\de s\}$
can change sign only from negative to positive, so the transformed
coefficient is maximized at an endpoint. Initially $Q_C\leq0$
by optimality against larger targets. Its terminal limit is
$-\lambda_0C(0,G)/g\leq0$ at safety, or
$P'(D)+cP(D)\leq0$ at a tie below $\bar D$.
Hence $Q_C\leq0$ throughout waiting.

At a later waiting state $(D_1,G_1)$, the gap when the path passed
$d\in[D_1,D_0]$ was $G_1+(d-D_1)/g\geq G_1$.
Since $(Q_C)_G>0$, $Q_C(d,G_1)\leq0$, so
$H(D_1,G_1)\geq H(d,G_1)$ throughout this interval.
For $d\in[D_0,\bar D]$, increasing differences and initial
optimality give $H(d,G_1)\leq H(D_0,G_1)$, while
$H(D_0,G_1)\geq H(D_0,G_0)\geq0$.
Thus $(D_1,G_1)$ remains a stopping state.
Continuity extends this conclusion to safety, where monotonicity
of $J$ keeps pacing optimal until catch-up.

\emph{Step 4: verify the leader's best response.}
The racing and waiting residuals are, respectively,
$e^{\beta G}+v\partial_DV_L-(r-c+\lambda)V_L$ and
$e^{\beta G}-g\partial_DV_L-\partial_GV_L-(r+\lambda)V_L$.
Their difference is $e^{\beta G}Q$, where
$Q=e^{-\beta G}(\partial_DV_L+\partial_GV_L+cV_L)$.
In the stopping region, the waiting residual is zero and the
racing residual is $e^{\beta G}Q_C\leq0$.
Where $V_L=e^{\beta G}R$, the racing residual is zero;
$P_\alpha\leq R$ and \cref{lem:scalar-control} imply
$R'+\alpha R\geq0$, so the waiting residual is nonpositive.

Fix a gap $G$ at which the maximum over targets in the stopping
formula is differentiable with respect to $G$, and consider a
racing interval where $V_L>e^{\beta G}\max\{C,R\}$.
Such differentiability holds for almost every gap.
On this interval,
$V_L(D,G)=e^{\beta G}[R(D)+M(D)Z(G)]$,
where $Z$ is the maximum of $H(d,G)$ over a fixed target interval
bounded away from the current $D$. The maximum is locally
Lipschitz; wherever differentiable, every maximizing target has
$G$-derivative $Z'(G)$. The racing residual is zero.

Let $d_*$ be the optimal target ending the interval. It is
interior because $C(\bar D,G)<R(\bar D)$ for $G>0$.
Optimality and the envelope identity give $H_D(d_*,G)=0$,
$\partial_GV_L=e^{\beta G}(C_G+\beta C)$,
$Q(d_*,G)=0$, and
$\partial_DV_L(d_*,G)=e^{\beta G}C_D(d_*,G)\leq0$.
Along the racing interval,
$
v\partial_{DD}V_L
=(r-c+\lambda)\partial_DV_L+\lambda'V_L,$ and $
v\partial_DQ
=(r-c+\lambda)Q+\lambda'e^{-\beta G}V_L-\alpha.$ 
Integrating the first equation backward gives
$\partial_DV_L\leq0$. Consequently
$\lambda'e^{-\beta G}V_L-\alpha$ is nonincreasing in $D$, and
$
\partial_D(Q/M)
=\dfrac{\lambda'e^{-\beta G}V_L-\alpha}{vM}$ 
changes sign at most once, from positive to negative.
Thus $Q/M$ is minimized at an endpoint.

Let $d_-$ be the left endpoint. If $d_->0$, Step 3 keeps
accommodation optimal at $(d_--gt,G-t)$ for small $t\geq0$.
Racing to the fixed target $d_*$ remains feasible there, so
$
R(d_--gt)+M(d_--gt)H(d_*,G-t)
\leq C(d_--gt,G-t),
$ with equality at $t=0$. Right differentiation shows that this
fixed-target payoff has nonpositive waiting residual.
Its racing residual is zero, and its derivatives agree with
those of the envelope at the chosen gap, so $Q(d_-,G)\geq0$.
If $d_-=0$, the safety residual from Step 2 gives
\[
vQ(0+,G)=e^{-\beta G}
\left\{\lambda_0V_L(0,G)
-\left[e^{\beta G}-v\partial_GV_L(0,G)
-(r-gc)V_L(0,G)\right]\right\}\geq0.
\]
Hence $Q\geq0$ throughout the interval, and the waiting residual
is nonpositive. Fubini's theorem gives both residual inequalities
almost everywhere on $D>0,G>0$.

The payoff residual at any speed $a\in[0,1]$ is a convex
combination of the racing and waiting residuals.
To cover paths on switching boundaries, smooth $V_L$ with a
nonnegative mollifier of radius $\varepsilon$.
The derivative coefficients are constant and the flow and hazard
are locally Lipschitz, so both payoff residuals computed using
the smoothed value are at most $O(\varepsilon)$ uniformly on
compact subsets of $D>0,G>0$.
Integrating along any absolutely continuous control path and
letting $\varepsilon\downarrow0$ proves the verification inequality.
Apply this up to finite dates before a tie, then use continuity.

Below safety,
$
V_L(D,G)
=e^{\beta G}\frac{1-e^{(r-c)D/v}}{r-c}
+e^{(r-c)D/v}V_L(0,G).$ Its racing residual is zero, and
\[
Q(D,G)=\frac{\alpha(1-e^{(r-c)D/v})}{r-c}
+\frac{e^{(r-c)D/v}}v
\left\{e^{-\beta G}
[(r-gc)V_L(0,G)+v\partial_GV_L(0,G)]-1\right\}\geq0
\]
almost everywhere by Step 2. The same smoothing argument verifies
the inequality along paths in $D<0$.
On $D=0$, the leader's speed is $g$ almost everywhere, and the
safety residual applies; smoothing $V_L(0,\cdot)$ covers its
nondifferentiability set. Concatenate across these regions as in
\cref{lem:scalar-control}. At a tie, \cref{lem:tied-bound}
bounds every continuation by $\max\{P(D),R(D)\}$.

Integrate to the first tie or infinity. The bound
$V_L(D,G)\leq e^{\beta G}/(r-c)$ makes the discounted terminal
term vanish: before a tie the gap cannot increase, the frontier
grows at speed at most one, and $r>c$.
By Step 3 and the prescribed optimal common-speed path after
a tie, the earliest optimal stopping path attains the bound.

\emph{Step 5: verify the follower's best response.}
Use $b(D,G)$ for strict leaders, $a^*(D)$ at ties, and speed one
for strict followers.

For an initial strict follower, write $\bar A_t$ and $\bar A_t^F$
for the prescribed frontier and follower capabilities. The leading
firm races to its earliest optimal stopping point, then waits for
safety and paces; if a tie occurs while waiting, both firms complete
that continuation. If no finite target is optimal, the leading firm
races forever. The prescribed follower satisfies
$\bar A_t^F=\min\{A_0^F+t,\bar A_t\}$.

Fix any unilateral deviation by the follower, with the opponent
following its prescribed strategy, and any  continuation
path admitted by \cref{eq:generalized-deviation}. We first show that
the opponent's capability is at least $\bar A_t$ while the deviator
remains weakly below $\bar A_t^F$ and is a strict follower. During prescribed racing, the deviator's capability is at most
$A_0^F+t$. Until the opponent departs from the prescribed path, the  gap is
at least the prescribed gap. By Step 2 the gap  stays in the open racing
region until the earliest stopping point, including at safety
crossings with $J>0$. The opponent's regularized speed is therefore one,
precluding a departure to a lower path. During the prescribed pause, the opponent's  capability
cannot decrease. During prescribed pacing, the prescribed
frontier lies on the safety boundary, which the opponent cannot
cross from above: strictly below safety its speed is one and
$\dot D=v>0$. This proves the comparison.

If the prescribed follower never catches up, then
$\bar A_t^F=A_0^F+t$ forever. The deviator has no higher capability
and faces no lower opponent capability or hazard, so its payoff
is at most the prescribed payoff.

Otherwise, let $T_C$ be the prescribed catch-up date and let
$\tau\geq T_C$ be the first date when the deviator reaches
$\bar A_t$, with $\tau=\infty$ if it never does. Before $T_C$,
its capability is at most $A_0^F+t=\bar A_t^F$; from $T_C$
until $\tau$, it is below $\bar A_t=\bar A_t^F$. It remains
a strict follower before $\tau$, since an earlier tie would
require reaching an opponent capability at least $\bar A_t$.
The comparison therefore applies throughout this interval:
its flow payoff is no greater than the prescribed follower's,
and its cumulative hazard is no smaller.

If $\tau=\infty$, this proves the payoff bound. If $\tau<\infty$,
the deviator has capability $\bar A_\tau$ and remains weakly
behind. By \cref{lem:tied-bound}, its continuation payoff is
at most $e^{c\bar A_\tau}
\max\{P(\bar A_\tau-X_\tau),R(\bar A_\tau-X_\tau)\}$.
The prescribed firms are tied at $\bar A_\tau$ and their
continuation attains this bound. Combining the continuation
bound with the preceding flow and survival comparisons proves
that no follower deviation is profitable.

\emph{Step 6: one path gives both equilibrium payoffs.}
The prescribed path, followed by the prescribed optimal
common-speed path after a tie, is jointly admissible.
Step 4 shows that it attains the leader's value, and Step 5
bounds every follower deviation by its follower payoff.
Both payoffs are finite: the leader's value is bounded in Step 4,
and the follower's flow is at most $e^{c(A_0^F+t)}$.
Thus \cref{lem:generalized-verification} establishes the
pacing equilibrium.
\end{proof}

\begin{proof}[Proof of \cref{thm:main}]
By \cref{lem:racing-cutoffs}, global racing is an \MPE{} exactly when
$\lambda^{\mathrm{eff}}\leq(1-g)\alpha$, and no equilibrium outcome
paces forever from a tie on safety when
$\lambda^{\mathrm{eff}}<(1-g)(\alpha-\beta)$.
By \cref{lem:pacing-verification}, pacing is an \MPE{} whenever
$\lambda^{\mathrm{eff}}\geq(1-g)(\alpha-\beta)$, including equality.
The two lemmas give the three regimes and paths attaining both payoffs. From a tie
on safety, the pacing path incurs zero hazard and gives each firm $P(0)$.
\end{proof}

\subsection{Proofs of the comparative statics}
\label{sec:comparative-statics-proofs}

\begin{proof}[Proof of \cref{prop:comparative-statics}]
We express effective hazard as an average over racing dates, then
compare the hazard and the weights. Write the common growth rate
as $c=\alpha-\beta$ and the speed at which a race moves away from
safety as $v=1-g$. Define the density of payoff weights by
\[
w(t):=\frac1{R(0)}
\exp\left\{-(r-c)t-\int_0^t\lambda(vs)\,\de s\right\},
\qquad t\geq0.
\]
The density integrates to one. Integrating the derivative of the
exponential gives
\begin{equation*}\tag{WH}\label{eq:weighted-hazard}
\lambda^{\mathrm{eff}}
=\int_0^\infty\lambda(vt)w(t)\,\de t.
\end{equation*}
The boundary terms are one at zero and zero at infinity, since
$r>c$. Concavity bounds $\lambda$ by an affine function at large
distances, so all integrals below are finite. The jump at zero does
not affect the integrals or the derivative at positive dates.

\emph{Step 1: Higher risk.}
A pointwise increase in $\lambda$ weakly lowers $R(0)$ and hence
weakly raises $\lambda^{\mathrm{eff}}/(1-g)$ at fixed $g$ and $r$.

\emph{Step 2: Faster safety.}
Compare racing speeds away from safety $0<v_2<v_1$, and let $w_2$
and $w_1$ be their payoff densities. Their ratio is nondecreasing:
\[
\frac{\de}{\de t}\log\frac{w_2(t)}{w_1(t)}
=\lambda(v_1t)-\lambda(v_2t)\geq0.
\]
If the densities coincide, their averages coincide. Otherwise,
because both densities integrate to one, their difference changes
sign at most once, from negative to positive. Integrating any
nondecreasing function against $w_2-w_1$ is therefore nonnegative:
subtract its value at a crossing, and each side of the crossing
has a nonnegative integrand.

Concavity and $\phi(0)=0$ give, for every $t>0$,
\[
\frac{\lambda(v_2t)}{v_2}
=\frac{\lambda_0+\phi(v_2t)}{v_2}
\geq\frac{\lambda_0+\phi(v_1t)}{v_1}
=\frac{\lambda(v_1t)}{v_1}.
\]
The function on the right is nondecreasing in $t$. Applying the
density comparison and \cref{eq:weighted-hazard} gives
\begin{align*}
\int_0^\infty\frac{\lambda(v_2t)}{v_2}w_2(t)\,\de t
&\geq\int_0^\infty\frac{\lambda(v_1t)}{v_1}w_2(t)\,\de t\geq\int_0^\infty\frac{\lambda(v_1t)}{v_1}w_1(t)\,\de t.
\end{align*}
Thus the ratio increases weakly with $g$. Under linear risk with
$\kappa>0$, the density ratio is strictly increasing and the
integrand $\lambda(v_1t)/v_1=\kappa t$ is strictly increasing,
so the last inequality is strict.

\emph{Step 3: Greater patience.}
At fixed $g$ and hazard profile, increasing $r$ multiplies the
unnormalized payoff weights by a strictly decreasing exponential
in $t$. The same crossing argument, with signs reversed, lowers
the average of the nondecreasing function $\lambda(vt)$ in
\cref{eq:weighted-hazard}. The decrease is strict when $\phi$ is
nonconstant: the density ratio is strictly decreasing and the
hazard differs on intervals of positive length. If $\phi=0$,
\cref{eq:weighted-hazard} instead gives
$\lambda^{\mathrm{eff}}=\lambda_0$. Comparing
$\lambda^{\mathrm{eff}}/v$ with $c$ and $\alpha$ in
\cref{thm:main} proves the equilibrium-region claims.
\end{proof}

\subsection{Proof of \cref{thm:spe}} \label{sec:spe-proofs}
\begin{proof}[Proof of \cref{thm:spe}(a)]
Normalize the initial safety tie to zero and put $c=\alpha-\beta$, which implies $R(0)>P(0)$ and $2\alpha\beta R(0)<c$.
Moreover, $R(D)\leq R(0)$ and, by \cref{lem:scalar-control},
$R'(D)+cR(D)>0$ for $D>0$. Define $W(G):=2P(0)(\alpha e^{\beta G}+\beta e^{-\alpha G})
/(\alpha+\beta)$ and
$\widehat V(S):=[R(\max\{D,0\})/P(0)]
e^{c\max\{A,X\}}W(\max\{A,X\}-\min_i A_i)$.
This function is nonnegative, locally Lipschitz, and equals $2R(0)$
at the initial tie.
long any feasible path, define
$\mathcal E_t:=\pi_1+\pi_2+\frac{\de}{\de t}\widehat V(S_t)
-(r+\lambda(D_t))\widehat V(S_t)$.
We show that $\mathcal E_t\leq0$ almost everywhere.

On or below safety, put $G=X-\min_i A_i$.
Raising the leader's capability to $X$ increases total flow, so
$\pi_1+\pi_2\leq e^{cX}(e^{\beta G}+e^{-\alpha G})$.
Also, $\dot G\leq g$, $W'\geq0$, and
$W/P(0)-(e^{\beta G}+e^{-\alpha G})-gW'
=[c-2\alpha\beta gP(0)](e^{\beta G}-e^{-\alpha G})
/(\alpha+\beta)\geq0$.
Since $1/P(0)=r-gc$, $\mathcal E_t$ is at most
$(1-R(0)/P(0))(\pi_1+\pi_2)<0$. This also applies almost everywhere on the safety boundary, where \(\dot A=g\).

Above safety, write $\pi_L$ and $\pi_F$ for the leader's and
follower's flows. Here
$\widehat V=2R(D)(\alpha\pi_L+\beta\pi_F)/(\alpha+\beta)$.
At a strict lead, $\mathcal E_t$ is linear in speeds, with positive
coefficient on the leader's speed and nonpositive coefficient on
the follower's. Indeed, these coefficients are respectively
$2[(R'+cR)(\alpha\pi_L+\beta\pi_F)
+\alpha\beta R(\pi_L-\pi_F)]/(\alpha+\beta)$ and
$-2\alpha\beta R(\pi_L-\pi_F)/(\alpha+\beta)$.
It is therefore maximized when the leader races and the follower
pauses. Substituting
$(1-g)R'=(r-c+\lambda)R-1$ bounds it by
$[2\alpha\beta R(D)-c](\pi_L-\pi_F)/(\alpha+\beta)<0$.
On an unsafe tie, speeds agree almost everywhere; at common speed
$a$, the same expression equals
$-2e^{cA}(1-a)(R'+cR)\leq0$, with equality only when $a=1$.

Multiply $\mathcal E_t$ by
$e^{-rt-\Lambda_t}$ and integrate. Dropping the nonnegative
terminal value and letting the horizon tend to infinity gives
total payoff at most $2R(0)$. A strict residual inequality on a set of positive measure makes total payoff strictly less than $2R(0)$. Equality therefore requires zero time on or below safety and no strict capability lead: any positive gap persists on an interval. The firms
must therefore remain tied and race almost everywhere. By \cref{lem:race-baseline}, each firm can guarantee $R(0)$ from
the initial tie, so every SPE outcome must be perpetual joint racing.
\end{proof}

\paragraph{The payoff guarantee for part (b).}
\begin{comment}
The following argument shows that if $m$ is a valid continuation
floor at every safety tie, then
\[
\widetilde L_1(\varepsilon)
+q(\varepsilon)[m-R(0)]
\]
is a valid floor before a pause of length $\varepsilon$.
The following argument shows that if $m$ is a valid continuation
floor at every safety tie, then
\[
\widetilde L_1(\varepsilon)
+q(\varepsilon)[m-R(0)]
\]
is a valid floor before a pause of length $\varepsilon$.
\end{comment}
For a pause of length
$\varepsilon>0$, define the maximum waiting time $T$ and the
guaranteed safe interval length $\Delta$ by
$
T:=\frac{1-g}{g}\varepsilon
$ and $
\Delta:=\frac{2g-1}{g(1-g)}\varepsilon.$
The first step in \cref{fig:recursive-floor} illustrates the rival's
safe option when it races throughout the initial pause. The payoff
bound also covers every other response to that pause.

After the pause, the rival can wait to safety if it is above it,
or race to safety if it is below it, and then pace. Along this
comparison its excess above safety is at most
$((1-g)\varepsilon-gs)_+$.
Write $\Gamma_\varepsilon(s)$ for the resulting exposure bound,
measuring elapsed time $s$ from the end of the pause:
\[
\Gamma_\varepsilon(s):=\frac1g
\int_{((1-g)\varepsilon-gs)_+}^{(1-g)\varepsilon}
\lambda(D)\,\de D,
\qquad s\geq0.
\]
The exposure bound is constant after $T$. If the rival races
throughout the pause, then waits, paces, and finally races from
safety at elapsed time $T+\Delta$, its continuation payoff is at least
\[
\begin{aligned}
\widetilde L_2(\varepsilon)
&:=e^{\alpha\varepsilon}
  \int_0^T e^{-(r+\beta)s-\Gamma_\varepsilon(s)}\,\de s\\
&\quad+e^{\alpha g\varepsilon-\Gamma_\varepsilon(T)}
  \int_T^{T+\Delta}e^{-(r+\beta-\alpha g)s}\,\de s\\
&\quad+e^{-\Gamma_\varepsilon(T)
  -(r-\alpha+\beta)(T+\Delta)}R(0).
\end{aligned}
\]
The payoff-ratio argument below converts this comparison into
a bound valid after every response to the pause. Define
\begin{equation*}\tag{RP}\label{spe:eq:refined-pause}
\begin{aligned}
\underline{V}_1&:=\max\left\{R(0),\ \sup_{\varepsilon>0}
     \widetilde L_1(\varepsilon)\right\}\\
\widetilde L_1(\varepsilon)
:=\int_0^\varepsilon e^{-(r+\beta)s-\Lambda_s^R}\,\de s
&+\frac{r-\alpha}{r+\beta}
 e^{-(r+\alpha+\beta)\varepsilon-\Lambda_\varepsilon^R}
 \widetilde L_2(\varepsilon).
\end{aligned}
\end{equation*}
The terminal guarantee in $\widetilde L_2$ is $R(0)$.
Using a continuation guarantee $m$ instead gives the pause bound
$\widetilde L_1(\varepsilon)+q(\varepsilon)[m-R(0)]$, where
\[
q(\varepsilon):=\frac{r-\alpha}{r+\beta}
 e^{-(r+\alpha+\beta)\varepsilon-\Lambda_\varepsilon^R
 -\Gamma_\varepsilon(T)-(r-\alpha+\beta)(T+\Delta)}.
\]
We prove this bound for any $m\in[R(0),P(0)]$ that each firm
is guaranteed after every feasible history ending at a safety tie.
The choice $m=R(0)$ remains valid when $R(0)>P(0)$.

\begin{proof}[Proof of the general pause bound]
Put $c=\alpha-\beta$ and $k=(r-\alpha)/(r+\beta)$.
Let $m\in[R(0),P(0)]$ be a normalized payoff guarantee after
every history ending at a safety tie. It also applies to a
leader on safety: pacing until the first tie at date $t$ and
then resuming equilibrium gives at least
$P(0)+e^{-t/P(0)}[m-P(0)]\geq m$.
If no tie occurs, pacing gives at least $P(0)$.
When $m=R(0)>P(0)$, the same guarantee follows instead from
\cref{lem:race-baseline}.

At any history ending at $S$, speed limits imply
$\pi_i(S_s)\leq\pi_i(S)e^{\alpha s}$ and
$\pi_j(S_s)\geq\pi_j(S)e^{-\beta s}$.
For $0<d<d'$, common survival gives
$d\int_0^\infty e^{-ds-\Lambda_s}\,\de s
\leq d'\int_0^\infty e^{-d's-\Lambda_s}\,\de s$,
because the two sides equal
$\mathbb E[1-e^{-d\tau_D}]$ and
$\mathbb E[1-e^{-d'\tau_D}]$.
Taking $d=r-\alpha$ and $d'=r+\beta$, then taking suprema
over the same admissible paths, yields
\begin{equation*}\tag{PR}\label{spe:eq:payoff-ratio}
V_j(\sigma,h)\geq
k\,\frac{\pi_j(S)}{\pi_i(S)}V_i(\sigma,h).
\end{equation*}

Normalize the initial safety tie to zero and let firm 1 pause
until $\varepsilon$. Every response ends at
$S_\varepsilon=(0,x,g\varepsilon)$ with $x\in[0,\varepsilon]$.
Restart elapsed time there and put
$H:=T+\Delta=g\varepsilon/(1-g)$.
Firm 2 can follow
$A_{2s}^x=\max\{x,\min\{x+s,g\varepsilon+gs\}\}$
until $H$, then resume equilibrium.
This path waits or races to safety and then paces.
Since $A_{1s}\leq s\leq A_{2s}^x$ for $s\leq H$,
firm 2 remains weakly ahead. Exposure is at most
$\Gamma_\varepsilon(s)$, and at $H$ firm 2 is on safety
with capability $H$, where its continuation guarantee is
$e^{cH}m$.

The comparison path with $x=\varepsilon$ gives firm 2 at least
$B_\varepsilon(m):=\widetilde L_2(\varepsilon)
+e^{-\Gamma_\varepsilon(T)-(r-c)H}[m-R(0)]$.
Moreover,
$0\leq A_{2s}^{\varepsilon}-A_{2s}^x\leq\varepsilon-x$,
so
$e^{-(\alpha+\beta)x}e^{\alpha A_{2s}^x-\beta s}
\geq e^{-(\alpha+\beta)\varepsilon}
e^{\alpha A_{2s}^{\varepsilon}-\beta s}$.
The terminal guarantee is independent of $x$ and satisfies
the same comparison. Subgame perfection for firm 2 and
\cref{spe:eq:payoff-ratio} therefore imply
$V_1(\sigma,h_\varepsilon)
\geq k e^{-(\alpha+\beta)\varepsilon}B_\varepsilon(m)$
after every response to the pause.

During the pause, firm 1's flow is at least $e^{-\beta s}$
and exposure is at most $\Lambda_s^R$.
Pausing and then resuming equilibrium consequently gives at least
$\int_0^\varepsilon e^{-(r+\beta)s-\Lambda_s^R}\,\de s
+k e^{-(r+\alpha+\beta)\varepsilon-\Lambda_\varepsilon^R}
B_\varepsilon(m)
=\widetilde L_1(\varepsilon)+q(\varepsilon)[m-R(0)]$.
By \cref{spe:lem:paths}, each prescribed segment can be joined
to a continuation approaching the relevant firm's assigned
payoff. The initial pacing argument uses the same concatenation
at a first tie along any admissible response, and the argument applies symmetrically after every safety tie.
\end{proof}

\emph{Iterating the guarantee.}
Starting from the racing guarantee, define
\begin{equation*}\tag{RF}\label{spe:eq:recursive-floor}
\begin{aligned}
\underline{V}_0&:=R(0),\\
\underline{V}_{n+1}&:=\max\left\{R(0),\sup_{\varepsilon>0}
\left[\widetilde L_1(\varepsilon)
+q(\varepsilon)(\underline{V}_n-R(0))\right]\right\}.
\end{aligned}
\end{equation*}
The first iterate is \cref{spe:eq:refined-pause}. The limit has the formula
\begin{equation*}\tag{LF}\label{spe:eq:recursive-limit}
\underline{V}_\infty=\max\left\{R(0),\sup_{\varepsilon>0}
\frac{\widetilde L_1(\varepsilon)-q(\varepsilon)R(0)}
     {1-q(\varepsilon)}\right\}.
\end{equation*}

\begin{lemma}[Recursive guarantees tighten the risk bound]
\label{lem:recursive-spe-floor}
If the hypotheses of \cref{thm:spe}(b) hold, then the
floors $\underline{V}_n$ increase weakly to a finite limit $\underline{V}_\infty$, given by \cref{spe:eq:recursive-limit}.
After every feasible history ending at a safety tie, every \SPE{} gives
each firm normalized payoff at least $\underline{V}_\infty\geq \underline{V}_1$.
The inequality is strict whenever $\underline{V}_1>R(0)$.
If $R(0)\leq P(0)$, then $\underline{V}_\infty\leq P(0)$ and
\[
0\leq \underline{V}_\infty-\underline{V}_n
\leq\left(\frac{r-\alpha}{r+\beta}\right)^n
       [P(0)-R(0)].
\]
If $R(0)>P(0)$, then $\underline{V}_n=\underline{V}_\infty=\underline{V}_1=R(0)$ for every $n$.
\end{lemma}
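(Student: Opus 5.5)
The plan is to study the recursion \cref{spe:eq:recursive-floor} as iteration of the map
\[
T(m):=\max\Bigl\{R(0),\sup_{\varepsilon>0}\bigl[\widetilde L_1(\varepsilon)+q(\varepsilon)(m-R(0))\bigr]\Bigr\}.
\]
This map is a supremum of affine maps in $m$, each with slope $q(\varepsilon)\in(0,k]$, where $k=(r-\alpha)/(r+\beta)\in(0,1)$ because $r>\alpha$. The bound $q(\varepsilon)\leq k$ is read directly from the definition, since every exponent in $q$ is nonpositive. Hence $T$ is nondecreasing, convex, and a contraction with modulus $k$ on $[R(0),\infty)$. Two facts then give monotone convergence: $\underline V_1=T(\underline V_0)\geq R(0)=\underline V_0$, and $T$ is monotone. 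So $\underline V_n$ increases weakly to the unique fixed point $\underline V_\infty$, which is finite by the contraction property.

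To identify the fixed point with \cref{spe:eq:recursive-limit}, I would use the following step. For each fixed $\varepsilon$, the affine map $m\mapsto\widetilde L_1+q(m-R(0))$ has fixed point $(\widetilde L_1-qR(0))/(1-q)$. Since $T$ is a supremum of slope-$<1$ affine maps together with the constant $R(0)$, $m\geq T(m)$ holds iff $m\geq R(0)$ and $m$ is at least every one of these individual fixed points. The least such $m$ is the maximum in \cref{spe:eq:recursive-limit}, and by the contraction property this least solution is exactly the fixed point of $T$.

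For the equilibrium guarantee, I would argue by induction using the general pause bound just proved. $\underline V_0=R(0)$ is guaranteed at every safety tie by \cref{lem:race-baseline}. If $\underline V_n\in[R(0),P(0)]$ is guaranteed after every history ending at a safety tie, the pause bound with $m=\underline V_n$ shows each firm gets at least $\widetilde L_1(\varepsilon)+q(\varepsilon)(\underline V_n-R(0))$ for each $\varepsilon$, and also at least $R(0)$. So $\underline V_{n+1}$ is guaranteed. Passing to the limit gives the guarantee $\underline V_\infty$. This step needs the membership $m\leq P(0)$, which comes from the next point.

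The key step, and I expect the main obstacle, is the claim $\underline V_\infty\leq P(0)$ when $R(0)\leq P(0)$, which keeps the induction within the range where the pause bound was proved. My approach is to show $T(P(0))\leq P(0)$, i.e.\ $\widetilde L_1(\varepsilon)+q(\varepsilon)(P(0)-R(0))\leq P(0)$ for every $\varepsilon$. The argument is a direct upper estimate. Each term of $\widetilde L_1$ is a discounted profit stream under pausing, then at most $\alpha$-growth, then pacing. With terminal value $P(0)$ in place of $R(0)$, the total is the payoff of a feasible path discounted more heavily and shrunk by the factor $k<1$. That is dominated by $P(0)$, the value of pacing from the tie, using $r+\beta>r-g(\alpha-\beta)$ and $\alpha g\varepsilon$-type growth bounded by the $(r+\alpha+\beta)\varepsilon$ discount. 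If this direct estimate becomes messy, the fallback is to invoke the aggregate bound that no firm can exceed the best single-firm value, and to use $g>1/2$, which ensures $\Delta\geq0$. Given $T(P(0))\leq P(0)$, monotonicity yields $\underline V_n\leq P(0)$ for all $n$.

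The remaining claims are then routine. The rate bound follows from contraction: $\underline V_\infty-\underline V_n\leq k^n(\underline V_\infty-\underline V_0)\leq k^n(P(0)-R(0))$. For strictness, if $\underline V_1>R(0)$ then some $\varepsilon$ gives $\widetilde L_1(\varepsilon)>R(0)$, so $\underline V_2\geq\underline V_1+q(\varepsilon)(\underline V_1-R(0))>\underline V_1$, whence $\underline V_\infty>\underline V_1$. Finally, if $R(0)>P(0)$, I would show $\widetilde L_1(\varepsilon)\leq R(0)$ for every $\varepsilon$, by the same domination argument with racing as the comparison. Then $T(R(0))=R(0)$, so all iterates equal $R(0)$.
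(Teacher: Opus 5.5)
\paragraph{Gap in the strictness step.} For each $\varepsilon$, the update gives only $\underline V_2\ge\widetilde L_1(\varepsilon)+q(\varepsilon)(\underline V_1-R(0))$. This exceeds $\underline V_1=\sup_{\varepsilon>0}\widetilde L_1(\varepsilon)$ only if $\varepsilon$ attains the supremum, or if a maximizing sequence keeps $q$ bounded away from zero. Choosing an arbitrary $\varepsilon$ with $\widetilde L_1(\varepsilon)>R(0)$ does not let you replace $\widetilde L_1(\varepsilon)$ by $\underline V_1$. Since $q(\varepsilon)\to0$ as $\varepsilon\to\infty$, a supremum approached only as $\varepsilon\to\infty$ would give no strict gain.

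The missing step, which the paper supplies, is attainment. The function $\widetilde L_1$ is continuous and tends to $kR(0)<R(0)$ as $\varepsilon\downarrow0$. As $\varepsilon\to\infty$ it tends to $\int_0^\infty e^{-(r+\beta)s-\Lambda_s^R}\,\de s<R(0)$, because the continuation term's multiplier decays at least like $e^{-(r+\alpha+\beta)\varepsilon}$ while $\widetilde L_2(\varepsilon)$ grows at most like $e^{\alpha\varepsilon}$. So if $\underline V_1>R(0)$, the supremum is attained at some finite $\varepsilon^*>0$. Then $q(\varepsilon^*)>0$ gives $\underline V_2\ge\underline V_1+q(\varepsilon^*)(\underline V_1-R(0))>\underline V_1$.

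\paragraph{Your cap on the floors takes a different route from the paper.} The rest of your treatment of the recursion matches the paper: the monotone $k$-Lipschitz update, the limit as the least $m\ge R(0)$ above every affine fixed point, and the geometric error bound. The genuine difference is how you cap the floors. The paper does no computation there. Each iterate is a valid SPE floor at safety ties, and \cref{thm:main} supplies an SPE paying exactly $P(0)$ at every safety tie when $R(0)\le P(0)$ (the pacing MPE), and exactly $R(0)$ when $R(0)>P(0)$ (the racing MPE). So no valid floor can exceed that value.

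Your direct domination also works, but it rests on an exact identity your sketch does not name. Write $T=\frac{1-g}{g}\varepsilon$ for the waiting time (not your update map) and $H=T+\Delta=\frac{g}{1-g}\varepsilon$. Then $H-g\Delta=\varepsilon$, hence
\[
\alpha\varepsilon+(r-\alpha+\beta)H=(r+\beta)T+(r+\beta-\alpha g)\Delta.
\]
Once the prefactor $e^{-(r+\alpha+\beta)\varepsilon}$ absorbs the $e^{\alpha\varepsilon}$ in $\widetilde L_2$ and hazards are dropped, $\widetilde L_1(\varepsilon)+q(\varepsilon)(m-R(0))$ is at most a combination of $1/(r+\beta)$, $1/(r+\beta-\alpha g)$ and $m$. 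The total weight is $1-(1-k)e^{-(r+\beta)\varepsilon}\le1$; this uses $\Delta\ge0$, i.e.\ $g>1/2$.

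Since $1/(r+\beta)<1/(r+\beta-\alpha g)<P(0)$, the update maps every $m\ge\max\{R(0),1/(r+\beta-\alpha g)\}$ to at most $m$. This yields three things:
\begin{itemize}
\item the cap by $P(0)$;
\item the collapse $\widetilde L_1\le R(0)$ when $R(0)>P(0)$;
\item the finiteness of $\sup_\varepsilon\widetilde L_1$.
\end{itemize}
You credited that finiteness to the contraction property, but a contraction has a finite fixed point only once it is known to be finite-valued.

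Your route is self-contained, since it avoids the full MPE verification behind \cref{thm:main}. It also sharpens the cap to $\underline V_\infty\le\max\{R(0),1/(r+\beta-\alpha g)\}$. The paper's route is a one-line consequence of equilibrium existence.
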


\begin{proof}[Proof of \cref{lem:recursive-spe-floor}]
\emph{Step 1: verify each floor.}
If $R(0)\leq P(0)$, the pause bound maps any valid floor
$\underline{V}_n\leq P(0)$ into the valid floor $\underline{V}_{n+1}$.
The pacing \MPE{} in \cref{thm:main} is an \SPE{} giving exactly $P(0)$
from a safety tie, so $\underline{V}_{n+1}\leq P(0)$.
Induction from $\underline{V}_0=R(0)$ proves the guarantee after
every feasible history ending at a safety tie. If $R(0)>P(0)$, the racing \MPE{} gives
exactly $R(0)$ at a safety tie. The first pause floor must therefore
equal $R(0)$, and all subsequent iterates remain there.

\emph{Step 2: identify the limit and the improvement.}
The update in \cref{spe:eq:recursive-floor} increases with its input.
Changing that input by an amount $h$ changes the output by at most
$|h|(r-\alpha)/(r+\beta)$, since
$0<q(\varepsilon)\leq(r-\alpha)/(r+\beta)$.
Since $\underline{V}_1\geq\underline{V}_0$, the floors increase.
Step 1 bounds them above, so they have a finite limit. The displayed
bound on changes makes the update continuous, so it leaves this limit
unchanged. Applying the same bound to $\underline{V}_n$ and
$\underline{V}_\infty$ at each iteration gives the stated error bound.
Each firm receives at least every finite floor, hence at least the limit.

For any proposed floor $m$, the update is at most $m$ exactly when
$m\geq R(0)$ and $ m\geq\dfrac{\widetilde L_1(\varepsilon)-q(\varepsilon)R(0)}
               {1-q(\varepsilon)}$ for every $\varepsilon>0.$ The limit is unchanged by the update, so these inequalities put it
above the right side of \cref{spe:eq:recursive-limit}. Conversely,
setting $m$ equal to that right side makes the update at most $m$,
so every iterate is at most $m$. This proves the formula.

If $\underline{V}_1>R(0)$, a finite positive
pause attains the original optimum: its payoff is continuous,
tends to $\frac{r-\alpha}{r+\beta}R(0)<R(0)$ as the pause vanishes,
and, as it grows without bound, tends to
\[
\int_0^\infty e^{-(r+\beta)s-\Lambda_s^R}\,\de s
<\int_0^\infty e^{-(r-\alpha+\beta)s-\Lambda_s^R}\,\de s
=R(0).
\]
The continuation term vanishes: ignoring hazard bounds the rival's
comparison payoff by $e^{\alpha\varepsilon}/(r-\alpha)+R(0)$,
while its multiplier decays at least as
$e^{-(r+\alpha+\beta)\varepsilon}$.
At that maximizing pause, replacing $R(0)$ by $\underline{V}_1$ adds the strictly
positive amount $q(\varepsilon)[\underline{V}_1-R(0)]$. Thus $\underline{V}_2>\underline{V}_1$ and
$\underline{V}_\infty>\underline{V}_1$. If $\underline{V}_1=R(0)$, the iteration is constant instead.
\end{proof}

\begin{proof}[Proof of \cref{thm:spe}(b)]
Fix an \SPE{} outcome from a safety tie, and normalize the initial
common capability to zero. At date $s$, both capabilities
lie in $[0,s]$. Their summed flow is convex, so it is maximized at a corner
of this square. The maximizing corner has both capabilities equal
to $s$, or one equal to $s$ and the other zero.  Let $F(s):=\max\left\{e^{(\alpha-\beta)s},
\frac{e^{\alpha s}+e^{-\beta s}}2\right\}.$ This gives the flow ceiling $2F(s)$. Since an \SPE{}
outcome attains both firms' assigned payoffs on the same path,
\cref{lem:recursive-spe-floor} implies
\[
\underline{V}_\infty\leq\frac{V_1(\sigma,h_0)+V_2(\sigma,h_0)}{2}
\leq\int_0^\infty e^{-rs-\Lambda_s}F(s)\,\de s.
\]
Fix a finite date $t$ and write $z=\Lambda_t$. The hazard ceiling
$\lambda((1-g)s)$ and monotonicity of exposure imply
\[
\Lambda_s\geq
\begin{cases}
\max\{0,z-(\Lambda_t^R-\Lambda_s^R)\},&s\leq t,\\
z,&s\geq t.
\end{cases}
\]
Substituting this bound gives $\underline{V}_\infty\leq\overline{V}(t,z)$, where
\begin{equation*}\tag{UC}\label{spe:eq:payoff-ceiling}
\begin{aligned}
\overline{V}(t,z)
&:=\int_0^t e^{-rs}F(s)
\min\{1,e^{-z+\Lambda_t^R-\Lambda_s^R}\}\,\de s\\
&\quad+e^{-z}\int_t^\infty e^{-rs}F(s)\,\de s.
\end{aligned}
\end{equation*}
This comparison delays exposure as long as possible and allows no
further risk after $t$; it need not be a feasible capability path.
For fixed $t$, $\overline V(t,z)$ is continuous and strictly decreasing
in $z$: the first term is nonincreasing, while the positive tail is
strictly decreasing. Define
$
\overline\Lambda(t):=
\max\left\{
z\in[0,\Lambda_t^R]:
\overline V(t,z)\geq\underline V_\infty
\right\}.$ The set is nonempty. At least one of the racing and pacing profiles is
an \SPE{}, and applying the preceding payoff bound to an attaining
outcome of that equilibrium produces some
$z\in[0,\Lambda_t^R]$ for which
$\overline V(t,z)\geq\underline V_\infty$.

Since \(\overline V(t,\Lambda_t)\geq\underline V_\infty\), we have \(\Lambda_t\leq\overline\Lambda(t)\), so e $
\Pr(\tau_D\leq t)=1-e^{-\Lambda_t}
\leq1-e^{-\overline\Lambda(t)}.$ \end{proof}

\subsection{Proof of \cref{prop:high SPE}}
\begin{proof}
Put $\delta:=r-\alpha+\beta>0$ and $q:=\alpha-r-\lambda>0$.
Choose $\rho\in(0,1)$ sufficiently close to one that
\[
r+\beta+2\log\rho>\alpha-r,
\qquad
\frac{\rho^2}{1-\rho^2}
\frac{1-e^{-(r+\beta+\lambda)}}{r+\beta+\lambda}
>\frac1 \delta.
\tag{*}
\]
Such a $\rho$ exists because $r+\beta>\alpha-r$ and the left-hand
side of the second inequality diverges as $\rho\uparrow1$.

Let $M_n:=e^{\alpha-r}\rho^n$ for $n\geq-1$.  Construct a reference
path as follows.  Firms alternate: in even block $n$, firm $1$ races
and firm $2$ waits; in odd block $n$, firm $2$ races and firm $1$
waits.  Let $T_0=0$, and end block $n$ at the first date $T_{n+1}$
at which the active firm's discounted flow
$f_i(t):=e^{-rt-\Lambda_t}\pi_i(S_t)$ reaches $M_n$.

While active, $\de\log f_i/\de t\in[q,\alpha-r]$; while inactive,
$\de\log f_i/\de t\in[-(r+\beta+\lambda),-(r+\beta)]$.  Hence every
target is reached in finite time.  Writing
$\Delta_n:=T_{n+1}-T_n$, we have $\Delta_0\geq1$ and
\[
\Delta_{n+1}\geq
\frac{(r+\beta)\Delta_n+2\log\rho}{\alpha-r}.
\]
The first inequality in (*) implies inductively that $\Delta_n\geq1$
for all $n$, so the switching dates have no finite accumulation point.

Payoffs are finite.  An active block ending at $M_n$ contributes at
most $M_n/q$, and the following inactive block contributes at most
$M_n/(r+\beta)$.  Since each firm's own targets decline geometrically
by the factor $\rho^2$, these bounds are summable.

Fix a date $t\in[T_n,T_{n+1})$.  Both current discounted flows are at
most $M_{n-1}$.  Each firm's next target is at least
$\rho^2M_{n-1}$, and after every future target it waits for at least
one unit of time.  During that unit its discounted flow is at least
$M_m e^{-(r+\beta+\lambda)s}$.  Thus, by the second inequality in (*),
$
\int_t^\infty f_i(s)\,\de s
>
\frac{M_{n-1}}\delta
\geq \frac{\max_j f_j(t)}\delta$ for $i=1,2$.
Restarting discounting and survival at $t$ gives
\[
V_i^*(h_t)>
\frac{\max_j\pi_j(S_t)}{r-\alpha+\beta},
\qquad i=1,2.
\tag{**}
\]

To support this path, prescribe the reference actions as long as the
public history exactly coincides with the reference history.  After
any departure, prescribe perpetual racing forever.  Perpetual racing
is an SPE in the stated hazard range: against a racing opponent, every
measurable deviation yields at most
$\max_j\pi_j(S)/\delta$.  By (**), at every reference history
each firm's prescribed continuation is strictly better than this
off-path bound.  Hence no deviation, nor any alternative admissible
continuation that leaves the reference path, is profitable.  After a
departure, perpetual racing is itself sequentially optimal.  The
strategy profile is therefore an SPE.

Initially, (** ) gives $V_i^*>1/(r-\alpha+\beta)$.  Since $g<1$,
$
\frac1{r-\alpha+\beta}>
\frac1{r-g(\alpha-\beta)}=P.$ 
Thus $V_i^*>P$ for both firms.
\end{proof}

\subsection{Proof of \cref{prop:pacing-iff}}
\begin{proof}
We first bound equilibrium payoffs from below. We then show that
racing attains the bound and use it to sustain pacing.

\emph{Step 1: the payoff guarantee and necessity.}
Fix a feasible history $h_t$ ending at state $S=(A_1,A_2,X)$.
Let firm $i$ race thereafter. Its rival develops no faster than
under mutual racing, so firm $i$'s flow profit is at least as high
and hazard is no greater. This comparison holds on every admissible
path: constant racing retains speed one under regularization, and
paths exist by \cref{spe:lem:paths}. Hence every SPE satisfies
$
V_i(\sigma,h_t)\geq\pi_i(A_i,A_j)R_i(S).$ If pacing is an SPE outcome, each continuation attains its assigned
payoff. Otherwise, a better continuation after the exact history
could be concatenated by \cref{spe:lem:paths} to improve the original
finite payoff. At every safety tie $x\geq A_0$ on the pacing path,
the guarantee therefore gives $P_i(x)\geq R_i(x,x,x)$.
Necessity does not use the hazard ceiling.

\emph{Step 2: racing attains the bound.}
Restart time at any state $S=(A_1,A_2,X)$ and hold the rival to
perpetual racing. For any feasible own path, let
$\delta_t:=A_i+t-A_{it}$ be its capability shortfall relative to
racing. Since speed is at most one, this shortfall is nonnegative
and nondecreasing. With $D=\max\{A_1,A_2\}-X$, the distance above
safety along the alternative path satisfies
\[
D+(1-g)s-\delta_t\leq D_s\leq D+(1-g)s
\qquad(0\leq s\leq t).
\]
These inequalities hold whether firm $i$ initially leads, trails,
or is tied. Writing $\Lambda_t^R$ and $\Lambda_t$ for cumulative
exposure under racing and the alternative, monotonicity of hazard
and a change of variables give
\[
\begin{aligned}
0\leq\Lambda_t^R-\Lambda_t
&\leq\int_0^t
\bigl[\lambda(D+(1-g)s)-\lambda(D+(1-g)s-\delta_t)\bigr]\,\de s\\
&=\frac{1}{1-g}\left[
\int_{D+(1-g)t-\delta_t}^{D+(1-g)t}\lambda(z)\,\de z
-\int_{D-\delta_t}^{D}\lambda(z)\,\de z\right] \leq\frac{\bar\lambda}{1-g}\delta_t.
\end{aligned}
\]
The last inequality drops the second integral and bounds the first
by its length times $\bar\lambda$. Racing raises log flow profit
by at least $\alpha\delta_t$, so
\[
\frac{e^{-\Lambda_t^R}\pi_i(A_i+t,A_j+t)}
{e^{-\Lambda_t}\pi_i(A_{it},A_j+t)}
\geq\exp\!\left[\left(\alpha-
\frac{\bar\lambda}{1-g}\right)\delta_t\right]\geq1.
\]
Discounting and integrating shows that no feasible own path beats
racing. A constant-racing rival retains speed one under joint
regularization, so the bound covers every history-dependent
deviation. Mutual racing is therefore an SPE whose unique path
attains the lower bound for both firms at every state.

\emph{Step 3: construct and verify pacing.}
Suppose the payoff condition in \cref{prop:pacing-iff} holds.
Prescribe
\[
\sigma_i(h_t)=
\begin{cases}
g,&A_{1s}=A_{2s}=X_s=A_0+gs\text{ for every }s\in[0,t],\\
1,&\text{otherwise}.
\end{cases}
\]
The rule is progressively measurable, and the set of histories
following the prescribed path is closed in the stopped-history
metric. Once a history departs, all sufficiently nearby histories
also record a departure. The nondeviating firm's speed is therefore
one under regularization after that history and every extension.
On the prescribed path, regularization retains $(g,g)$, so pacing
is admissible; it may also allow departures, which we now bound.

Fix a history $h_t$ on the prescribed path, any deviation
$\sigma_i'$, and any admissible continuation under
$(\sigma_i',\sigma_j)$. A continuation that never departs gives
the pacing payoff. Otherwise, let $T\geq t$ be the infimum of
departure dates. Continuity gives $A_{1T}=A_{2T}=X_T=A_0+gT$.
Every history strictly after $T$ records a departure, so the rival
races almost everywhere thereafter. Step 2 bounds the deviator's
payoff from $T$ by $\pi_i(X_T,X_T)R_i(X_T,X_T,X_T)$.
Before $T$, profits coincide with pacing and hazard is zero.
Its gain over pacing is thus at most
\[
e^{-r(T-t)}\pi_i(X_T,X_T)
\bigl[R_i(X_T,X_T,X_T)-P_i(X_T)\bigr]\leq0.
\]
The inequality uses the proposition's payoff condition and includes
$T=t$ and equality of the benchmark payoffs. Taking suprema rules
out every strategy deviation. The same bound with
$\sigma_i'=\sigma_i$ bounds the assigned payoff, while joint pacing
attains it for both firms on the same path. After every other
history, the profile prescribes perpetual racing, which Step 2
verifies as an equilibrium. Thus $\sigma$ is an SPE with permanent
joint pacing as an outcome.
\end{proof}

\subsection{Monitoring lags}
\label{sec:deployment-proof}
\begin{proof}[Proof of \cref{prop:deployment-lag}]

\emph{Step 1: normalize payoffs and establish racing.}
Subtract profits over $[0,L]$, which are fixed by the initial history,
and divide by $e^{-rL+(\alpha-\beta)A_0}$. A model developed at date
$s$ is deployed at $s+L$, so the normalized payoff on a path is
\[
\int_0^\infty
\exp\left\{-rs+\alpha(A_{is}-A_0)-\beta(A_{js}-A_0)
-\int_0^s\lambda(A_t-X_t-gL)\,\de t\right\}\,\de s.
\]
Joint pacing gives $P(0)=1/[r-g(\alpha-\beta)]$.

To sustain racing, prescribe speed $1$ at every private history.
These constant strategies have a unique admissible path. Under any
unilateral deviation, the rival still races. Since both firms start
at $A_0$ and speed cannot exceed one, the deviator cannot overtake
its rival. Its choices therefore leave hazard unchanged, while
racing maximizes its profits at every date. Both firms attain the
normalized payoff $R(0)$, which is finite because $r>\alpha-\beta$.
Thus racing is a Nash equilibrium.

\emph{Step 2: compare development paths while holding the rival's path fixed.}
Suppose $\bar\lambda<(1-g)\alpha$. For this comparison, start the clock at the
date when the firm makes its choice, after any feasible history.
Here $A_{i0}$ and $X_0$ denote its capability and safety at that
date. Keep the rival's development path unchanged.
Current choices cannot affect profits or hazard during the next $L$
units of time, so that interval is common to both alternatives.

Compare any development path for firm $i$ with racing, and write
$\delta_s:=A_{i0}+s-A_{is}$ for its capability shortfall at date $s$.
The shortfall is nonnegative and nondecreasing because speed is at
most one.
Let
$
Y_t:=A_{j,t}-X_t-gL
$
be the distance from safety, at its deployment date, of the rival's
date-$t$ model. Under racing, firm $i$'s corresponding distance is
$
D_t^R:=A_{i0}-X_0-gL+(1-g)t,
$
whereas under the comparison path it is
\(D_t^R-\delta_t\), with
\(0\leq\delta_t\leq\delta_s\) for \(t\leq s\). Thus the hazards at
calendar date \(L+t\) under racing and under the comparison path are,
respectively,
$\lambda_{L+t}^R
 =\lambda\bigl(\max\{D_t^R,Y_t\}\bigr)$ and $
\lambda_{L+t}
 =\lambda\bigl(\max\{D_t^R-\delta_t,Y_t\}\bigr).$

For any \(x,y\in\mathbb R\) and \(0\leq d\leq\bar d\), monotonicity of
\(\lambda\) gives
$\lambda(\max\{x,y\})-\lambda(\max\{x-d,y\})
\leq \lambda(x)-\lambda(x-\bar d).$ Applying this inequality with
\((x,d,\bar d)=(D_t^R,\delta_t,\delta_s)\) gives
$
\lambda_{L+t}^R-\lambda_{L+t}
\leq\lambda(D_t^R)-\lambda(D_t^R-\delta_s).$
The difference in cumulative exposure through
calendar date \(L+s\) satisfies
\[
\begin{aligned}
\int_L^{L+s}(\lambda_u^R-\lambda_u)\,\de u
&=\int_0^s(\lambda_{L+t}^R-\lambda_{L+t})\,\de t\ \leq\int_0^s
 [\lambda(D_t^R)-\lambda(D_t^R-\delta_s)]\,\de t\\
&=\frac1{1-g}\left[
 \int_{D_s^R-\delta_s}^{D_s^R}\lambda(x)\,\de x
-\int_{D_0^R-\delta_s}^{D_0^R}\lambda(x)\,\de x
\right] \leq\frac{\bar\lambda}{1-g}\delta_s.
\end{aligned}
\]
The equality uses \(D_t^R=D_0^R+(1-g)t\). The final inequality follows
because \(\lambda\geq0\): drop the second integral and bound the first
by its interval length \(\delta_s\) times \(\bar\lambda\).

Racing raises log profit at deployment date $L+s$ by $\alpha\delta_s$.
After accounting for survival, the ratio of its expected flow profit
to that along the slower path is therefore at least
\[
\exp\left\{\left(\alpha-\frac{\bar\lambda}{1-g}\right)\delta_s\right\}
\ge1.
\]
Integrating over time shows that racing gives at least as much
payoff against the same rival path. This comparison holds after
every feasible history, whatever the rival's unobserved development.
It therefore also holds in expectation under any belief supported
on rival histories consistent with the firm's observations and the
constraints on development. %\Cref{fig:deployment-hazard-bound} illustrates the hazard comparison.

\begin{comment}
\begin{figure}[htbp]
\centering
\includegraphics[width=\textwidth]{figures/cooperating_deployment_hazard_bound.pdf}
\caption{Bounding the hazard saved by slower development.
Panel (a) shades the hazard difference between the racing path and
that path shifted down by $\delta_s$. The change of variables bounds
this area by the shaded strip in panel (b), divided by $1-g$.
The enclosing rectangle has area $\bar\lambda\delta_s$.
The log-profit gain $\alpha\delta_s$ exceeds the resulting hazard
bound $\bar\lambda\delta_s/(1-g)$.}
\label{fig:deployment-hazard-bound}
\end{figure}
\end{comment}

\emph{Step 3: pacing requires $B(L)\le P(0)$.}
Suppose joint pacing is a Nash outcome under a profile $\sigma$.
Both firms' payoffs, defined as suprema over admissible paths, must
then equal the pacing payoff $P(0)$ after normalization.
Let firm $i$ deviate to speed $1$ at every history. We construct an
admissible path that keeps firm $j$ pacing until date $L$.

Before $L$, firm $j$ observes only the initial rival history.
Replacing firm $i$'s unobserved development by racing therefore
leaves firm $j$'s private history unchanged. The same replacement
works for nearby histories used in regularization: keep their
end dates below $L$, retain firm $j$'s history, and replace
firm $i$'s development after zero by racing. As the original histories approach joint pacing, the modified
histories approach the path where $i$ races and $j$ paces. Firm
$j$ prescribes the same speeds at the original and modified histories. Since speed $g$ was allowed for firm
$j$ along the original pacing path, it is still allowed here.
Firm $i$ now prescribes speed $1$ everywhere, so the pair $(1,g)$
is jointly admissible almost everywhere on $[0,L]$. Extend this path
beyond $L$.\footnote{\Cref{spe:lem:paths} applies with the initial
history on $[-L,0]$ held fixed, so this admissible path can be
continued beyond $L$. Because the deviator prescribes speed $1$
at every history, it continues racing on every such extension.}

The rival's speed bound gives
\[
A_{js}\le
\begin{cases}
A_0+gs,&0\le s\le L,\\
A_0+gL+s-L,&s>L.
\end{cases}
\]
The deviator remains at least as advanced as its rival and therefore
determines hazard. Its cumulative hazard in development time is
$\Lambda_s^R$. Substituting the rival's upper capability bound into
Step 1 gives a normalized deviation payoff of at least $B(L)$.
Since the deviation payoff is the supremum over admissible paths,
it is at least as large as the payoff on this path. Thus a Nash
equilibrium with pacing requires $B(L)\le P(0)$. This argument
does not use the hazard bound in part (b).

\emph{Step 4: construct pacing when $B(L)\le P(0)$.}
Maintain $\bar\lambda<(1-g)\alpha$. Prescribe speed $g$ while the
firm has always paced and all its observations of its rival are
consistent with pacing; otherwise prescribe speed $1$. Here pacing
means following $A_{is}=X_s+gL$. Write $\sigma$ for this profile.

Every speed allowed for a nondeviating firm lies in $[g,1]$.
Once the firm has departed from pacing or observed a rival
departure, every sufficiently nearby history also records a departure.
Its speed must then be $1$, even if its capability later returns to the
pacing path. If neither departure has occurred, the strategy
prescribes speed $g$. At positive dates, however, histories with
small own departures are arbitrarily close and prescribe speed $1$,
so regularization allows that speed too.
Thus joint pacing is admissible, but regularization can also allow
paths that leave pacing. We must bound those paths as well as
paths following a strategy deviation.

Fix any pure deviation $\sigma_i'$ and any admissible path under
$(\sigma_i',\sigma_j)$. If the path leaves joint pacing, let
$
t:=\inf\{s\ge0:A_{1s}\ne X_s+gL\ \text{or}\ A_{2s}\ne X_s+gL\}.
$
Both firms have capabilities $A_0+gt$ at $t$, by continuity.
At every date strictly after $t+L$, firm $j$ has either departed
itself or observed a rival departure, so it must race. Before then
its speed is at least $g$. Consequently,
\[
A_{j,t+s}\ge
\begin{cases}
A_0+gt+gs,&0\le s\le L,\\
A_0+gt+gL+s-L,&s>L.
\end{cases}
\]

Hold this rival path unchanged and replace firm $i$'s development
after $t$ by racing. Step 2 shows that this change can only raise
its payoff. The changed path need not follow $\sigma_i'$; its
payoff serves only as an upper bound. The racing firm stays at least as advanced as
its rival, so its own development determines hazard. Its profit is
no higher than against the lower bound on rival capability above.
The normalized continuation payoff is therefore at most $B(L)$.
Including the preceding pacing interval, the total normalized
payoff on the original path is at most
$[
\int_0^t e^{-[r-g(\alpha-\beta)]s}\,\de s
+e^{-[r-g(\alpha-\beta)]t}B(L)
\le P(0).$

If the path never leaves pacing, it gives $P(0)$ directly.

This bound also applies when $\sigma_i'=\sigma_i$. Thus no path
under the proposed profile, and no path under any unilateral
deviation, gives more than $P(0)$. Joint pacing attains $P(0)$ for
both firms on the same path. The profile is Nash and sustains
pacing, including when $B(L)=P(0)$.

\emph{Step 5: find the lag cutoff.}
Differentiating the displayed formula for $B(L)$ gives
$B'(L)=\beta(1-g)e^{\beta(1-g)L}\int_L^\infty
 e^{-(r-\alpha+\beta)s-\Lambda_s^R}\,\de s>0.
$
The integrals are finite for every finite lag because
$r>\alpha-\beta$. At zero lag,
$B(0)=R(0)=\frac1{r-\alpha+\beta+\lambda^{\mathrm{eff}}}.
$ In part (a), this exceeds $P(0)$, so Step 3 rules out pacing for
every $L>0$.

For the long-lag limit, write
\[
B(L)=\int_0^\infty e^{-(r-\alpha+\beta g)s-\Lambda_s^R}
 e^{-\beta(1-g)\max\{s-L,0\}}\,\de s.
\]
The last factor increases to one with $L$. Thus monotone convergence
gives the limit, possibly infinite,
$
B(\infty):=\lim_{L\to\infty}B(L)
=\int_0^\infty e^{-(r-\alpha+\beta g)s-\Lambda_s^R}\,\de s.
$
Under the hazard bound in part (b), $\Lambda_s^R\le\bar\lambda s$ gives
$
B(\infty)\ge
\int_0^\infty e^{-(r-\alpha+\beta g+\bar\lambda)s}\,\de s>P(0).
$
If $r-\alpha+\beta g+\bar\lambda>0$, the integral is its reciprocal
and exceeds $P(0)$ because
$\bar\lambda<(1-g)\alpha$; otherwise it is infinite.
When $\lambda^{\mathrm{eff}}>(1-g)(\alpha-\beta)$, we have
$B(0)<P(0)$, so continuity and strict monotonicity give the unique
finite root $L^*>0$. At equality, $B(0)=P(0)$ and $L^*=0$.
Steps 3--4 then give part (b).

\emph{Step 6: sustain pacing at high risk.}
Suppose $\lambda^{\mathrm{eff}}\ge(1-g)\alpha$. We first show that
$B(\infty)\le P(0)$. Since $\Lambda_s^R\le\bar\lambda s$, the
definition of $R(0)$ gives $\lambda^{\mathrm{eff}}\le\bar\lambda$.
As hazard converges to $\bar\lambda$, the integrand defining
$B(\infty)$ eventually decays at a positive exponential rate:
$
r-\alpha+\beta g+\bar\lambda
\ge r-g(\alpha-\beta)>0.$ Hence $B(\infty)$ is finite. The definition of effective hazard gives
\[
\int_0^\infty e^{-(r-\alpha+\beta)s}
 [e^{-\Lambda_s^R}-e^{-(1-g)\alpha s}]\,\de s
=R(0)-\frac1{r-\alpha+\beta+(1-g)\alpha}\le0.
\]
Since hazard never falls along a racing path, average hazard
$\Lambda_s^R/s$ is nondecreasing for $s>0$. The difference in
brackets can therefore change sign only from positive to negative. Multiplying by $e^{\beta(1-g)s}$ puts more weight on
the negative part and preserves a nonpositive integral: at a
sign-change date $t$, the multiplier is at most
$e^{\beta(1-g)t}$ before $t$ and at least that large afterward.
If the difference is nonpositive everywhere, the conclusion follows
without splitting the integral. If it is nonnegative everywhere,
the displayed inequality forces it to vanish almost everywhere. Therefore
\[
B(\infty)\le\frac1{r-\alpha+\beta g+(1-g)\alpha}=P(0).
\]

Now prescribe speed $g$ at every private history. These constant
strategies have the unique path $A_{is}=A_0+gs$. Against a rival
that follows this path, any development path for firm $i$ gives
normalized payoff
\[
\int_0^\infty\exp\left\{
 -(r+\beta g)s+\alpha(A_{is}-A_0)
 -\int_0^s\lambda(A_{it}-X_t-gL)\,\de t\right\}\,\de s.
\]
The rival's deployed capability lies on safety, so hazard is zero
whenever firm $i$ is at most as advanced as its rival. The firm
therefore faces the auxiliary problem
of \cref{lem:scalar-control}, with growth coefficient $\eta=\alpha$,
discount rate $\rho=r+\beta g$, and distance from safety
$A_{is}-X_s-gL$. Its racing value from safety is
$R_\eta(0)=B(\infty)\le P_\eta(0)=P(0)$.
Here \(\rho-g\eta=r-g(\alpha-\beta)>0\) and \(\rho-\eta+\bar\lambda>0\). Thus \(\cref{lem:scalar-control-extended}\) bounds every deviation payoff by \(P(0)\). Joint pacing attains this bound for both firms, proving part (c).
\end{proof}

The following extension allows $\rho\leq\eta$ and supplies
the payoff bound used in Step 6 of the deployment-lag proof.

\begin{lemma}[Scalar-control bound with possibly negative net discount]
\label{lem:scalar-control-extended}
Consider the auxiliary scalar-control problem of
\cref{lem:scalar-control}, and suppose that the hazard is bounded, with
$
\bar\lambda:=\sup_{D>0}\lambda(D)<\infty.
$\$
Let \(v:=1-g\), and suppose
$\rho-g\eta>0$ and $\rho-\eta+\bar\lambda>0.$
If
$
R_\eta(0)\leq P_\eta(0)=\dfrac1{\rho-g\eta},$
then every admissible path starting from \(D=0\) has payoff at most
\(P_\eta(0)\). Consequently, pacing on safety is optimal.
\end{lemma}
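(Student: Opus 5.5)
The plan is to repeat the verification argument of \cref{lem:scalar-control}. The point is that the hypothesis $\rho>\eta$ was used there only to make the values finite and to kill the discounted terminal term, and both uses can be replaced. Start with a rewrite. Since $D_t=A_t-X_t$ and $\dot X=g$, starting from $D_0=0$ the payoff of any measurable control is
\[
\int_0^\infty \exp\Bigl\{-(\rho-g\eta)t+\eta D_t-\int_0^t\lambda(D_s)\,\de s\Bigr\}\,\de t .
\]
This makes clear that the relevant discount rate is $\rho-g\eta>0$. For the candidate value I take
\[
W(D)=\max\{P_\eta(D),R_\eta(D)\}\quad\text{for } D\geq0,
\]
and for $D<0$ the race-to-safety-then-pace value $\frac{1-z}{\rho-\eta}+zP_\eta(0)$ with $z=e^{(\rho-\eta)D/v}$ (read as its continuous limit $-D/v+P_\eta(0)$ when $\rho=\eta$). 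Because $R_\eta(0)\leq P_\eta(0)$, we have $W(0)=P_\eta(0)$.

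First I check that every ingredient is finite and behaves as before. Since $\lambda$ is concave, nondecreasing and bounded, $\lambda(D)\uparrow\bar\lambda$. With $\rho-\eta+\bar\lambda>0$, the integrand of $R_\eta(D)$ therefore eventually decays exponentially, so $R_\eta$ is finite for every $D$. The same bound shows $R_\eta(D)$ stays bounded (roughly by $1/(\rho-\eta+\lambda(D))$ for large $D$). This bound replaces the old bound $1/(\rho-\eta)$ in the crossing argument.

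Next I rerun Steps 1--3 of \cref{lem:scalar-control} with these replacements. The ODEs $vR_\eta'=(\rho-\eta+\lambda)R_\eta-1$ and $gP_\eta'=1-(\rho+\lambda)P_\eta$ are unchanged. Log-convexity of $R_\eta$ still follows from H\"older. The equation for $Q_\eta=P_\eta'+\eta P_\eta$ and its single sign change from negative to positive do not use $\rho>\eta$. The integrating factor $M_\eta(D)=\exp\{[(\rho-\eta)D+\int_0^D\lambda]/v\}$ still diverges, because $\rho-\eta+\lambda(D)\to\rho-\eta+\bar\lambda>0$. Hence $(P_\eta-R_\eta)/M_\eta\to0$, and we again get a unique crossing $\bar D$ (or $\bar D=0$ at equality).

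Below safety, the race coefficient is $\eta(1-z)/(\rho-\eta)+\frac{z}{v}[(\rho-g\eta)P_\eta(0)-1]$. Its first term is nonnegative for either sign of $\rho-\eta$, and its second term vanishes because $P_\eta(0)=1/(\rho-g\eta)$. On safety the pacing residual is $1-(\rho-g\eta)P_\eta(0)=0$. This gives the payoff-residual inequality
\[
1+(a-g)W'+(\eta a-\rho-\lambda)W\leq0
\]
for all $a\in[0,1]$ off the kink set, and the kinks at $0$ and $\bar D$ are handled exactly as in Step 4 of \cref{lem:scalar-control}.

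Finally comes the integration step, where the terminal term would otherwise be a problem. Along any admissible path let
\[
E_t=\exp\Bigl\{-\rho t+\eta\int_0^t a-\int_0^t\lambda(D_s)\,\de s\Bigr\}.
\]
The residual inequality gives $\frac{\de}{\de t}[E_tW(D_t)]\leq-E_t$ almost everywhere, so
\[
\int_0^TE_t\,\de t\leq W(0)-E_TW(D_T)\leq W(0)=P_\eta(0),
\]
using only $W\geq0$. Letting $T\to\infty$ by monotone convergence bounds every path's payoff by $P_\eta(0)$, without needing $\rho>\eta$ to make $E_TW(D_T)$ vanish. Pacing on safety attains $P_\eta(0)$, so it is optimal.

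I expect the main obstacle to be the crossing argument in Step 3. There I must make sure that the boundedness of $R_\eta$ and $P_\eta$ relative to the diverging $M_\eta$ holds without the uniform bound $1/(\rho-\eta)$. I would first try the explicit eventual bound coming from $\lambda(D)\to\bar\lambda$ and $\rho-\eta+\bar\lambda>0$.
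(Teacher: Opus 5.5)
Your proposal is correct and follows essentially the same route as the paper: the candidate $\max\{P_\eta,R_\eta\}$ above safety and race-then-pace below, the single-crossing argument rerun with $M_\eta\to\infty$ coming from $\rho-\eta+\bar\lambda>0$, and the terminal term dropped using only nonnegativity of the candidate. The paper handles the obstacle you flag more simply, by monotonicity: $R_\eta(D)\le R_\eta(0)\le P_\eta(0)$ and $P_\eta(D)\le P_\eta(0)$ for $D\ge0$, though your eventual bound $1/(\rho-\eta+\lambda(D))$ also works; the paper also treats the equality case $R_\eta(0)=P_\eta(0)$ separately with candidate $R_\eta$, whereas your unified candidate covers it once you note that the crossing argument forces $P_\eta\le R_\eta$ on $D>0$ there.
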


\begin{proof}
The assumptions ensure that both benchmark values are finite. The
pacing value is finite because \(\rho-g\eta>0\). Moreover,
\(\lambda(D)\uparrow\bar\lambda\) as \(D\to\infty\), so the racing
integrand eventually decays at rate arbitrarily close to
\(\rho-\eta+\bar\lambda>0\).

We construct a nonnegative continuation-value bound. First consider
\(D\geq0\). The racing value
\[
R_\eta(D)=\int_0^\infty
 \exp\left\{-(\rho-\eta)t
 -\int_0^t\lambda(D+vs)\,\de s\right\}\de t
\]
is nonincreasing in \(D\), and hence
\[
R_\eta(D)\leq R_\eta(0)\leq P_\eta(0).
\]
Waiting until safety and then pacing also gives no more than pacing
immediately. Indeed, relative to immediate pacing, waiting weakly
reduces development at every date and weakly increases exposure.
Thus
$
P_\eta(D)\leq P_\eta(0)$ for $D\geq0.$

Suppose first that \(R_\eta(0)=P_\eta(0)\). Set
\(F(D)=R_\eta(D)\) for \(D\geq0\). For \(D>0\), the racing equation is
$
vR_\eta'(D)
=(\rho-\eta+\lambda(D))R_\eta(D)-1.
$
The coefficient on speed in the payoff residual is
\(R_\eta'+\eta R_\eta\). Its right limit at safety satisfies
\[
v\bigl[R_\eta'(0+)+\eta R_\eta(0)\bigr]
  =(\rho-g\eta+\lambda_0)R_\eta(0)-1
  =\lambda_0P_\eta(0)\geq0.
\]
As in the proof of \cref{lem:scalar-control}, concavity of
\(\lambda\) implies that \(\log R_\eta\) is convex. Therefore
\(R_\eta'/R_\eta\) is nondecreasing, so
\(R_\eta'+\eta R_\eta\geq0\) for every \(D>0\). Since racing makes
the payoff residual zero, the residual is nonpositive for every
speed \(a\in[0,1]\).

Now suppose that \(R_\eta(0)<P_\eta(0)\). For \(D>0\), define
$
F(D):=\max\{P_\eta(D),R_\eta(D)\}.
$
We verify that these two values have the same single-crossing
structure as in \cref{lem:scalar-control}. Write
$
Q_\eta(D):=P_\eta'(D)+\eta P_\eta(D).
$ Differentiating the waiting equation gives
$gQ_\eta'
=-(\rho+\lambda)Q_\eta+\eta-\lambda'P_\eta$ and $
Q_\eta(0+)=-\frac{\lambda_0}{g}P_\eta(0)\leq0.$ 
Because \(\lambda\) is concave and nondecreasing, \(\lambda'\) is
nonnegative and nonincreasing. Since \(P_\eta\) is nonincreasing,
\(\eta-\lambda'P_\eta\) is nondecreasing and converges to \(\eta\).
The integrating-factor argument from
\cref{lem:scalar-control} therefore shows that \(Q_\eta\) can cross
zero only once, from negative to positive, and is eventually
positive.

Define
\[
M_\eta(D):=
\exp\left\{
\frac{(\rho-\eta)D+\int_0^D\lambda(x)\,\de x}{v}
\right\}.
\]
Since \(\lambda(D)\uparrow\bar\lambda\),
$
\lim_{D\to\infty}\dfrac{\log M_\eta(D)}D
=\dfrac{\rho-\eta+\bar\lambda}{v}>0,
$
and hence \(M_\eta(D)\to\infty\). Both \(P_\eta(D)\) and
\(R_\eta(D)\) are bounded above by \(P_\eta(0)\), so
$
\dfrac{P_\eta(D)-R_\eta(D)}{M_\eta(D)}
\longrightarrow0.$ 
Subtracting the racing and waiting equations gives
$
\left(\dfrac{P_\eta-R_\eta}{M_\eta}\right)'
=\dfrac{Q_\eta}{vM_\eta}.$ The ratio starts positive, its derivative can change sign only from
negative to positive, and it converges to zero. It therefore crosses
zero exactly once, from positive to negative. The verification
argument in \cref{lem:scalar-control} then gives
$
1+(a-g)F'(D)
  +(\eta a-\rho-\lambda(D))F(D)\leq0$ at every differentiability point \(D>0\) and every \(a\in[0,1]\).
The same inequality holds almost everywhere along paths that spend
time at the crossing.

For \(D<0\), let \(F(D)\) be the value of racing to safety and then
pacing. The coefficient on speed in its payoff residual is
$
\dfrac{\eta\bigl[1-e^{(\rho-\eta)D/v}\bigr]}{\rho-\eta}\geq0,
$ with the continuous interpretation \(-\eta D/v\) when
\(\rho=\eta\). Hence speed one maximizes the residual below safety
and makes it zero. Set \(F(0)=P_\eta(0)\); pacing makes the residual
zero at safety. The candidate \(F\) is continuous and locally
Lipschitz, and the residual inequality holds almost everywhere along
every admissible path.

Let \(D_t\) be any admissible path starting at \(D_0=0\), and let
\[
Z_t:=\exp\left\{
-\rho t+\eta\int_0^t a_s\,\de s
-\int_0^t\lambda(D_s)\,\de s
\right\}.
\]
Multiplying the residual inequality by \(Z_t\) and integrating through
date \(T\),  $\int_0^T Z_t\,\de t+Z_TF(D_T)\leq F(0)=P_\eta(0).
$ Because \(F\geq0\), dropping the terminal term and letting
\(T\to\infty\) yields
$
\int_0^\infty Z_t\,\de t\leq P_\eta(0).
$
Pacing on safety attains \(P_\eta(0)\), so it is optimal.
\end{proof}

\end{document}

%% file: figures/cooperating_v3_proof_map.tex
% Proof dependencies for the equilibrium construction.
\begin{tikzpicture}[x=1cm,y=1cm,>=Latex,
  block/.style={draw,rounded corners=2pt,align=center,font=\footnotesize,
                inner sep=5pt,text width=4.8cm},
  dep/.style={->,semithick}]
  \node[block] (tied) at (0,0)
    {\cref{lem:scalar-control}\\Tied value and\\lower cutoff};
  \node[block] (race) at (6.5,0)
    {\cref{lem:racing-cutoffs}\\Upper cutoff and\\racing equilibrium};
  \node[block] (bound) at (0,-2.1)
    {\cref{lem:tied-bound}\\Continuation bound for a firm\\tied or behind};
  \node[block] (stopping) at (6.5,-2.1)
    {Steps 1--3: stopping rule\\Stopping region shrinks with the gap;\\waiting preserves it};
  \node[block] (leader) at (0,-4.4)
    {Step 4: leader incentives\\Verify all controls, including\\switches and boundary paths};
  \node[block] (follower) at (6.5,-4.4)
    {Step 5: follower incentives\\Compare every deviation\\with the prescribed path};
  \node[block,text width=10.5cm] (result) at (3.25,-6.8)
    {\cref{lem:pacing-verification,thm:main}\\One path gives both payoffs and no deviation gains;\\the two cutoffs give the three equilibrium regimes};
  \draw[dep] (tied.east)--(race.west);
  \draw[dep] (tied.south)--(bound.north);
  \draw[dep] (tied.south east)--(stopping.north west);
  \draw[dep] (race.south)--(stopping.north);
  \draw[dep] (bound.south)--(leader.north);
  \draw[dep] (bound.south east)--(follower.north west);
  \draw[dep,preaction={draw=white,line width=3pt}]
    (stopping.south west)--(leader.north east);
  \draw[dep] (stopping.south)--(follower.north);
  \draw[dep] (leader.east)--(follower.west);
  \draw[dep] (leader.south)--(result.north west);
  \draw[dep] (follower.south)--(result.north east);
  \draw[dep] (race.east)--(9.45,0)--(9.45,-6.8)--(result.east);
\end{tikzpicture}